\documentclass[11pt]{article}
\usepackage[T1]{fontenc}
\usepackage{lmodern}
\usepackage[margin=1in]{geometry}
\usepackage{xcolor}
\usepackage{amsmath,amssymb,amsthm,mathtools}
\usepackage{microtype,booktabs,array}
\usepackage[colorlinks=true,linkcolor=blue!55!black,citecolor=blue!55!black,urlcolor=blue!55!black]{hyperref}
\hypersetup{pdftitle={Communication Complexity of Integer Matrix Singularity},
pdfauthor={Chengu Wang}}
\newtheorem{theorem}{Theorem}[section]
\newtheorem{proposition}[theorem]{Proposition}
\newtheorem{lemma}[theorem]{Lemma}
\newtheorem{corollary}[theorem]{Corollary}
\theoremstyle{definition}

\numberwithin{equation}{section}
\newcommand{\Q}{\mathbb Q}
\newcommand{\Z}{\mathbb Z}
\newcommand{\T}{\mathbb T}
\newcommand{\SINGplus}{\mathrm{SING}^{+}}
\newcommand{\SINGxor}{\mathrm{SING}^{\oplus}}
\newcommand{\SINGpart}{\mathrm{SING}^{\mathrm{part}}}
\newcommand{\bitand}{\mathbin{\&}}
\newcommand{\rank}{\operatorname{rank}}
\newcommand{\diag}{\operatorname{diag}}
\newcommand{\Unif}{\operatorname{Unif}}
\newcommand{\supp}{\operatorname{supp}}
\newcommand{\entry}{\mathrm{entry},1}
\newcommand{\corank}{\operatorname{corank}}

\title{Communication Complexity of Integer Matrix Singularity}
\author{Chengu Wang\\\small\href{mailto:wangchengu@gmail.com}{\texttt{wangchengu@gmail.com}}}
\date{}

\begin{document}
\hypersetup{pageanchor=false}
\pagenumbering{gobble}
\maketitle
\begin{abstract}
  Alice and Bob hold complementary halves of the columns of an
  $n\times n$ matrix of $k$-bit integers and decide whether
  it is singular over $\mathbb Q$. Up to constant-factor dimension growth,
  this partitioned problem is equivalent
  to additive singularity, where Alice and Bob hold $n\times n$ matrices $A,B$ of
  $k$-bit integers and test the singularity of $A+B$ over $\mathbb Q$.
  We prove a bounded-error quantum lower
  bound $\Omega(\min\{n^{3/2}\sqrt{k\log n},n^2\log(nk)\})$ and a
  randomized upper bound $O(n^2\min\{k,\log(k+\log n)\})$.
  The bounds match at $\Theta(n^2\log(nk))$ for $k\ge n\log n$.
  The lower bound holds even under the promise of rank $n-1$ versus $n$.
  Its proof combines an integer witness with a local digit encoding, while the
  upper-bound protocol lifts modular kernel relations using short messages.

  Bitwise sharing changes how the communication complexity depends on the entry bit length $k$.
  If Alice and Bob instead hold $n\times n$ matrices $A,B$ of $k$-bit
  integers and test whether their entrywise bitwise XOR $A\oplus B$
  is singular over $\mathbb Q$,
  we prove a quantum lower bound $\Omega(n^2k)$, matching the trivial upper bound.
  This lower bound holds even under the promise of rank $n-1$ versus $n$.
  Our proof uses a dense weighted
  equation and a signed Fourier witness whose exceptional dyadic
  frequencies can be removed at small variation cost.
\end{abstract}

\clearpage
\tableofcontents
\clearpage
\pagenumbering{arabic}
\hypersetup{pageanchor=true}

\section{Introduction}\label{sec:introduction}

Communication complexity, introduced by Yao~\cite{Yao79}, studies how much
two parties must communicate to compute a function of their separate
inputs~\cite{KN97}. Alice receives $x$ and
Bob receives $y$, and they exchange messages to compute $f(x,y)$.
Each message may depend on the sender's input and preceding messages;
local computation is unrestricted. Deterministic protocols must always
be correct. Randomized protocols may use shared random bits, while quantum
protocols exchange quantum states and perform local quantum operations.
Communication cost is the worst-case number of bits or qubits
transmitted. Bounded-error protocols must be correct with probability
at least $2/3$ on every promised input. We allow two-way communication
and, for quantum protocols, arbitrary entanglement shared before the
inputs arrive.
These models quantify how randomness and quantum resources reduce
communication.

Singularity asks whether a square matrix has a nonzero vector in its
kernel. A direct formulation
splits a matrix's columns between Alice and Bob. For even $n$,
they hold matrices $X,Y$ of size $n\times(n/2)$ with $k$-bit nonnegative integer
entries and decide whether $[\,X\ \ Y\,]$ is singular over $\mathbb Q$.
We denote this
balanced partition problem by $\SINGpart_{n,k}$. A balanced row
partition is equivalent by transposition.

We use an equivalent additive formulation: Alice and Bob hold $n\times n$
matrices $A,B$ of $k$-bit nonnegative integers and decide whether $A+B$ is singular.
We denote this problem by $\SINGplus_{n,k}$. Lemma~\ref{lem:addition-column-partition}
gives local reductions in both directions, preserving the entry
alphabet and corank with constant-factor dimension growth.
Thus the additive bounds below also hold for balanced
partitions, up to a constant factor. We work in the additive model.

We then ask how communication changes when these $k$-bit words
are combined bitwise. In the bitwise-XOR problem
$\SINGxor_{n,k}$, Alice and Bob hold $n\times n$ matrices $A,B$
and decide whether $A\oplus B$ is singular over $\mathbb Q$.
This changes the representation of the distributed matrix while
retaining the rational-singularity predicate.

We call these problems \emph{integer matrix singularity}, following the
input terminology of Chu and Schnitger~\cite{chu-schnitger}. The field
matters: an integer matrix is singular over
$\mathbb Q$ exactly when its integer determinant is zero, equivalently
when it has a nonzero integer kernel vector. The integer and
rational formulations therefore describe the same problem. They
do not describe singularity over a finite field: reduction modulo a
prime may turn a nonzero integer determinant into zero.

The bit length $k$ matters even though the output is
one bit. Sending one input uses $n^2k$ bits, and deterministic protocols
cannot improve this order. Randomization changes the picture for
addition: a small modular image can certify a nonzero determinant,
and the dependence on $k$ becomes logarithmic. Bitwise XOR has a
different behavior. Its bounded-error complexity remains proportional
to the complete input length, even with entangled quantum communication.
Our results quantify this contrast and identify
the remaining gap for additive inputs of small bit length.

\subsection{Previous results}

Chu and Schnitger's bounds~\cite{chu-schnitger}
imply $D(\SINGplus_{n,k})=D(\SINGxor_{n,k})=\Theta(n^2k)$,
even under the promise of rank $n-1$ versus $n$
(Lemma~\ref{lem:deterministic-singularity}).
Here $D$ denotes deterministic communication complexity.

The same paper records a randomized upper bound
$O(n^2\max\{\log n,\log k\})$ for partitioned integer singularity,
crediting Leighton (personal communication, 1987)~\cite[p.~396]{chu-schnitger}.
Its random-prime idea extends to additive inputs. Alice sends $A$
modulo a public prime, and Bob tests $A+B$ over the corresponding
field. Hadamard's inequality bounds the magnitude of a nonzero
determinant and hence the number of primes that can divide it.
Together with the trivial upper bound $O(n^2k)$, this gives the baseline
$O(n^2\min\{k,\log(nk)\})$.

Streaming linear algebra supplies a different lower bound.
Clarkson and Woodruff~\cite{clarkson-woodruff}
study rank and other linear-algebraic tasks on streamed matrices.
Their Theorem~5.2 and the following invertibility
consequence give an $\Omega(n^2)$ lower bound for one-pass streaming
and the corresponding one-way communication task. This one-way bound
does not imply an unrestricted interactive lower bound.

Finite-field singularity also has strong communication lower bounds.
Sun and Wang~\cite[Theorem~3]{sun-wang} prove randomized and quantum
lower bounds of order $n^2\log p$ for additive singularity over
$\mathbb F_p$. Li, Sun, Wang, and Woodruff~\cite[Theorems~1--2]{li-sun-wang-woodruff}
obtain tight bounds for linear-algebraic problems in the message-passing
model, including the two-player distinction between rank $n-1$ and
rank $n$. Sherstov and Storozhenko~\cite[Theorem~I.1]{sherstov-storozhenko} develop
lower bounds for approximating finite-field matrix rank, including
strong promises and small success advantages. These works demonstrate
that spectral and Fourier techniques can control quantum protocols
for distributed rank. These bounds depend on the specified field and
predicate.

For rational-XOR singularity, earlier communication lower-bound tools
already imply two baselines. Packing the input bits into words
and using a determinant encoding transfers Razborov's disjointness
lower bound~\cite{razborov} to a randomized $\Omega(n^2k)$ lower bound.
For Boolean inputs, Zhang and Shi~\cite[Theorem~2]{zhang-shi-xor} give
an $\Omega(N)$ entangled quantum lower bound for testing an exact
Hamming distance in the interior of the range. A local Boolean matrix
encoding of order $O(\sqrt N)$ turns this into an $\Omega(n^2)$ lower
bound for rational-XOR singularity. The determinant representation of
a sum used here also appears in Ivanyos et al.\ \cite[Claim~2 in the
proof of Theorem~4]{ivanyos-et-al}. Sections~\ref{sec:randomized-xor-lower-bound}
and~\ref{sec:boolean-xor-lower-bound} spell out the reductions. We treat these
classical and Boolean bounds as elementary consequences of the cited
theorems, rather than as new lower-bound contributions.

Vempala, Wang, and Woodruff~\cite{vempala-wang-woodruff} study
distributed linear systems and optimization with bounded integer
coefficients. Their random-prime feasibility protocol (Section~4.1)
uses the same modular idea as our baseline upper bound. Their lower
bounds cover both feasibility testing and solution output; comparisons
with singularity must account for the predicate, input distribution,
and error model.

Table~\ref{tab:prior-work} summarizes the closest comparisons.

\begin{table}[ht]
  \centering
  \small
  \begin{tabular}{@{}>{\raggedright\arraybackslash}p{.2\linewidth}>{\raggedright\arraybackslash}p{.25\linewidth}>{\raggedright\arraybackslash}p{.45\linewidth}@{}}
    \toprule
    Work & Setting & Relevant conclusion\\
    \midrule
    Chu--Schnitger~\cite{chu-schnitger} & Partitioned integer matrix &
    Deterministic $\Theta(n^2k)$ for multi-bit inputs; random-prime
    upper bound $O(n^2\log(nk))$.\\
    Clarkson--Woodruff~\cite{clarkson-woodruff} & One pass / one way &
    Quadratic lower bound for rank decision and invertibility.\\
    Sun--Wang; Li et al.\ \cite{sun-wang,li-sun-wang-woodruff} &
    Finite-field addition & Randomized and quantum rank lower bounds;
    sharp adjacent-rank distinctions.\\
    Sherstov--Storozhenko~\cite{sherstov-storozhenko} & Finite-field rank
    approximation & Strong quantum lower bounds under rank promises.\\
    Razborov~\cite{razborov} & Disjointness & Via word packing and a
    matrix encoding, randomized $\Omega(n^2k)$ for rational XOR.\\
    Zhang--Shi~\cite{zhang-shi-xor} & Symmetric XOR functions & Via an
    exact-distance encoding, entangled quantum $\Omega(n^2)$ for Boolean
    rational XOR.\\
    Vempala--Wang--Woodruff~\cite{vempala-wang-woodruff} & Distributed
    linear systems and optimization & Modular feasibility tests;
    feasibility and solution lower bounds for distributed constraints.\\
    \bottomrule
  \end{tabular}
  \caption{Results related to rational singularity.}
  \label{tab:prior-work}
\end{table}

\subsection{Our results}

Let $R^{\rm pub}_{1/3}$
denote public-coin randomized communication in bits and $Q^*_{1/3}$
quantum communication in qubits with arbitrary input-independent
prior entanglement, with
worst-case error at most $1/3$.

We first state our lower and upper bounds for additive singularity.

\begin{samepage}
  \begin{theorem}[Additive lower bound]\label{thm:additive-lower-bound}
    For every $k\ge1$ and sufficiently large $n$,
    \[
      Q^*_{1/3}(\SINGplus_{n,k})\ge
      \Omega\!\left(\min\{n^{3/2}\sqrt{k\log n},n^2\log(nk)\}\right).
    \]
    The bound holds even when the sum is promised to have rank $n-1$ or $n$.
  \end{theorem}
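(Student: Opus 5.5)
The plan is to reduce from a problem with a known strong quantum lower bound, namely the finite-field additive singularity bound of Sun and Wang~\cite[Theorem~3]{sun-wang}, which gives $\Omega(n'^2\log p)$ for $\mathbb F_p$ with the rank $n'-1$ versus $n'$ promise. Integer singularity is harder only through the bit length, so the challenge is to simulate arithmetic modulo $p$ over $\Q$ with $k$-bit entries. I would do this with a local digit encoding. Each $\mathbb F_p$ entry is written in base $2^{k}$ using $d=\lceil \log p/k\rceil$ digits, and this gives a block matrix. The blocks have carry-propagation gadgets, with entries $-2^k$ and $1$ on a bidiagonal, in the style of the determinant representations of Ivanyos et al. Then the Schur complement of the gadget block recombines the digits into the integer value. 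The dimension grows to $n' d$.

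The second ingredient is an \emph{integer witness} that turns singularity mod $p$ into singularity over $\Q$. I would add one extra row and column carrying a shared multiple of $p$. This makes the integer matrix singular over $\Q$ exactly when some integer combination of the original columns vanishes modulo $p$ with a fixed normalization. To get the clean equivalence I would instead use the standard trick of lifting: choose a hard distribution in which, under the rank $n'$ case mod $p$, the integer determinant is nonzero, while in the rank $n'-1$ case an explicit integer kernel vector exists. Concretely, I would augment the matrix by $p\cdot I$ in a block so that a kernel vector mod $p$ lifts to an integer kernel vector of the augmented matrix. I would then check that the corank is preserved, so that the rank $n-1$ versus $n$ promise survives. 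The entries of $pI$ have about $\log p$ bits and are also digit-encoded.

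The parameter choice comes next. With final dimension $n \approx n' d$ and $d\approx \log p/k$, the lower bound becomes $n'^2\log p = n^2 k^2/\log p$. The encoding requires $\log p$ large relative to $k$, and the witness bookkeeping (Hadamard-type size constraints on the lifted kernel vector) forces $\log p \lesssim \sqrt{n k\log n}$ or $\log p\lesssim \log(nk)$ in the other regime. Optimizing $p$ gives $n^{3/2}\sqrt{k\log n}$ when $k$ is small. When $k\gtrsim n\log n$ we take $d=1$ and $\log p\approx\log(nk)$, which gives $n^2\log(nk)$. The minimum of the two is the stated bound. The final step is to invoke Lemma~\ref{lem:addition-column-partition}-style local padding to reach exact dimension $n$ for all large $n$.

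I expect the main obstacle to be the integer witness. A nonzero determinant mod $p$ trivially implies nonsingularity over $\Q$, but the converse direction needs that every kernel vector mod $p$ yields a rational kernel vector of the encoded matrix, with corank exactly one. The carry gadgets must not introduce spurious kernel vectors from digit overflow. The real balancing act is keeping the witness entries within $k$ bits while $\log p$ stays large, and this is where the $\sqrt{k\log n}$ loss enters. I would first try to place the multiple of $p$ in a single auxiliary column split additively between Alice and Bob. I would then verify the corank through an explicit Schur complement computation.
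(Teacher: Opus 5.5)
Your plan has a genuine gap at the step you flag yourself, and it cannot be fixed by more careful bookkeeping. There is no local, corank-preserving encoding of $\mathbb F_p$-singularity as $\Q$-singularity of the kind you describe. Adjoining $pI$ cannot force a kernel vector to be integral, because $pI$ is invertible over $\Q$. For example, $[\,M\ \ pI\,]$ has rational kernel $\{(x,-Mx/p)\}$ for every $M$. In any square completion, the $pI$ block is eliminated by a rational Schur complement in which ``$x$ is integral'' plays no role. Rational singularity implies singularity modulo $p$, but not conversely, and integer lifts of Sun--Wang's corank-one instances are typically nonsingular over $\Q$.

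A sanity check shows that no variant of this plan can work. Suppose there were a reduction from order-$n'$ singularity over $\mathbb F_p$ to $\SINGplus$ of order $O(n'\lceil\log p/k\rceil)$. Taking $\log p\approx k$ would give $\Omega(n^2k)$, which contradicts the upper bound $O(n^2\log(k+\log n))$ of Theorem~\ref{thm:additive-upper-bound} once $k\gg\log n$. Your parameter step shows the same symptom. The expression $n^2k^2/\log p$ decreases in $\log p$, so nothing in your setup pushes $\log p$ up. The ``Hadamard-type constraints'' on $\log p$ are not derived from anything. At $\log p\approx\sqrt{nk\log n}$ the expression equals $n^{3/2}k^{3/2}/\sqrt{\log n}$, which is off from the target by a factor $k/\log n$. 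Separately, carry gadgets with entries $-2^k$ leave the nonnegative $k$-bit alphabet.

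The paper never passes through a finite field. Its ``integer witness'' is a signed weighting $W$ for the Linial--Shraibman pairing of Lemma~\ref{lem:quantum-witness-pairing}, not a kernel vector. The weighting is built for rational singularity of the core $C$ in~\eqref{eq:additive-core}, whose rank is governed by a $d\times d$ residual $E$. The construction has three layers:
\begin{itemize}
\item Differences of rational orthogonal isometries give a residual whose sign correlates with singularity, via Dickson rank parity.
\item Finite-module shear designs, a scalar moment filter, and dilation by $\operatorname{lcm}(1,\ldots,u)$ make its Fourier transform vanish to high order at every rational frequency with small column image group.
\item A frequency dichotomy for the $m$ independent rows of $F$ gives $\|T_W\|\le e^{1/2}(2/B)^{m/2}$, hence a lower bound of $\Omega(md\log(u+1))$ qubits.
\end{itemize}
Only afterwards is a radix encoding applied, and only to the bottom $d$ rows and rightmost $d$ columns. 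The $\sqrt{k\log n}$ term comes from balancing the digit cost $d^2\log n/k\lesssim n$ against $md\log d$ with $u=d/512$. In the long-entry regime the paper takes $d=\Theta(n)$ and $u\approx k/\log k$, so a single digit suffices.
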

\end{samepage}

\begin{theorem}[Additive upper bound]\label{thm:additive-upper-bound}
  For every $k\ge1$ and sufficiently large $n$,
  \[
    R^{\rm pub}_{1/3}(\SINGplus_{n,k})
    \le O\!\left(n^2\min\{k,\log(k+\log n)\}\right).
  \]
  The protocol always accepts singular matrices (one-sided error).
\end{theorem}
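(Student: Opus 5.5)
We plan two protocols: the trivial one for the $O(n^2k)$ term, and a $p$-adic one for the $O(n^2\log(k+\log n))$ term. If $k\le\log(k+\log n)$, Alice simply sends $A$, which costs $n^2k$ bits, and Bob decides exactly. Otherwise set $M=A+B$ and $L=n(k+\log n)+O(n)$. Hadamard's inequality gives $|\det M|\le 2^{L}$ whenever $\det M\neq 0$. The idea is to work modulo a \emph{small} random prime $p$ and detect nonsingularity through the $p$-adic valuation $v_p(\det M)$, rather than through $\det M\not\equiv 0\pmod p$. For $p$ drawn uniformly from the primes in $[P,2P]$ with $P=\Theta((k+\log n)^2)$, a prime dividing $\det M$ can occur at most $L/\log P$ times, counted with multiplicity. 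Hence
\[
\mathbb E_p\,[v_p(\det M)] \le \frac{L/\log P}{\Omega(P/\log P)}=O\!\left(\frac{n(k+\log n)}{P}\right)=O\!\left(\frac{n}{k+\log n}\right),
\]
and by Markov's inequality $v_p(\det M)\le T:=C n/(k+\log n)$ with probability at least $5/6$.

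The protocol computes the Smith normal form of $M$ over $\Z_{(p)}$ by Hensel lifting, with short messages. In the first step, Alice sends $A\bmod p$, which costs $n^2\log p=O(n^2\log(k+\log n))$ bits. Bob computes a basis $K_1$ of $\ker(M\bmod p)$. If this kernel is trivial, Bob outputs ``nonsingular''; this is always correct, since $M$ singular over $\Q$ implies $M$ singular mod $p$. Otherwise they iterate. At stage $j$ they hold integer vectors $x$ with $Mx\equiv 0\pmod{p^j}$, and for each current kernel vector Alice sends $(Ax)/p^{j}\bmod p$. That is one vector of $n$ residues, or $n\log p$ bits per vector. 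Bob then solves a linear system mod $p$ over the current kernel space. This either lifts each vector to precision $p^{j+1}$ or eliminates some kernel directions; the elimination is the standard Smith-form/Hensel step for the module $\{x: Mx\equiv 0 \bmod p^{j}\}$. The protocol halts with ``nonsingular'' once every kernel direction has been eliminated.

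The accounting rests on a standard fact: a kernel direction survives stage $j$ only if it corresponds to an invariant factor of $M$ divisible by $p^{j}$. So the total number of (vector, stage) pairs is $\sum_i v_p(d_i)=v_p(\det M)$, where $d_i$ are the invariant factors. The lifting phase therefore costs $O(n\log p\cdot(v_p(\det M)+n))$ bits, and the extra $n$ accounts for the first-round kernel dimension. We cap the total budget at $O(n^2\log(k+\log n))$, which is at least $n\log p\cdot(T+n)$, and Bob outputs ``singular'' if the budget runs out. Singular $M$ never has all directions eliminated, because a rational kernel vector lifts to every precision; hence singular inputs are always accepted, which gives one-sided error. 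A nonsingular $M$ is rejected whenever $v_p(\det M)\le T$, which happens with probability at least $5/6$.

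The main obstacle is the middle step: making the lifting message per surviving direction genuinely $O(n\log p)$ bits. The difficulty is that Alice must learn enough about the current kernel basis to compute $(Ax)/p^j\bmod p$ herself. I would have Bob send each new lifting correction, an $n$-vector mod $p$, to Alice, and have Alice maintain the current $x$ modulo $p^{j+1}$ locally. This doubles the per-step cost but keeps it $O(n\log p)$. The remaining checks are that the elimination steps correctly track the Smith form, so that the number of stages per direction equals the corresponding valuation, and that the case where the first-round corank is about $n$ is covered by the $n^2\log p$ term.
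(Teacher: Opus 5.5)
Your overall architecture is the paper's: a small random prime, a budget on $v_p(\det M)$ instead of demanding $p\nmid\det M$, $O(n\log p)$ bits per lift, and output ``singular'' when the budget runs out. Your sampling estimate (Markov over primes in $[P,2P]$) is also fine; the paper instead uses the first $k+1+\lceil\log(n+1)\rceil$ primes with budget $3n$.

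The gap is the lifting step, which is where the real content lies. You use relations among the $y$-vectors of the current stage only, plus Hensel corrections $x+p^jz$ with $Mz\equiv-y$. This does not track $\Lambda_j=\{x:Mx\equiv0\bmod p^j\}$ modulo $p$, and it can discard a true kernel direction. Take the singular $M=\bigl(\begin{smallmatrix}p&p\\0&0\end{smallmatrix}\bigr)$, padded by an identity block. With $K_1=\{e_1,e_2\}$ both $y$-vectors equal $e_1$. The relation with coefficients in $\{0,\dots,p-1\}$ gives $x=e_1+(p-1)e_2$, so $Mx=(p^2,0)^T$. At stage $2$ you have $Mx/p^2\equiv e_1\ne0$, and there is no other current vector. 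Since $M\equiv0\pmod p$, no correction $x+p^2z$ helps. The direction is eliminated, and Bob declares a singular matrix nonsingular. The repair is $x-pe_2$, a $p$-multiple of the stage-$1$ vector $e_2$, which is no longer current. Choosing signed representatives does not fix this in general. With $K_1=\{e_1,e_1+e_2\}$ and the relation normalized to $c_1=1$, both representatives of $c_2\equiv-1/2$ in $(-p,p)$ give $Mx=\pm p^2e_1$. So your claim that a singular $M$ never has all directions eliminated, and with it the one-sided error, rests on an update rule you have not supplied. Invoking ``the standard Smith-form/Hensel step'' leaves exactly the needed argument unstated.

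The paper's rule supplies it. Keep a full basis $V$ with $\det V=1$ and exponents $s_i$ such that $R=MV\diag(p^{-s_1},\dots,p^{-s_n})$ is integral. Take a relation $c$ among \emph{all} columns of $R\bmod p$. Pick the pivot $j$ of largest $s_j$ in its support and set $c_j=1$. Replace $v_j$ by $\sum_ic_ip^{s_j-s_i}v_i$ and $s_j$ by $s_j+1$. This keeps $V$ unimodular and gives $\det R=\det M/p^t$ after $t$ lifts. That single identity yields correctness and the exact count $v_p(\det M)$ of lifts, with no Smith-form bookkeeping.

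There is a second, smaller gap. $(Ax)/p^j$ is generally not an integer, since only $Mx$ is divisible by $p^j$. Alice should send $\lfloor Ax/p^j\rfloor\bmod p$. Bob then adds the carry $\mathbf 1[(Bx)_i\not\equiv0\bmod p^j]$ in each coordinate $i$. This works because the two remainders modulo $p^j$ sum to $0$ or $p^j$, which is the paper's quotient-residue identity~\eqref{eq:quotient-from-shares}.
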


The lower and upper bounds are proved in Sections~\ref{sec:additive-lower-bound}
and~\ref{sec:additive-upper-bound}, respectively. Since $Q^*_{1/3}\le R^{\rm pub}_{1/3}$,
these theorems give lower and upper bounds in both models.

\begin{table}[h]
  \centering
  \small
  \setlength{\tabcolsep}{4pt}
  \begin{tabular}{@{}ccccc@{}}
    \toprule
    $k$ & Previous LB & Our LB & Our UB & Previous UB\\
    \midrule
    $1$ & -- & $\Omega(n^{3/2}\sqrt{\log n})$ & -- & $O(n^2)$ [trivial]\\[4pt]
    $\log n$ & -- & $\Omega(n^{3/2}\log n)$ & $O(n^2\log\log n)$ & $O(n^2\log n)$ [trivial]\\[4pt]
    $\ge n\log n$ & -- & $\Omega(n^2\log(nk))$ & -- &    $O(n^2\log(nk))$~\cite{chu-schnitger}\\
    \bottomrule
  \end{tabular}
  \caption{Bounds for both $R^{\rm pub}_{1/3}(\SINGplus_{n,k})$
    and $Q^*_{1/3}(\SINGplus_{n,k})$ for representative values of $k$.
    A dash in Previous LB means that our literature review did not identify
    a directly matching published lower bound; a dash in Our UB means
  that our bound does not improve the previous upper bound.}
  \label{tab:additive-bounds}
\end{table}

Table~\ref{tab:additive-bounds} displays three representative regimes.
For Boolean inputs the quantum lower bound is already superlinear in
the matrix order, though it does not reach the quadratic input length.
At logarithmic bit length the lifting protocol improves the standard
random-prime upper bound by a factor of order
$\log n/\log\log n$. At $k\ge n\log n$, the lower and upper bounds
match at $\Theta(n^2\log(nk))$. This last assertion includes
superpolynomial bit lengths: it is not restricted to
$k\le n^{O(1)}$. In this entire range prior entanglement and quantum
communication do not change the asymptotic cost.

Our quantum lower bound for XOR holds uniformly over all bit lengths:
\begin{theorem}[Bitwise-XOR quantum lower bound]\label{thm:xor-lower-bound}
  For every $k\ge1$ and sufficiently large $n$,
  \[
    Q^*_{1/3}(\SINGxor_{n,k})\ge\Omega(n^2k).
  \]
  The lower bound holds under the promise that $A\oplus B$ has
  rank $n-1$ or $n$.
\end{theorem}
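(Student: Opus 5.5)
We prove Theorem~\ref{thm:xor-lower-bound} by reduction from a dense weighted linear equation, which we then lower-bound with a signed Fourier witness. The reduction is the determinant encoding already used for the Boolean baseline. Fix $N=\Theta(n^2)$ positions and a public weight vector $w\in\Z^N$ with word-sized entries. Alice holds $x\in\{0,\dots,2^k-1\}^N$ and Bob holds $y$, and they must decide whether $\sum_i w_i\,(x_i\oplus y_i)=t$ for a fixed target $t$.

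\emph{Reduction to matrices.} Build a local matrix $M(x\oplus y)$ of order $O(\sqrt N)=O(n)$ as follows. The entries $x_i\oplus y_i$ occupy a block, fixed gadget entries surround it, and a cofactor expansion gives
\[
  \det M=\pm\Bigl(\sum_i w_i (x_i\oplus y_i)-t\Bigr).
\]
This is the sum representation of~\cite{ivanyos-et-al}. A fixed $(n-1)\times(n-1)$ minor of $M$ is $\pm1$, so the rank is always $n-1$ or $n$. Gadget entries are constants, so each party sets its share of them (for instance Alice holds the constant and Bob holds $0$), and the entrywise XOR realizes $M$ exactly. This transfers the bound and gives $Q^*_{1/3}(\SINGxor_{n,k})\ge Q^*_{1/3}(\mathrm{EQ}_{w,t})$ with $N=\Theta(n^2)$. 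It remains to show that this equation problem costs $\Omega(Nk)$.

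\emph{Lower bound for the weighted equation.} The function is $f(x,y)=g(x\oplus y)$, an XOR function over the group $(\Z_2^{k})^N$. For XOR functions, the generalized discrepancy / approximate-trace-norm method gives the bound we need. Suppose $\psi$ is a signed witness on $z\in(\Z_2^k)^N$ with
\[
  \langle \psi,g\rangle\ge\delta\|\psi\|_1
  \quad\text{and}\quad
  |\hat\psi(S)|\le 2^{-\Omega(Nk)}\|\psi\|_1
\]
for all frequencies outside a small set. Then the bound of Sherstov, or the Lee--Shraibman trace-norm bound of Zhang--Shi type, yields $Q^*\ge\Omega(Nk)$.

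We construct $\psi$ as follows. Let $\mu$ be a distribution concentrated on solutions of $\sum w_i z_i=t$, and let $\nu$ be its counterpart on the nearby level sets $t\pm1$. Set $\psi=\mu-c\nu$. Because $w$ is dense and generic (for example $w_i=1$ with the level $t$ in the middle of the range, or random word-sized weights), the sum $\sum w_i z_i$ for uniform $z$ is spread like a local limit. Its characteristic function at a frequency that involves many coordinates is exponentially small in the number of coordinates, and for each coordinate the decay is in terms of $k$ bits.

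\emph{Main obstacle.} The difficulty is the exceptional dyadic frequencies. A character of $\Z_2^k$ that depends only on high-order bits, for instance $(-1)^{\text{top bit of }z_i}$, correlates with the integer value $z_i$. Such characters do not decay under the integer-sum constraint: parity of carries and dyadic structure survive, so these Fourier coefficients of $\mu$ can be large. Our plan is to modify $\psi$ by subtracting a correction $\phi$ that is supported near the level sets and cancels those coefficients, while showing $\|\phi\|_1$ is a small fraction of $\|\psi\|_1$. This is the ``removal at small variation cost''. Concretely, we would average $\mu$ over low-bit randomizations that preserve $\sum w_iz_i$: swapping bit patterns between pairs of coordinates with equal weight, which is invertible and keeps the sum fixed. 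This kills every character that is nonconstant on the orbit. We then bound, by a local-CLT estimate, how far the surviving dyadic-only characters are from balanced between $\mu$ and $\nu$. Choosing $c$ to balance the total mass and tuning the number of free low bits per coordinate should leave $\langle\psi,g\rangle$ of constant order, with all remaining coefficients at most $2^{-\Omega(Nk)}$. Once the witness and this estimate are in hand, the bound follows from $N=\Theta(n^2)$.
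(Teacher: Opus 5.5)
Your outline has the right architecture: a Schur-complement encoding of a weighted sum, a signed witness for the resulting XOR function, and the entangled pairing. But the witness construction, which carries the whole theorem, fails at a concrete point.

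\textbf{The comparison levels are wrong.} Comparing the level $\sum_iw_iz_i=t$ with the neighboring levels $t\pm1$ cannot work for any weights. Let $\chi(z)=(-1)^{\sum_{i:\,w_i\ \mathrm{odd}}z_{i,0}}$ be the parity of the lowest bits of the odd-weight words. Then $\chi(z)=(-1)^{\sum_iw_iz_i}$, so $\chi\equiv(-1)^t$ on $\supp\mu$ and $\chi\equiv-(-1)^t$ on $\supp\nu$, whence $|\widehat\psi(\chi)|=\|\psi\|_1$. Equivalently, one bit of communication separates the two levels, so no witness supported there gives any lower bound.

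Averaging over bit-pattern swaps between equal-weight words does not help. The swaps leave $\chi$ invariant, and in general symmetrizing a measure only equalizes Walsh coefficients along orbits of masks; it does not annihilate them. A local-limit estimate is likewise irrelevant at frequencies where a word's factor has modulus one.

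The missing idea is to compare $s(X)$ with $s(X)+D$, where the shift $D$ is divisible by every dyadic denominator that must be cancelled. The paper takes $D=N/2$ and denominators $q\le N^{\lambda}$, so the transform vanishes exactly at those dyadic centers. It then removes neighborhoods of the centers by convolving in the output coordinate with modulated de la Vall\'ee Poussin kernels. Their variation cost $O(DM_q/Q)$ is small by near shift invariance (Lemma~\ref{lem:cutoff-kernel-shift-bound}).

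\textbf{The weights are the second gap.} An arbitrary weight vector on $\Theta(n^2)$ words, such as your random word-sized $w_i$, cannot be evaluated by the determinant of an order-$O(n)$ matrix holding each word in one entry. The encoding~\eqref{eq:weighted-sum-matrix} yields product weights $a_ib_j$ (and block sums of them), which is why the paper's equation is $a^{\mathsf T}Xb=\sum_{i,j}Y_{ij}$.

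Unit weights do fit, but then your per-coordinate decay claim is false. At $\theta=2^{-j}$ the binary expansion has a single transition, so $m_r(\theta)\ge2/\pi$, and a suitable mask gives a product of size $(2/\pi)^{d^2}$ with no factor $k$. Exact cancellation there would need a shift divisible by about $2^r$, while the neighborhoods have width about $2^{-r}$, so each cutoff would cost constant variation. The paper's generic coefficients $a_i,b_j\le N^4$, chosen via Lemmas~\ref{lem:progression-interval-count}--\ref{lem:odd-denominator-residues}, are what confine every slowly decaying frequency to dyadic neighborhoods of denominator at most $N^{\lambda}$.

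\textbf{Two further points.} First, the pairing needs every Walsh coefficient bounded, not just those outside a small set. Second, your fixed target has $k+\Theta(\log n)$ bits and enters with a negative sign, so it does not fit the nonnegative $k$-bit alphabet. The paper handles both by:
\begin{itemize}
\item making the output a shared variable and splitting it into $4d^2$ words bounded by $B$;
\item bounding every mask on those words by Cauchy--Schwarz over its cyclic expansion;
\item removing public negative entries with the block $\bigl(\begin{smallmatrix}U&V\\I&I\end{smallmatrix}\bigr)$;
\item using that same output coordinate for the cutoff convolution.
\end{itemize}
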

The proof is in Section~\ref{sec:xor-singularity}. Together with the trivial upper bound,
this gives
\[
  Q^*_{1/3}(\SINGxor_{n,k})=R^{\rm pub}_{1/3}(\SINGxor_{n,k})=\Theta(n^2k).
\]
The lower bound on $R^{\rm pub}_{1/3}(\SINGxor_{n,k})$ already follows from disjointness, and the
case $k=1$ follows from Zhang--Shi's theorem as described above.
The contribution of Theorem~\ref{thm:xor-lower-bound} is the quantum lower
bound with its full linear dependence on growing $k$; the classical
and Boolean statements are included as consequences of earlier tools.
Unlike the additive lower bound, the XOR lower
bound for general $k$ uses overlapping shares; it does not impose
fixed ownership. Indeed, on disjoint supports addition and XOR are
identical, so such a restriction could not yield $\Omega(n^2k)$ for
large $k$ in view of the additive randomized upper bound.

\subsection{The role of the entry bit length}

The separation between addition and XOR is most visible when the
coefficient length grows faster than the matrix order. At $k=n^2$,
for example, both bounded-error complexities of addition are
$\Theta(n^2\log n)$, whereas both complexities of XOR are
$\Theta(n^4)$. At $k=2^n$, the corresponding orders are
$\Theta(n^3)$ and $\Theta(n^2 2^n)$. These are consequences of the
uniform parameter ranges of the theorems, rather than separate
assumptions about how the entries are represented. In each case an
entry is an ordinary binary word, and the players may perform
arbitrarily expensive local computations on it. The savings for
addition concern information transmitted, not arithmetic running time.

There is a simple algebraic reason that the additive upper-bound idea
does not immediately apply to XOR. Reduction modulo an odd prime
commutes with addition: from $A\bmod p$ and his own input, Bob can
form $(A+B)\bmod p$. Bitwise XOR is linear over the individual binary
coordinates, but this does not make its integer value linear modulo
an odd prime. For nonnegative words,
$ a\oplus b=a+b-2(a\mathbin{\&}b) $,
where $\mathbin{\&}$ denotes bitwise AND. The last term depends on
both private words. Sending only the residue of Alice's word does
not in general determine it. This observation is an obstruction to
that particular protocol, not a lower-bound argument; the dense
equation witness is what proves that every bounded-error protocol
must communicate $\Omega(n^2k)$.

At the Boolean endpoint, it is also useful to separate the input
alphabet from the entries of the combined matrix. A general sum of
two Boolean matrices may contain twos, whereas their XOR is Boolean.
The additive lower bound already holds on disjoint supports, where
the sum itself is Boolean. Thus the hardness of additive Boolean
singularity does not arise from the occurrence of twos in the sum.
Conversely, the stronger Boolean XOR lower bound uses private
overlap and does not by itself improve the additive theorem.
The local reductions in the paper specify this ownership information
along with the determinant identity.

Finally, a small output should not be confused with a small
certificate available to one party. A matrix with zero determinant has a
nonzero kernel vector, but neither player generally knows that vector from
their local matrix, and a rational kernel vector can have large coefficients.
Our protocols and reductions therefore account for how the needed
information is obtained. The prime-power protocol sends an explicit
sequence of modular relations, and the lower-bound witnesses control
all short interactive protocols without assuming a particular form
of certificate. The adjacent-rank promises ensure that these issues
remain even when the two possible ranks differ by only one.

\subsection{Proof ideas}

\paragraph{The additive upper bound: divide an implicit determinant.}
A direct modular test fails whenever the chosen prime divides a
nonzero determinant. We instead maintain an implicit integral matrix
obtained by unimodular column operations followed by divisions of
individual columns by powers of that prime. When the matrix is
singular modulo $p$, Bob sends a kernel relation. A carefully chosen
pivot makes the associated column operation integral and unimodular.
The new column is divisible by one more factor of $p$.

Only its residue after division is needed for the next step. For
additive integer shares $a+b$ known to be divisible by $q$, the carry
in the quotient can be determined from Bob's remainder alone. Alice
therefore sends a single quotient residue per coordinate, rather
than the high-precision column. Each lift uses $O(n\log p)$ bits
and removes exactly one factor of $p$ from the determinant. Allowing
$3n$ lifts lets the protocol sample among only $O(k+\log n)$ primes.
The sum of their valuations in a nonzero determinant is bounded by
the determinant's bit length, so fewer than one third exceed the budget.

\paragraph{The additive lower bound: separate height from dimension.}
The lower bound starts with a block matrix containing an $m\times m$
identity and a $d\times d$ Schur complement. A signed witness for
this integer problem has constant correlation with singularity and
operator norm exponentially small in $md\log(u+1)$, where a scalar
cutoff $u$ is adjustable independently of $d$. The entangled quantum
factorization theorem of Linial and Shraibman~\cite[Lemma~12]{linial-shraibman}
turns this norm estimate into a communication lower bound.
The witness has only a small amount of variation on coranks greater
than one, which is charged to the protocol's error. Deleting those
weights would not preserve the norm estimate.

To fit the integer inputs into a smaller alphabet, expand only the
bottom $d$ rows and rightmost $d$ columns into digits. A public
recurrence reconstructs signed differences by elimination, using
entries below $2^k$. It preserves corank and assigns every private
digit to a fixed owner. Thus the dimension cost is proportional to
$d$ times the number of digits, not to the number of private entries.
For short entries take $u$ proportional to $d$ and balance the core
size against the digit expansion. For long entries take $d$ proportional
to $n$ and $u$ proportional to $k/\log k$. This makes all witness
entries fit directly and yields the entire tight long-entry regime.

The witness itself requires control of every Fourier frequency.
Differences of rational isometries give a signed integer residual
correlated with singularity. A positive law on short products of
integral shears has exact prescribed actions on small finite modules;
it supplies Fourier zeros at modular matrices that factor through
few columns. A scalar interpolation filter makes the derivatives
vanish there as well. Dilation by $\operatorname{lcm}(1,\ldots,u)$
extends the zero set to matrices with a small image group. Finally,
a frequency dichotomy says that either the residual transform is
small or the independent free rows strongly suppress the communication
operator. We state the interfaces before proving the algebraic
construction, so the parameter calculation can be read independently
of its supporting lemmas.

\paragraph{The XOR lower bound: one dense equation.}
The classical XOR lower bound follows from the published disjointness
theorem~\cite{razborov} by packing inputs into
words and computing the sum of their bitwise intersections with a
small block determinant. This route alone does not give the desired
quantum bound. For Boolean inputs, an exact Hamming-distance test in
the interior of the range does: Zhang--Shi's symmetric-XOR theorem
\cite[Theorem~2]{zhang-shi-xor} supplies a quadratic quantum lower
bound after a local matrix reduction.

To obtain the factor $k$ for arbitrary word length, we use a dense
weighted equation $a^TXb=\sum Y_{ij}$. The weights factor into a row
coefficient and a column coefficient, so a matrix of linear dimension
can evaluate the equation by a Schur complement. A signed source
compares the exact equation with a shifted equation. Independent
words of $X$ give a product of $d^2$ Fourier factors. The public
coefficients are chosen so this product is exponentially small away
from short neighborhoods of dyadic rational frequencies. At the
centers of these neighborhoods the shift already cancels; a smooth
Fourier cutoff removes the neighborhoods at small total variation cost.

The remaining issue is uniformity in the relation between dimension
and bit length. The scalar output can require extra bits when the
dimension grows. We split it into $O(d^2)$ bounded words whose sum is
the output, and bound every Walsh coefficient on those words, including
masks involving many of them at once. This keeps every matrix entry
within the word budget without requiring $k\ge C\log n$. Walsh
diagonalization then bounds the norm of the XOR communication witness,
and the same entangled factorization theorem gives $\Omega(n^2k)$.

\subsection{Open questions}

The central remaining question is the bounded-error complexity of
addition for small $k$. For Boolean inputs the bounds are
$\Omega(n^{3/2}\sqrt{\log n})$ and $O(n^2)$; at logarithmic bit
length they are $\Omega(n^{3/2}\log n)$ and $O(n^2\log\log n)$.
Does additive singularity require quadratic communication on Boolean
inputs? More generally, is the prime-power upper bound tight throughout
the short-entry range, or is there a protocol that improves it?
These are questions about unrestricted communication, not only about
the particular witness or modular protocol used here.

The proofs point to two concrete issues. The additive lower bound
pays for the height of an integer witness through digit expansion;
a smaller-height witness with comparable correlation and operator norm
would strengthen the result. The upper bound explicitly communicates
kernel relations and quotient residues, so a protocol using different
information about the shared matrix might save communication beyond
this lifting scheme. A further question is whether randomized and
entangled quantum complexities coincide throughout the unresolved
range. Our common lower bound and classical upper bound do not settle
that possibility. In contrast, XOR is settled in both models at every
bit length, even under the adjacent-rank promise.

\paragraph{Organization.}
Section~\ref{sec:preliminaries} presents local reductions and shared tools.
Section~\ref{sec:additive-upper-bound} proves the additive
upper bound. Section~\ref{sec:additive-lower-bound} proves the additive lower bound,
first from an explicit witness interface and then by constructing
that witness. Section~\ref{sec:xor-singularity} proves the XOR theorem.
All necessary unpublished proofs are in
the main text; there are no technical appendices.
\section{Preliminaries}\label{sec:preliminaries}

\subsection{Local reductions and Schur complements}

A local reduction consists of deterministic maps applied separately
to Alice's and Bob's inputs. It requires no communication and
preserves the error of any protocol composed with it. We track
corank, so a reduction also preserves the promise of full rank
versus corank one. In additive constructions, fixed ownership means
each position has a publicly specified owner and the other party
always places zero there. Such inputs have disjoint supports.
For additive and XOR inputs, identity padding is local: Alice adds
an identity block $I_m$ of size $m\times m$ and Bob adds zero,
with zero off-diagonal blocks. We omit the subscript of $I_m$ when
the order is clear.

We repeatedly use Schur complements. If $D$ is invertible, then
\begin{equation}\label{eq:general-schur}
  \rank
  \begin{pmatrix}D&B\\C&E
  \end{pmatrix}
  =\operatorname{ord}(D)+\rank(E-CD^{-1}B),\quad
  \det
  \begin{pmatrix}D&B\\C&E
  \end{pmatrix}
  =\det D\det(E-CD^{-1}B).
\end{equation}
These follow by invertible block row elimination. Two useful
consequences, for blocks of order $d$, are
\begin{align}
  \rank
  \begin{pmatrix}P&Q\\I&I
  \end{pmatrix}
  &=d+\rank(P-Q),&
  \det
  \begin{pmatrix}P&Q\\I&I
  \end{pmatrix}&=\det(P-Q),
  \label{eq:difference-block}\\
  \rank
  \begin{pmatrix}U&V&I\\W&Z&I\\I&I&0
  \end{pmatrix}
  &=2d+\rank(U-V-W+Z).&&\label{eq:four-block}
\end{align}
The determinant in~\eqref{eq:four-block} is
$(-1)^d\det(U-V-W+Z)$. To see the latter identities, substitute
$y'=x+y$ in the block columns and subtract the second block row
from the first. The resulting matrix is
\[
  \begin{pmatrix}U-V-W+Z&V-Z&0\\W-Z&Z&I\\0&I&0
  \end{pmatrix}.
\]
Its lower-right block is invertible with determinant $(-1)^d$;
its Schur complement is $U-V-W+Z$. These identities hold for
arbitrary rational blocks.

Sun--Wang~\cite[p.~478]{sun-wang} give a block reduction from additive
singularity to a balanced row partition, increasing the order from $n$
to $2n$. Their construction preserves corank but uses a $-I_n$ block.
This preserves the alphabet over their finite field, whereas over
$\mathbb Q$ it introduces $-1$, outside our nonnegative $k$-bit alphabet.
The following lemma adapts the construction using only original input
entries, zeros, and ones. It preserves both corank and bit length,
including for Boolean inputs, at the cost of increasing the order to $3n$.

\begin{lemma}[Addition and balanced partitions]\label{lem:addition-column-partition}
  Let $n$ be even and $k\ge1$. The problem $\SINGpart_{n,k}$ reduces
  locally to $\SINGplus_{n,k}$, which in turn reduces locally to
  $\SINGpart_{3n,k}$. Both reductions preserve the entry alphabet and corank.
  Consequently, for deterministic, randomized, or quantum communication
  with the same error and resource conventions,
  \[
    C(\SINGpart_{n,k})\le C(\SINGplus_{n,k})
    \le C(\SINGpart_{3n,k}),
  \]
  where $C$ denotes communication complexity in the chosen model.
  These inequalities also hold under the promise
  of full rank versus corank one.
\end{lemma}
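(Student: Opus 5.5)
The plan is to give two explicit local maps, one for each inequality, and to check corank preservation by computing kernels directly. The communication inequalities then follow by composing any protocol with the local maps. Local maps require no communication and preserve error, resources, and (since they preserve corank) the promise of full rank versus corank one.

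\emph{From $\SINGpart_{n,k}$ to $\SINGplus_{n,k}$.} Given Alice's $X$ and Bob's $Y$, both of size $n\times(n/2)$, Alice forms $A=[\,X\ \ 0\,]$ and Bob forms $B=[\,0\ \ Y\,]$. Then $A+B=[\,X\ \ Y\,]$ is the same matrix, so corank is literally unchanged. The only new entries are zeros, so the alphabet is preserved.

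\emph{From $\SINGplus_{n,k}$ to $\SINGpart_{3n,k}$.} Given $A$ and $B$ of order $n$, consider the $3n\times3n$ matrix
\[
  M=\begin{pmatrix} I & I & 0\\ A & 0 & I\\ 0 & B & I\end{pmatrix},
\]
acting on block vectors $(x,y,w)$. Its entries are entries of $A$, entries of $B$, zeros and ones, so the alphabet is preserved even for $k=1$. The ownership is as follows:
\begin{itemize}
\item The first block column $\binom{I}{A}$ padded with $0$ depends only on $A$ and goes to Alice.
\item The second block column depends only on $B$ and goes to Bob.
\item The third block column is public; its first $n/2$ columns go to Alice and its last $n/2$ columns to Bob.
\end{itemize}
This gives each party $3n/2$ columns, which requires $n$ even, as assumed. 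Each party can write its share locally.

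For corank, $M(x,y,w)^T=0$ means $x+y=0$, $Ax+w=0$ and $By+w=0$. Hence $y=-x$ and $w=-Ax$, and the last equation becomes $-Bx-Ax=0$, that is, $(A+B)x=0$. Conversely, every $x\in\ker(A+B)$ extends uniquely to the kernel vector $(x,-x,-Ax)$. So $x\mapsto(x,-x,-Ax)$ is a linear isomorphism $\ker(A+B)\to\ker M$ over $\mathbb Q$. This gives $\corank M=\corank(A+B)$, that is, $\rank M=2n+\rank(A+B)$, in the spirit of~\eqref{eq:four-block}. In particular, $M$ is singular exactly when $A+B$ is, and the promise of full rank versus corank one is transported.

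The only real obstacle is the sign issue raised before the lemma. One must produce the relation $(A+B)x=0$ without a $-I$ block, using only nonnegative entries. The chain of equations through the auxiliary variables $y=-x$ and $w$ does exactly this: the negations are forced by the kernel equations rather than written into the matrix. The price is the extra block of dimension, which is why the order grows to $3n$ rather than $2n$ as in Sun--Wang's construction.
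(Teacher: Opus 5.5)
Your proof is correct and follows essentially the same route as the paper: zero-padding for the first reduction, and for the second a $3n\times 3n$ block matrix with one private block column per party and a public block column split in half, with corank checked by an explicit kernel bijection. Your gadget places $A$ and $B$ in separate block rows, with kernel $(x,-x,-Ax)$, whereas the paper puts them side by side in $\bigl(\begin{smallmatrix}X&Y&0\\ I&0&I\\ 0&I&I\end{smallmatrix}\bigr)$, with kernel $(-w,-w,w)$; this is an equally valid variant, and the only step you leave implicit is the fixed column permutation that places Alice's $3n/2$ columns first, which preserves corank.
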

\begin{proof}
  For column inputs $X,Y$ of size $n\times(n/2)$, use additive
  inputs $[\,X\ \ 0\,]$ and $[\,0\ \ Y\,]$; their sum is
  $[\,X\ \ Y\,]$.

  Conversely, for square additive inputs $X,Y$ of order $n$, put
  \[
    M(X,Y)=
    \begin{pmatrix}
      X&Y&0\\ I_n&0&I_n\\ 0&I_n&I_n
    \end{pmatrix}.
  \]
  Its kernel consists exactly of the vectors $(-w,-w,w)$ with
  $(X+Y)w=0$. Thus
  \[
    \rank M(X,Y)=2n+\rank(X+Y),
    \qquad \corank M(X,Y)=\corank(X+Y).
  \]
  Alice constructs the first $n$ columns and Bob the next $n$;
  each block depends only on its owner's input. The last $n$ columns
  are public, so assign half to each party. A fixed column permutation
  places Alice's $3n/2$ columns first and Bob's $3n/2$ columns second.
  All entries are original input entries, zero, or one, so the same
  $k$-bit alphabet suffices. The maps require no communication and
  preserve corank, proving all assertions.
\end{proof}

For odd additive order $n$, append a public $1\times1$ identity block
to $M(X,Y)$ and split its $n+1$ public columns equally. This gives a
balanced column-partition instance of order $3n+1$ with the same
corank and entry alphabet.

\begin{lemma}[Deterministic bounds; consequence of~\cite{chu-schnitger}]
  \label{lem:deterministic-singularity}
  For every $k\ge1$ and sufficiently large $n$,
  \[
    D(\SINGplus_{n,k})=D(\SINGxor_{n,k})=\Theta(n^2k).
  \]
  The lower bounds hold even when the resulting matrix is promised
  to have rank $n-1$ or $n$.
\end{lemma}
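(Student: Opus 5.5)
The upper bound $O(n^2k)$ is the trivial protocol. Alice sends all $n^2k$ bits of $A$, and Bob computes $A+B$ or $A\oplus B$ and evaluates the integer determinant. The work is therefore in the lower bound $\Omega(n^2k)$ under the adjacent-rank promise. We take this from Chu and Schnitger~\cite{chu-schnitger}, who prove a deterministic lower bound of order $n^2k$ for the partitioned integer singularity problem with $k$-bit entries. We then transfer that bound to $\SINGplus$ and $\SINGxor$ by local reductions.

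\textbf{Additive case.} By Lemma~\ref{lem:addition-column-partition}, $D(\SINGpart_{n,k})\le D(\SINGplus_{n,k})$, and that reduction preserves corank. It therefore suffices to have the Chu--Schnitger lower bound for the balanced partition under the rank $n-1$ versus $n$ promise. Their argument is a rank or fooling-set argument on the communication matrix. We plan to inspect their hard family and check that each pair of inputs it separates has one matrix of corank exactly one and one nonsingular matrix.

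If their family does not directly satisfy the promise, we would build our own fooling set and apply the log-rank bound in the form $D\ge\log_2$ of the fooling-set size. For this we fix ownership with Bob holding an $n\times(n/2)$ block. For each Alice input $X$ we choose a Bob input $Y_X$ so that $[X\ Y_X]$ has corank exactly one. We use the Schur-complement identities~\eqref{eq:general-schur} and~\eqref{eq:difference-block} to make this equivalent to an equation of the form $P=Q$ between Alice's and Bob's $d\times d$ blocks. A rank-one perturbation then keeps the unequal case at full rank. This turns the problem into equality on $\Theta(n^2k)$ bits, with the adjacent-rank promise built in, and equality needs $\Omega(n^2k)$ deterministic bits. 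The main obstacle is keeping every entry a nonnegative $k$-bit integer while realizing the difference $P-Q$. We would handle this by mapping each entry $p$ of $P$ to the complement $2^k-1-p$ in Bob's copy. The sum then has a fixed public pattern exactly when the two words are equal, and that pattern is corrected by a public block that is also nonnegative and $k$-bit.

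\textbf{XOR case.} Fixed ownership makes XOR coincide with addition on disjoint supports. The additive instance produced by the first reduction in Lemma~\ref{lem:addition-column-partition}, with inputs $[X\ 0]$ and $[0\ Y]$, has disjoint supports, so there $A\oplus B=A+B$. Hence $D(\SINGpart_{n,k})\le D(\SINGxor_{n,k})$ with corank preserved, and the same $\Omega(n^2k)$ bound follows under the same promise. Combining both cases with the trivial upper bound gives $\Theta(n^2k)$ for both problems.
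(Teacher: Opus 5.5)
\textbf{Gap at $k=1$.} For $k\ge2$ your route is the paper's. You quote the Chu--Schnitger column-partition bound and transfer it by placing each player's columns in its own matrix with zeros elsewhere; disjoint supports make $A+B=A\oplus B$ equal to the original matrix. The promise comes from their hard matrices having corank at most one. The genuine gap is the Boolean case. The statement claims $\Theta(n^2k)$ for every $k\ge1$, but the Chu--Schnitger bound is for multi-bit entries, so at $k=1$ you are quoting it outside its range.

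The paper handles $k=1$ with an extra local reduction. It takes their two-bit hard instances and replaces every row by the five-row binary-digit gadget from the proof of Lemma~\ref{lem:radix-encoding}. The digit rows are $a_0,a_1$, the subtracted rows $b_0,b_1$ are set to zero, and there is a zero row and four public $0/1$ columns whose left-kernel weights $(1,-1,1,2,-2)$ recover $a_0+2a_1$. This gives Boolean matrices of five times the order with the same corank and the same fixed ownership. Public entries are assigned to Alice, so sum and XOR coincide. Identity padding then covers every sufficiently large order; your proposal also omits padding, which you need both here and to reach all $n$.

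\textbf{The fallback does not close this.} The complement trick only shows that $A+B$ equals a fixed pattern exactly when $P=Q$, which says nothing about singularity. The claim that a rank-one perturbation keeps the unequal case at full rank has no mechanism behind it. If $C_0$ is the fixed corank-one matrix reached on equal inputs and $C_0v=0$, then every unequal pair with $(P-Q)v=0$ still gives a singular instance. Avoiding this requires a carefully structured fooling set of size $2^{\Omega(n^2k)}$ whose diagonal instances have corank exactly one. That is precisely the nontrivial content of the Chu--Schnitger construction. So the proof should rest on their hard family, with the digit expansion above for $k=1$.
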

\begin{proof}
  For $k\ge2$, the column-partition lower bound of
  Chu--Schnitger~\cite[Theorem~1.1]{chu-schnitger} transfers by placing
  each player's owned columns in its local matrix and zeros elsewhere:
  both the sum and XOR equal the original matrix. For $k=1$, apply
  the two-digit row expansion in the proof of Lemma~\ref{lem:radix-encoding}
  to every row of their two-bit hard instances, with the subtracted
  rows set to zero. The resulting Boolean matrix has five times the
  order and preserves corank and ownership. Assign public entries to
  Alice; the sum and XOR again coincide with the encoded matrix.
  Their hard matrices have corank at most one by the proof of
  Lemma~3.2 and Lemma~3.4 in~\cite{chu-schnitger}, giving the promise.
  Identity padding handles intermediate orders, and sending one input
  gives the upper bound.
\end{proof}

\subsection{Signed witnesses and entangled protocols}
A real signed matrix $W$ has variation $\|W\|_1=\sum_{x,y}|W_{xy}|$;
there is no factor $1/2$ in this convention. For positive probability
vectors $\alpha,\beta$ on the local input sets define
$T_W=D_\alpha^{-1/2}WD_\beta^{-1/2}$. All unadorned matrix norms
in witness estimates are Euclidean operator norms.

\begin{lemma}[Quantum pairing]\label{lem:quantum-witness-pairing}
  If $p_{xy}$ is the acceptance probability of a $t$-qubit protocol with
  arbitrary prior entanglement, then
  \begin{equation}\label{eq:quantum-witness-pairing}
    |\langle p,W\rangle|\le 2^t\|T_W\|.
  \end{equation}
  If $W$ has mass zero and $f$ is the sign of the desired answer, worst-case
  error at most $1/3$ implies
  $\langle p,W\rangle\ge\langle f,W\rangle/2-\|W\|_1/3$.
  If this error guarantee is required only on a promise set and the variation
  outside that set is $\delta$, the same bound holds with an additional
  penalty $2\delta/3$.
\end{lemma}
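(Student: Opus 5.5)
The plan has two parts: the norm inequality \eqref{eq:quantum-witness-pairing}, which comes from the Linial--Shraibman factorization, and the correlation lower bound, which is pointwise bookkeeping on the error guarantee.

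\emph{Norm inequality.} By Linial and Shraibman~\cite[Lemma~12]{linial-shraibman}, the acceptance-probability matrix of a $t$-qubit protocol with arbitrary prior entanglement factors as $p_{xy}=\langle u_x,v_y\rangle$, where the real vectors satisfy $\|u_x\|,\|v_y\|\le 2^{t}$. Up to the exact constant, which I would track from their statement, this means $\gamma_2(p)\le 2^t$. Insert the weights on both sides:
\[
  \langle p,W\rangle=\sum_{x,y}\bigl(\sqrt{\alpha_x}\,u_x\bigr)^{T}\bigl(\sqrt{\beta_y}\,v_y\bigr)(T_W)_{xy}.
\]
For each coordinate $j$ of the vectors, form the real vectors $a^{(j)}_x=\sqrt{\alpha_x}\,u_{x,j}$ and $b^{(j)}_y=\sqrt{\beta_y}\,v_{y,j}$. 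Then $|a^{(j)T}T_Wb^{(j)}|\le\|T_W\|\,\|a^{(j)}\|\,\|b^{(j)}\|$. Summing over $j$ and applying Cauchy--Schwarz gives
\[
  |\langle p,W\rangle|\le\|T_W\|\Bigl(\sum_x\alpha_x\|u_x\|^2\Bigr)^{1/2}\Bigl(\sum_y\beta_y\|v_y\|^2\Bigr)^{1/2}\le 2^t\|T_W\|,
\]
using $\sum_x\alpha_x=\sum_y\beta_y=1$. \textbf{The main obstacle} is matching the normalization of the cited lemma so that the constant is exactly $2^t$. If their bound has the form $\|u_x\|\|v_y\|\le 2^{t}$ after rescaling, the argument goes through unchanged. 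If it comes with an extra factor, I would absorb that factor by reinterpreting $t$ or by noting the convention used in the rest of the paper.

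\emph{Correlation bound.} Take $f\in\{\pm1\}$ with $f=+1$ exactly on the inputs that should be accepted. Since $W$ has mass zero, $\langle p,W\rangle=\langle p-\tfrac12,W\rangle$. On every input $(x,y)$ with error at most $1/3$, the quantity $p_{xy}-\tfrac12$ has the sign of $f_{xy}$ and magnitude at least $1/6$. More precisely, $p_{xy}-\tfrac12=f_{xy}/2-e_{xy}f_{xy}$, where $e_{xy}\in[0,1/3]$ is the error probability. Hence
\[
  \langle p,W\rangle=\langle f,W\rangle/2-\sum_{x,y}e_{xy}f_{xy}W_{xy}\ge\langle f,W\rangle/2-\|W\|_1/3.
\]

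\emph{Promise version.} Outside the promise set, $f$ can be assigned arbitrarily in $\{\pm1\}$. There, $p_{xy}-\tfrac12\in[-\tfrac12,\tfrac12]$, while the target term $f_{xy}W_{xy}/2$ also has magnitude at most $|W_{xy}|/2$. The discrepancy between the two is therefore at most $|W_{xy}|$ on such entries.

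This crude estimate only gives a penalty of $\delta$, not the claimed $2\delta/3$. To recover $2\delta/3$, the outside entries should be split so that they are not also charged by the main term. Write
\[
  \langle p,W\rangle=\sum_{\mathrm{in}}+\sum_{\mathrm{out}}.
\]
The inside sum is at least $\langle f,W\rangle_{\mathrm{in}}/2-\|W\|_{1,\mathrm{in}}/3$. The outside sum is at least $\langle f,W\rangle_{\mathrm{out}}/2-\delta$. Combining the two,
\[
  \langle p,W\rangle\ge\langle f,W\rangle/2-\|W\|_1/3+\delta/3-\delta=\langle f,W\rangle/2-\|W\|_1/3-2\delta/3,
\]
where the $\delta/3$ comes from $\|W\|_{1,\mathrm{in}}=\|W\|_1-\delta$.
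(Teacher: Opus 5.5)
Your proposal is correct and follows the paper's proof. The steps match: the Linial--Shraibman factorization with a coordinatewise operator-norm bound and Cauchy--Schwarz, then pairing $p_{xy}-(1+f_{xy})/2=-e_{xy}f_{xy}$ with $W$ using the pointwise bounds $1/3$ on the promise and $1$ off it, and your refined split gives the $2\delta/3$ penalty exactly as the paper does.

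The one thing to repair is the normalization you flagged. The cited factorization gives $\|u_x\|,\|v_y\|\le 2^{t/2}$; your stated bound of $2^t$ on each side would only give $2^{2t}$. With $2^{t/2}$ per side, your Cauchy--Schwarz display yields exactly $2^t\|T_W\|$.

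The paper also remarks that a promise protocol should be capped at its budget, with arbitrary outputs off the promise, so that $p$ is a genuine $t$-qubit acceptance matrix on the whole product.
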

\begin{proof}
  Linial and Shraibman~\cite[Lemma~12 and Section~3.2.1]{linial-shraibman}
  factor $p_{xy}=\langle a_x,b_y\rangle$ with
  $\|a_x\|,\|b_y\|\le2^{t/2}$, also with prior entanglement. Apply the
  operator-norm inequality in each vector coordinate and Cauchy--Schwarz:
  \[
    |\langle p,W\rangle|\le\|T_W\|
    \left(\sum_x\alpha_x\|a_x\|^2\right)^{1/2}
    \left(\sum_y\beta_y\|b_y\|^2\right)^{1/2}\le2^t\|T_W\|.
  \]
  Pair $p-(1+f)/2$ with $W$. Its magnitude is at most $1/3$ on the
  promise and at most one elsewhere, giving the stated penalty.
  For a promise protocol, cap its execution at its communication budget
  and prescribe an arbitrary output outside the promise. This defines an
  acceptance matrix on the full product without changing promised inputs.
\end{proof}
\section{Additive upper bound}\label{sec:additive-upper-bound}

This section proves Theorem~\ref{thm:additive-upper-bound}.

The standard random-prime protocol~\cite[p.~396]{chu-schnitger}
sends $A$ modulo a prime chosen
among $O(n(k+\log n))$ primes. A nonzero determinant has at most
$n(k+1+\tfrac12\log n)$ distinct prime factors by Hadamard's inequality,
so a sufficiently large constant in the sampling range bounds the
error by $1/3$. Primes of polynomial size in their index give cost
$O(n^2\log(nk))$; see the bounds of Rosser and
Schoenfeld~\cite[Theorem~3 and its corollary]{rosser-schoenfeld}. Combining this with input
transmission gives the previous $O(n^2\min\{k,\log(nk)\})$ bound.
We improve the sampling range by allowing the chosen prime to divide
the determinant up to $3n$ times.
Successive division by $p$ is also central to Dixon's method for exact
linear solving~\cite[pp.~138--139]{dixon}. Dixon lifts a solution using a
matrix invertible modulo $p$. Here we instead remove factors of $p$
from the determinant when the matrix is singular modulo $p$, using
quotient residues that can be communicated from additive shares.

\subsection{Quotient residues from additive shares}
Suppose $a,b\in\mathbb Z$ are private shares and public $q\ge1$
divides $a+b$. Write the Euclidean divisions
$a=qa_0+r_a$, $b=qb_0+r_b$, with $0\le r_a,r_b<q$. The sum of the
remainders is zero or $q$, and it is zero exactly when $r_b=0$.
Therefore
\begin{equation}\label{eq:quotient-from-shares}
  (a+b)/q=a_0+b_0+\mathbf1[r_b\ne0].
\end{equation}
For a public prime $p$, Alice sends only $a_0\bmod p$ and Bob
recovers the quotient modulo $p$. This works for negative shares.
In particular, for a public integer column $v$ with $Mv\in q\mathbb Z^n$,
where $M=A+B$, Bob learns $Mv/q\bmod p$ with
$n\lceil\log p\rceil$ bits by applying~\eqref{eq:quotient-from-shares} to $Av$ and $Bv$.

\subsection{Lifting modular kernel relations}
\begin{lemma}\label{lem:prime-power-lifting}
  For a prime $p$ and a nonnegative integer budget $L$, a deterministic
  protocol of cost $O((n^2+nL)\log p)$ returns nonsingular exactly when
  $\det M\ne0$ and $v_p(\det M)\le L$.
\end{lemma}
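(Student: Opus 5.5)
We maintain a public integer matrix $C\in\mathbb Z^{n\times n}$ (initially $C=I$) and a public vector of exponents $e\in\mathbb Z_{\ge0}^n$. The invariant is that every column of $MC$ is divisible by $p^{e_j}$, and $\det C=\pm1$. We write $M'=MC\,\diag(p^{-e_1},\dots,p^{-e_n})$, an integer matrix with $\det M'=\pm\det M/p^{\sum e_j}$. Neither player knows $M'$ entirely, but each column of $M'$ is the sum of two integer shares up to a carry. Applying \eqref{eq:quotient-from-shares} to the shares $Ac_j$ and $Bc_j$ with $q=p^{e_j}$, Bob can learn the residue $M'\bmod p$ after Alice sends one quotient residue per coordinate. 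Initially Alice sends $A\bmod p$, costing $n^2\lceil\log p\rceil$ bits, and Bob holds $\bar M'=M\bmod p$.

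Each round proceeds as follows. Bob tests whether $\bar M'$ is singular over $\mathbb F_p$. If it is nonsingular, then $p\nmid\det M'$. Hence $\det M\ne0$ and $v_p(\det M)=\sum e_j$ is exactly the number of lifts performed so far. Since we stop once the count exceeds $L$, this count is at most $L$, and the protocol outputs nonsingular.

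If $\bar M'$ is singular, Bob takes a kernel vector $\bar w\ne0$ and normalizes it so that some coordinate $w_i=1$. He chooses the pivot $i$ among the support so that $e_i$ is minimal, and broadcasts $\bar w$ and $i$ using $O(n\log p)$ bits. Lift $\bar w$ to integers $w_j\in[0,p)$ with $w_i=1$. Replace column $i$ of $C$ by $\sum_j w_j p^{e_i-e_j}\cdot c_j$ restricted to the support. These coefficients are integers because $e_i\le e_j$ on the support, and the coefficient of $c_i$ is $1$, so the operation is unimodular. The new column of $MC$ equals $p^{e_i}\sum_j w_j M'_j$. Since $\sum_j w_j M'_j\equiv0\pmod p$, this column is divisible by $p^{e_i+1}$, and we set $e_i\gets e_i+1$.

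Bob must then refresh column $i$ of $\bar M'$. He computes his share $Bc_i$ exactly, since $C$ is public. Alice sends $\lfloor (Ac_i)_r/p^{e_i}\rfloor\bmod p$ for each $r$, costing $n\lceil\log p\rceil$ bits. By \eqref{eq:quotient-from-shares}, Bob then recovers the new column of $M'$ modulo $p$. Each round therefore costs $O(n\log p)$ bits and raises $\sum e_j$ by one. After $L+1$ lifts the protocol halts and outputs singular. This accounts for both the case $\det M=0$, where lifting never terminates since the invariant forces $p^{\sum e_j}\mid\det M$ and we never find $\bar M'$ nonsingular, and the case $v_p(\det M)>L$. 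The total cost is $O((n^2+nL)\log p)$, and the protocol is deterministic.

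Correctness rests on the identity $\det M'=\pm\det M\cdot p^{-\sum e_j}$, which follows from unimodularity and the column scalings. If the protocol outputs nonsingular after $s\le L$ lifts, then $\det M'\not\equiv0$, so $\det M\ne0$ and $v_p(\det M)=s$. Conversely, if $\det M\ne0$ with $v_p(\det M)\le L$, the invariant $p^{\sum e_j}\mid\det M$ bounds the number of lifts by $v_p(\det M)$. So within $L$ lifts the matrix becomes nonsingular mod $p$, and the protocol outputs nonsingular.

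The main obstacle is the pivot choice keeping the column operation integral, namely taking the minimal $e_i$ on the support. A secondary point is verifying that the shares $Ac_i,Bc_i$ are integers whose sum is divisible by $p^{e_i}$, so that \eqref{eq:quotient-from-shares} applies with $q=p^{e_i}$. Note that these shares may be negative or huge, which is harmless since local computation is free.
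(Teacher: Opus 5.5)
\textbf{There is an error in the step you flag as the main obstacle: your pivot rule is reversed.} With $e_i$ \emph{minimal} on the support you have $e_i-e_j\le 0$. The coefficients $w_jp^{e_i-e_j}=w_j/p^{e_j-e_i}$ are therefore not integers as soon as the support contains a column with a larger exponent. Your sentence ``these coefficients are integers because $e_i\le e_j$'' has the inequality the wrong way round.

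The paper's own example shows this. Take $M=\bigl(\begin{smallmatrix}2&1\\0&2\end{smallmatrix}\bigr)$ and $p=2$. After the first lift, $e=(1,0)$ and $\bar M'=\bigl(\begin{smallmatrix}1&1\\0&0\end{smallmatrix}\bigr)$, with kernel vector $(1,1)$. Your rule picks $i=2$ and replaces $c_2$ by $\tfrac12c_1+c_2$. So $C$ leaves $\mathbb Z^{n\times n}$ and the operation is not unimodular. The shares $Ac_i,Bc_i$ are also no longer integers, although you assert they are before applying \eqref{eq:quotient-from-shares}.

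The fix is to choose the pivot with \emph{maximal} $e_i$ on the support, as the paper does. The paper breaks ties by smallest index, so both parties can recompute the pivot from the support and the public exponents rather than sending it. With this rule:
\begin{itemize}
\item every $p^{e_i-e_j}$ is a nonnegative power of $p$;
\item $C$ stays integral with determinant one;
\item the new column of $MC$ is $p^{e_i}\sum_jw_jM'_j\in p^{e_i+1}\mathbb Z^n$.
\end{itemize}
In the example, $c_1$ becomes $c_1+2c_2$, matching the paper. The rest of your argument then is essentially the paper's proof: one quotient residue per coordinate, the identity $\det M'=\pm\det M/p^{\sum e_j}$, and the $O((n^2+nL)\log p)$ count. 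Your stopping rule spends one superfluous lift before halting, which is harmless.

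Integrality of $C$ is not actually essential. Any pivot with $w_i=1$ keeps $\det C=1$, and the new column of $M'$ is $(\sum_jw_jM'_j)/p\in\mathbb Z^n$ regardless of the pivot. The identity \eqref{eq:quotient-from-shares} also remains valid for real shares with floor quotients. But your write-up claims integrality and relies on it, so the key step is false as written, and taking the maximum is the clean repair.
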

\begin{proof}
  Maintain public columns $v_i\in\mathbb Z^n$ and exponents $s_i\ge0$.
  Writing $V=[v_1\ \cdots\ v_n]$, the invariant is
  \begin{equation}\label{eq:lift-invariant}
    \det V=1,\qquad Mv_i\in p^{s_i}\mathbb Z^n,\qquad
    R=MV\diag(p^{-s_1},\ldots,p^{-s_n})\in\mathbb Z^{n\times n}.
  \end{equation}
  Bob knows $R\bmod p$. Initially $V=I$, $s_i=0$, and Alice sends
  $A\bmod p$, costing $n^2\lceil\log p\rceil$ bits.

  If $R\bmod p$ is invertible, return nonsingular. Otherwise Bob chooses
  a nonzero $c\in\ker(R\bmod p)$. Among its nonzero coordinates choose
  $j$ with the largest $s_j$, breaking ties by the smallest index. Normalize
  $c_j=1$ and represent every coordinate in $\{0,\ldots,p-1\}$.
  If the lift budget is exhausted, return singular. Otherwise Bob sends
  $c$; both parties recover $j$ from its support and the public exponents.
  With $h=s_j$, replace
  \begin{equation}\label{eq:lift-update}
    v_j\ \longleftarrow\ \sum_{i:c_i\ne0}c_i p^{h-s_i}v_i,
    \qquad s_j\ \longleftarrow\ h+1.
  \end{equation}
  The exponents in the sum are nonnegative and the old $v_j$ has
  coefficient one. Thus the column operation preserves $\det V=1$.
  Using the old $R$, the new column satisfies
  $Mv_j/p^h=\sum_i c_iR_{:,i}\in p\mathbb Z^n$.
  The quotient protocol~\eqref{eq:quotient-from-shares}, with $q=p^{h+1}$, sends
  its new residue after division. Other columns of $R\bmod p$ are
  unchanged. Each lift costs two vectors of $n$ residues plus a control
  bit. Public integers in $V$ are computed locally from the transcript;
  their bit lengths are not communicated.

  After $t$ lifts, $\sum_i s_i=t$ and
  \begin{equation}\label{eq:lift-determinant}
    \det R=\det M/p^t.
  \end{equation}
  For nonzero $\det M$, the integral matrix $R$ is singular modulo $p$
  exactly while $t<v_p(\det M)$. Thus the protocol stops after precisely
  that many lifts if the budget allows, including the final check at
  $t=L$. If $\det M=0$, the implicit matrix stays singular over
  $\mathbb Q$ and modulo $p$. These observations prove correctness
  and the worst-case communication bound.
\end{proof}

For example, $M=\bigl(
  \begin{smallmatrix}2&1\\0&2
\end{smallmatrix}\bigr)$
and $p=2$ require two lifts. First divide the first column by $2$.
The two columns then agree modulo $2$, and~\eqref{eq:lift-update}
replaces $v_1$ by $e_1+2e_2$ and raises its exponent to two.
The implicit matrix becomes
$\bigl(
  \begin{smallmatrix}1&1\\1&2
\end{smallmatrix}\bigr)$,
which is invertible modulo $2$. Each lift removes one of the two
prime factors in $\det M=4$.

\subsection{Prime sampling and communication cost}
\begin{proof}[Proof of Theorem~\ref{thm:additive-upper-bound}]
  Put $T=k+1+\lceil\log(n+1)\rceil$ and choose a public uniform prime
  among the first $T$ primes. Run Lemma~\ref{lem:prime-power-lifting} with $L=3n$.
  The standard bound $p_T=O(T\log(T+1))$~\cite[Theorem~3 and its corollary]{rosser-schoenfeld}
  gives $O(\log T)$ bits per residue and hence $O(n^2\log T)$ total bits.
  For $\det M\ne0$, Hadamard's inequality and unique factorization give
  \[
    \sum_{i=1}^T v_{p_i}(\det M)
    \le\sum_{i=1}^T v_{p_i}(\det M)\log p_i
    \le\log|\det M|
    \le n(k+1+\tfrac12\log n)\le nT.
  \]
  The protocol errs only when the selected valuation exceeds $3n$;
  the fraction of such primes is at most $n/(3n+1)<1/3$.
  Singular matrices are always accepted. Sending $A$ gives the competing
  $n^2k$ upper bound, so choosing the cheaper protocol proves the result.
\end{proof}
\section{Additive lower bound}\label{sec:additive-lower-bound}

Our construction establishes the lower bound under an additional
ownership restriction. For each $n,k$, it fixes a public owner for
every matrix position, independently of the private inputs, and the
other player's entry is zero.

To prove Theorem~\ref{thm:additive-lower-bound}, we first isolate
the hard integer input and the digit encoding,
derive the stated parameters, and then prove the witness interface.
This separates the communication reduction from the algebra needed
to construct its signed weights.

\subsection{Hard integer inputs and their quantum witness}\label{sec:hard-integer-inputs}
Alice holds $P=[P_0\mid P_1]$ and Bob holds $Z=[Z_0\mid Z_1]$,
both of size $d\times(m+d)$. They hold separate, publicly assigned
rows of $F\in\mathbb Z^{m\times d}$. Define
\begin{equation}\label{eq:additive-core}
  C=
  \begin{pmatrix}I_m&F\\P_0-Z_0&P_1-Z_1
  \end{pmatrix},
  \qquad E=P_1-Z_1+(Z_0-P_0)F.
\end{equation}
Block elimination gives
\begin{equation}\label{eq:additive-core-rank}
  \rank C=m+\rank E.
\end{equation}
Every large entry lies in the bottom $d$ rows or the rightmost $d$
columns, which will make the digit encoding economical.

\begin{proposition}[Hard integer inputs]\label{prop:hard-integer-inputs}
  For every sufficiently large power of two $d$ and all integers
  $m,u\ge1$, there are finite local input sets in~\eqref{eq:additive-core}
  with nonnegative entries of at most
  \begin{equation}\label{eq:integer-height}
    h=O\!\left(\log d+(d+u)\log(u+1)+\log m\right)
  \end{equation}
  bits, on which singularity of $C$ requires $\Omega(md\log(u+1))$
  qubits at error $1/3$. The same lower bound holds when
  $\rank C\in\{m+d-1,m+d\}$ is promised.
\end{proposition}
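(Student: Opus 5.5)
The plan is to reduce the proposition to a witness interface and then apply Lemma~\ref{lem:quantum-witness-pairing}. By~\eqref{eq:additive-core-rank}, singularity of $C$ is equivalent to singularity of the $d\times d$ Schur complement $E=P_1-Z_1+(Z_0-P_0)F$, and the corank is preserved. Both sides are bilinear-free in the following sense: Alice's rows of $F$ and her block $P$, together with Bob's rows of $F$ and his block $Z$, determine $E$. The row contributions $(Z_0-P_0)F$ split by columns of $P_0,Z_0$ according to the owner of the matching row of $F$. I will therefore construct distributions $\alpha,\beta$ on the local inputs and a signed mass-zero matrix $W$ on $\alpha\times\beta$ with three properties:
\begin{enumerate}
\item $\langle f,W\rangle\ge c\|W\|_1$ for a constant $c>2/3$, after normalization, where $f$ is the sign of singularity;
\item the variation of $W$ on inputs with $\corank E\ge2$ is at most $\delta\le c'\|W\|_1$ with $c'$ small;
\item $\|T_W\|\le 2^{-\Omega(md\log(u+1))}\|W\|_1$.
\end{enumerate}
Lemma~\ref{lem:quantum-witness-pairing} then gives $2^t\|T_W\|\ge(c/2-1/3-2c'/3)\|W\|_1$, hence $t=\Omega(md\log(u+1))$ under the rank promise.

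For the construction I will follow the outline in the introduction. First, I obtain a signed integer residual on $E$ from differences of distributions on rational isometries, scaled to be integral. This residual is correlated with singularity and is supported mostly on corank at most one. Next, I take $F$ with independent uniform entries in a range of size about $u$, so its $m$ free rows are independent. The matrices $P_0,Z_0$ are drawn from a positive law on short products of integral shears. Its Fourier transform vanishes at modular frequencies that factor through few columns, and a scalar interpolation filter makes the derivatives vanish there too. Dilating by $\operatorname{lcm}(1,\dots,u)=2^{O(u)}$ extends the zero set to frequencies with small image group. This dilation, together with entries of size $u^{O(d)}$, $\operatorname{poly}(d)$ from the shear products and $\log m$ from summing $m$ products in $(Z_0-P_0)F$, accounts for the height bound~\eqref{eq:integer-height}. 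Nonnegativity is obtained by adding public offsets split between $P$ and $Z$, which cancel in the differences.

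The norm bound goes through a Fourier expansion of $T_W$ over the torus frequencies of the independent coordinates. I will prove a frequency dichotomy. If a frequency $\theta$ acting on the columns has a large image group modulo the dilation, the $m$ independent free rows of $F$ each contribute a Fourier factor bounded by $1-\Omega(1)$ per column of rank. This gives suppression $2^{-\Omega(md\log(u+1))}$, since uniform rows on $u$ values give $\log(u+1)$ per coordinate. Otherwise the frequency factors through few columns or has a small image, and the residual transform is tiny or zero by the construction. Summing the two cases bounds $\|T_W\|$ as in property~3.

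The main obstacle is this dichotomy. It requires exact Fourier zeros at every low-complexity frequency, which demands shears with exact prescribed actions on all small finite modules, while keeping heights at $O((d+u)\log(u+1))$. I would first check the zero set on cyclic modules $\Z/q$ for $q\le u$ and then extend to general small modules by the lcm dilation. I expect the delicate bookkeeping to be controlling the corank-$\ge2$ variation so that it stays below the correlation margin.
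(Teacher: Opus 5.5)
Your outer reduction is the one the paper uses. The proposition follows in two lines from Lemma~\ref{lem:quantum-witness-pairing} applied to the witness of Lemma~\ref{lem:additive-integer-witness}, which has correlation at least $5/6$, variation at most $1/40$ on coranks $\ge2$, and $\|T_W\|\le e^{1/2}(2/B)^{m/2}$ with $B=(u+1)^{s+1}$ and $s=d/512$; this gives $t\ge\frac m2(\log B-1)-5$.

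The gap is in your sketch of that witness, exactly at the dichotomy you flag. You take the rows of $F$ uniform on a range of size about $u$. The paper's Lemma~\ref{lem:frequency-dichotomy} needs each coordinate $f_j$ in~\eqref{eq:free-row-law} to range over at least $4B$ values (the term $V_j$) and to be uniform modulo every $t\le B$. Then $C_H(\theta)>1/B$ forces two good values of $V_j$ at distance $q_j<B$ in \emph{every} column, so every column of $\theta$ lies within $2/H$ of some $a_j/q_j$. Lemma~\ref{lem:finite-subgroup-small-ball} then bounds the column image group by $B$. The $\lambda_u$-dilation of Lemma~\ref{lem:scalar-cutoff} turns this into an $s$-column factorization, where the jets of $\nu$ vanish to order $L=\ell m$.

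With only about $u\ll B$ values per coordinate, a column of $\theta$ far from every small-denominator rational costs only a factor of about $1/u$. Hence frequencies that are near such rationals on most columns but unstructured on $\Theta(d)$ columns can keep $C_H(\theta)$ above any threshold $u^{-cd}$. Such frequencies lie near no point $Q/t$, and nothing in the construction makes $\widehat\nu$ small there to the required $2^{-\Omega(md\log(u+1))}$. The residual's zeros and the filter's jet cancellation hold only at the rational points $Q/t$ with $t\le B$.

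Correspondingly, the factor $\log(u+1)$ per coordinate does not come from the range of $F$. It comes from the threshold $B=(u+1)^{s+1}$, which the $\lambda_u$-dilation makes admissible. That is why $F$ has entries of size about $B^2=(u+1)^{O(d)}$, and it is the source of the $d\log(u+1)$ term in~\eqref{eq:integer-height}. In any case, your count of ``$1-\Omega(1)$ per column of rank'' would give only $2^{-\Omega(md)}$.

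The input law also needs repair. You place the positive shear law on $P_0,Z_0$ and attribute the Fourier zeros to it. But a positive law has transform $1$ at $Q=0$, and $Q=0$ factors through zero columns, so no positive law can have those zeros. In the paper the shear design lives inside the signed residual $E=2(U-V)$. There the exact orbit averages of Theorem~\ref{thm:finite-module-design}, combined with the orbit-preserving compact factor of balanced determinant (Lemma~\ref{lem:shear-congruence-kernel}), make the determinant-signed frame law vanish modulo each small odd $m$.

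The inputs are instead a one-time pad as in~\eqref{eq:masked-inputs}: $P$ is uniform, $Z_0=P_0+a$, and $Z_1=P_1+aF-E$, with mask coefficients $a_{it}$ whose transform is $\phi=\prod_i D_H^\ell$. Fourier transformation in $P,Z$ then splits $T_W$ into rank-one blocks of norm $|\widehat\nu(\Theta/q)|\prod_t\kappa_{\Theta/q}(u_t)^{1/2}$. This is what produces a per-row small-ball factor at all. Without the pad and the mask, your ``Fourier expansion of $T_W$'' has no reason to factor over the rows of $F$.

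Finally, nonnegativity is not just a matter of public offsets. The pad is uniform modulo $q$, and restricting it to the interval $I$ so that nothing wraps costs the factor $(q/D)^{h_0/2}\le e^{1/2}$ in~\eqref{eq:integer-witness-norm}.
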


\begin{lemma}[Integer witness]\label{lem:additive-integer-witness}
  For the input sets in Proposition~\ref{prop:hard-integer-inputs}, put
  $s=d/512$ and $B=(u+1)^{s+1}$. There are positive reference
  probabilities $\alpha,\beta$ and real weights $W$ such that
  \begin{equation}\label{eq:additive-witness-properties}
    \sum W=0,\quad \|W\|_1\le1,\quad
    \langle f,W\rangle\ge5/6,\quad
    \|T_W\|\le e^{1/2}(2/B)^{m/2},
  \end{equation}
  where $f=+1$ on singular $C$ and $-1$ otherwise. Moreover,
  $\sum_{\rank C\le m+d-2}|W_{xy}|\le1/40$.
  Every input in the full product of the reference supports satisfies
  the entry bound~\eqref{eq:integer-height}.
\end{lemma}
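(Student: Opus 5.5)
The plan is to build $W$ as a signed lift of a signed residual $\nu$ on the $d\times d$ Schur complement $E$. The $m$ rank-one terms in $(Z_0-P_0)F=\sum_{i\le m}(z_i-p_i)f_i^{T}$ act as independent noise. Here $p_i,z_i$ are the $i$-th columns of $P_0,Z_0$ and $f_i^T$ is the $i$-th row of $F$, owned by a publicly fixed player.

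\emph{Reference laws.} Let $\alpha,\beta$ be product laws. The entries of $p_i$ and $z_i$ are uniform on $\{0,\dots,B-1\}$, and the rows $f_i$ are drawn from a fixed positive law $\lambda$ on short integral vectors. The blocks $P_1,Z_1$ are shifted so that all entries are nonnegative. Heights then stay within $O(\log d+(d+u)\log(u+1)+\log m)$ bits once $B=(u+1)^{s+1}$ and the support of $\nu$ (entries of size $(u+1)^{O(d+u)}$) are fixed, which gives~\eqref{eq:integer-height}. Set
\[W(x,y)=\alpha_x\beta_y\,\frac{\nu(E(x,y))}{\mu(E(x,y))},\]
where $\mu$ is the law of $E$ under $\alpha\otimes\beta$. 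Equivalently, $W$ is the coupling that keeps the conditional law of all inputs given $E$ and reweights the marginal of $E$ to $\nu$. Then $\sum W=\nu(\text{all})=0$, $\|W\|_1=\|\nu\|_1\le1$, and $\langle f,W\rangle=\langle \mathrm{sgn}\text{-sing},\nu\rangle\ge5/6$. The mass of $W$ on $\rank C\le m+d-2$ is $|\nu|(\corank E\ge2)\le1/40$ by~\eqref{eq:additive-core-rank}. All four conditions except the norm therefore reduce to properties of $\nu$ alone.

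\emph{Norm bound.} I expand $T_W$ in additive characters of the entries of $P_0$ and $Z_0$ modulo $B$, treating $P_1,Z_1$ as Fourier-transparent shifts. The operator $T_W$ then becomes block diagonal in frequencies $\theta\in(\mathbb{Z}/B)^{d\times d}$, that is, $\T^{d\times d}$ restricted to the grid. Each block is a product over $i\le m$ of operators $K_i(\theta)$ coming from one free row $f_i$. In each block the entry is $\hat\nu(\theta)$ times $\prod_i \mathbb{E}_{\lambda}[\ldots]$. Squaring and applying Cauchy--Schwarz shows that each row contributes at most $(2/B)$ times a factor measuring how concentrated $\theta f_i$ is modulo $1$.

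This yields a frequency dichotomy. In the first case, $\theta$ is ``generic'': $\theta^{T}$ does not essentially factor through few columns with small image group. Then the $m$ free rows suppress the operator by $(2/B)^{m/2}$ from the uniform entries, and I use only $|\hat\nu|\le\|\nu\|_1\le1$. In the second case $\theta$ is ``exceptional'', and there I need $\hat\nu(\theta)$, together with enough derivatives, to vanish or to be at most $e^{1/2}(2/B)^{m/2}$.

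\emph{Constructing $\nu$ (main obstacle).} I would construct $\nu$ in four steps.
\begin{enumerate}
\item Start from $\nu_0=\mu_{\mathrm{sing}}-\mu_{\mathrm{nonsing}}$, where $\mu_{\mathrm{sing}}$ and $\mu_{\mathrm{nonsing}}$ are laws of $D\cdot\diag(1,\dots,1,\epsilon)\cdot D'$ with $\epsilon=0$ versus $\epsilon=1$. Here $D,D'$ are rational isometries with denominators of size $(u+1)^{O(1)}$, rescaled to integers. This gives a signed residual with constant correlation and corank at most one off a small set.
\item Convolve with a positive law on products of $O(s)$ integral shears. These act exactly as prescribed (uniformly) on the finite modules $\mathbb{Z}^d/N$ for $N$ with small image group. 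The transform of $\nu_0$ then vanishes at modular frequencies factoring through at most $s$ columns.
\item Apply a scalar interpolation filter of degree $O(u)$, so that the derivatives of $\hat\nu$ vanish there as well.
\item Dilate by $\operatorname{lcm}(1,\dots,u)$, which extends the zero set to all $\theta$ whose image group has exponent at most $u$.
\end{enumerate}
The delicate point is to verify that these operations keep $\langle f,\nu\rangle\ge5/6$ and the corank-$\ge2$ mass below $1/40$, while the exceptional set left by the dichotomy is contained in the zero set. I expect the interface with $s=d/512$ (enough free columns for the dichotomy) to be the hardest bookkeeping step, and I would first try to prove the dichotomy as a counting bound on $\theta$ with large $\prod_i\|\theta f_i\|$.
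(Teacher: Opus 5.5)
Your reduction of mass, variation, correlation and the corank-two penalty to properties of $\nu$ is fine. Your last two construction steps (the scalar filter and dilation by $\lambda_u$) are also the paper's. The norm bound, however, does not go through for the witness you define, for the following reasons.

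\emph{No block decomposition.} With $W=\alpha\otimes\beta\cdot\nu(E)/\mu(E)$ and independent $P_0,Z_0$ uniform on $\{0,\dots,B-1\}$, the kernel depends on $Z_0-P_0$ through the integer product $(Z_0-P_0)F$. It is therefore Toeplitz, not circulant, modulo $B$, and no block decomposition indexed by $(\Z/B)^{d\times d}$ exists.

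\emph{Mask too narrow.} The only smoothing in that direction is the triangular law of $Z_0-P_0$, of width $B$. It localizes a bad frequency no more finely than about $1/B$. The zeros of $\widehat\nu$ control it only within $\|\Delta\|_{\entry}\le(L+1)/(40K)$, and the height $K$ is far larger than $B$: it contains $\lambda_u$ and the filter nodes, which are at least $2B^4L^2$.

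\emph{Division by $\mu$.} Dividing by the non-constant density $\mu(E)$ replaces $\nu$ by $\nu/\mu$, which loses the exact zeros.

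\emph{Source of $2/B$.} The factor $2/B$ per row does not come from uniform entries.

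The paper instead uses an independent mask $a_{it}=\sum_{j\le\ell}U_{itj}$, with $U_{itj}$ uniform on $\{0,\dots,H-1\}$, $\ell=\lceil\log B\rceil$, and $H\ge\max\{8B(1+2dT),\,80d^2K/(L+1)\}$. It sets $Z_0=P_0+a$ and $Z_1=P_1+aF-E$ with $P$ uniform modulo $q$, so $W$ is a mixture of translations. Characters modulo $q$ then split $T$ exactly into rank-one blocks of norm $|\widehat\nu(\Theta/q)|\prod_t\kappa_{\Theta/q}(u_t)^{1/2}$, with $\kappa\le C_H+4^{-\ell}$. Lemma~\ref{lem:frequency-dichotomy} gives either $C_H(\theta)\le1/B$ or $\theta$ within $2/H$ of some $Q/M$ with $|\operatorname{im}_M Q|=M<B$. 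That step requires the free-row coordinates to be uniform modulo every $t\le B$, which your law $\lambda$ is not required to satisfy. Finally, $P$ is restricted to the reservoir $\{K,\dots,K+D-1\}^{h_0}$ to avoid wraparound, keeping Bob's uniform reference law, at cost $(q/D)^{h_0/2}\le e^{1/2}$. This is the source of the $e^{1/2}$ in the statement, and your plan has no counterpart to it.

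The construction of $\nu$ has a second gap. Averaging $\mu_{\mathrm{sing}}-\mu_{\mathrm{nonsing}}$ over a positive shear law only averages $\widehat\nu_0(Q/t)$ over shear orbits of $s$-column frames. A zero would require the signed law of the frames $EX$ modulo $t$ to vanish on every orbit. Nothing makes the frame data of two unrelated laws (singular versus nonsingular) coincide, let alone to precision $(2/B)^{m/2}$.

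The paper creates the zeros by a signed cancellation inside each orbit:
\begin{itemize}
\item $E=2(U-V)$ carries weight $-\det U\det V$, where $U=\tfrac12H_LPGT(X,Y)PH_R$.
\item $G$ is an exact design, $\mathbb{E}\rho_m(G)=J_m$ for all odd $m\le B$.
\item $H_R$ is a compact signed permutation that preserves every orbit (Lemma~\ref{lem:shear-congruence-kernel}) and has $\mathbb{E}\det H_R=0$.
\end{itemize}
Lemma~\ref{lem:isometry-rank-parity} makes this sign coherent with singularity, which gives variation one. Theorem~\ref{thm:two-shear-transversality} gives correlation at least $39/40$, and a dyadic dilation handles even moduli.

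Your target zero set after dilation, ``image group of exponent at most $u$'', is also not what the dichotomy produces. The bridge is Lemma~\ref{lem:scalar-cutoff}: an image group of order less than $(u+1)^{s+1}$ has a $\lambda_u$-multiple with at most $s$ generators, so $\lambda_uQ$ factors through $s$ columns. This is exactly where $B=(u+1)^{s+1}$ and $s=d/512$ enter.
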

The construction and all its ingredients are proved in the remaining
subsections, beginning with Section~\ref{sec:additive-witness-construction}.

\begin{proof}[Proof of Proposition~\ref{prop:hard-integer-inputs}]
  Let $p$ be the acceptance matrix of a $t$-qubit protocol. Applying
  Lemma~\ref{lem:quantum-witness-pairing}, including its promise penalty, gives
  \[
    \frac1{15}=\frac5{12}-\frac13-\frac{2}{3\cdot40}
    \le\langle p,W\rangle\le2^t e^{1/2}(2/B)^{m/2}.
  \]
  Hence $t\ge\tfrac m2(\log B-1)-5=\Omega(md\log(u+1))$.
  We keep the full witness; restricting it to the rank promise would
  not justify retaining its operator bound.
\end{proof}
\subsection{Local radix encoding}\label{sec:radix-encoding}

The encoding expands only the bottom $d$ rows and rightmost $d$
columns of~\eqref{eq:additive-core}. Its cost is proportional to $d$ times
the number of digits, rather than to the number of integer entries.

\begin{lemma}[Radix encoding]\label{lem:radix-encoding}
  Suppose $P,Z,F$ in~\eqref{eq:additive-core} have $h$-bit entries,
  where $h\ge1$. For every $k\ge1$ the parties can locally construct
  $k$-bit matrices $A',B'$ with fixed ownership and disjoint supports,
  of order
  \begin{equation}\label{eq:encoded-order}
    n'=m+d(6\lceil h/k\rceil-3)\le m+3d+6dh/k,
    \qquad \rank(A'+B')=n'-d+\rank E.
  \end{equation}
\end{lemma}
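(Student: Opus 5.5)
The plan is to realize the Schur complement $E=P_1-Z_1+(Z_0-P_0)F$ as the end result of a chain of block eliminations. Every pivot block in this chain will be an identity or a unit upper-triangular matrix with entries in $\{0,1,2^k-1\}$, and every private entry will be a single base-$2^k$ digit. Put $\ell=\lceil h/k\rceil$ and write each $h$-bit entry as $\sum_{i=1}^{\ell}x^{(i)}2^{k(i-1)}$ with digits $0\le x^{(i)}<2^k$. Only the bottom $d$ rows (carrying $P,Z$) and the rightmost $d$ columns (carrying $F$ and $P_1,Z_1$) hold large entries. Accordingly, the top-left $I_m$ and the zero pattern of $C$ stay as they are, and only those $d$ rows and $d$ columns get expanded. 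The rank identity then follows from repeated use of~\eqref{eq:general-schur}. Each auxiliary block of $d$ rows and $d$ columns has an invertible pivot, contributes exactly $d$ to the rank, and leaves the old Schur complement unchanged. So if there are $6\ell-4$ auxiliary blocks, we get $n'=m+d+d(6\ell-4)$ and $\rank(A'+B')=n'-d+\rank E$. Equivalently, the corank equals $\corank C$, which is what~\eqref{eq:encoded-order} states.

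\emph{Step 1: Horner with nonnegative constants.} A scalar $x=\sum_i x^{(i)}2^{k(i-1)}$ is recovered from auxiliary unknowns $t_\ell,\dots,t_1$ satisfying $t_i=2^kt_{i+1}+x^{(i)}$. The coefficient $2^k$ has $k+1$ bits, so it cannot appear directly. I would split it as $(2^k-1)+1$ and realize the recurrence with two block rows per level. Negative signs are avoided in the same way as in~\eqref{eq:difference-block} and in the proof of Lemma~\ref{lem:addition-column-partition}: a pattern $\begin{pmatrix}\ast&\ast\\ I&I\end{pmatrix}$ turns a sum of columns into a difference after elimination. This is exactly the ``two-digit row expansion'' invoked in Lemma~\ref{lem:deterministic-singularity}. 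In matrix form, the digit blocks of a row vector $r$ are placed in $\ell$ block columns, and a public staircase of $I$ and $(2^k-1)I$ blocks chains them. Eliminating the staircase, which is block unitriangular and hence invertible, collapses the digit blocks to the single block $r$. With $\ell$ digit levels and $\ell-1$ carries, each expansion costs about $2\ell-1$ blocks.

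\emph{Step 2: apply this to the three sources of height and count.} I would use (a) a row-type expansion for the bottom rows of Alice's $P$, (b) a second one for Bob's $Z$, with the minus sign produced by the $I/I$ difference trick rather than by negative entries, and (c) a column-type expansion of the rightmost $d$ columns carrying $F$ and $P_1,Z_1$. Each expansion uses $2\ell-1$ block rows or columns, giving $3(2\ell-1)=6\ell-3$ blocks of size $d$. One of these blocks is identified with the original bottom or right block of $C$, so the surviving Schur complement is $E$. The bookkeeping has to be arranged so that this reproduces~\eqref{eq:encoded-order} exactly; the inequality $\le m+3d+6dh/k$ is then immediate from $\ell\le h/k+1$.

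\emph{Step 3: ownership.} Every digit block is a digit of an entry of $P$, of $Z$, or of the publicly assigned rows of $F$, so it has a single owner. All constants ($0$, $1$, $2^k-1$) are public and can be assigned to Alice. The supports are therefore disjoint and $A'+B'$ is the encoded matrix.

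The main obstacle is the interaction between the product $(Z_0-P_0)F$ and the expansion. The row digits of $P_0,Z_0$ and the column digits of $F$ must meet so that elimination produces the full products and not just digit-by-digit products. The cross term must also be expanded once, not once per entry, since otherwise the order would scale with $md$. My first attempt would be to expand only the $d$ columns of the $[\,\cdot\mid P_1-Z_1]$ side and the $d$ rows of the bottom side, keeping $F$'s digits in the top $m$ rows against expanded column blocks. I would then verify by direct block elimination, first in the scalar case $m=d=1$, that the Schur complement is $P_1-Z_1+(Z_0-P_0)F$ with no spurious carry terms.
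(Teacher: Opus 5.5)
There is a genuine gap. You never exhibit the public gadget, and the layout you sketch cannot be realized with public entries in $\{0,1,2^k-1\}$. That layout is a separate Horner chain for each party, each with $\ell$ digit rows and $\ell-1$ carry rows ($2\ell-1$ blocks), used three times. Because the public entries are nonnegative, the row combination that survives elimination must put negative weight on rows other than that party's digit rows, and one carry row per level is not enough. Already for $\ell=2$ and $R=2^k$, any two columns with entries in $\{0,1,R-1\}$ that annihilate the weight vector $(1,R,-c)$ on rows $(x_0,x_1,t)$ are proportional. So these three rows never collapse to $x_0+Rx_1$. Inserting $I/I$ blocks for the signs costs extra blocks, so the total $3(2\ell-1)$ is not backed by any construction. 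The two-digit expansion you cite is in fact a shared gadget: its rows $a_0,b_0$, one auxiliary row, and $a_1,b_1$ carry weights $(1,-1,1,2,-2)$.

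The missing idea is a single public matrix $\Gamma_J$ on Alice's digit rows $p_j$, Bob's digit rows $z_j$, and carry rows $t_j$. It has one column for each relation $v_{p_j}+v_{z_j}=0$, $v_{t_j}+v_{z_j}=0$, and $(R-1)v_{p_j}+v_{t_j}+v_{z_{j+1}}=0$. Its left kernel is spanned by the vector with weights $R^j$ on $p_j,t_j$ and $-R^j$ on $z_j$. Thus Bob's rows serve both as the subtracted digits and as the negative partners needed to realize $R=(R-1)+1$. This uses $3J-1$ rows and $3J-2$ columns, with no separate $I/I$ block.

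The obstacle you flag, the cross term $(Z_0-P_0)F$, is also left open, but it disappears once you stop tracking $E$ through the elimination. Take an \emph{arbitrary} matrix, replace one data row by its digit rows and zero carry rows, and attach $\Gamma_J$ in $3J-2$ new columns. This raises the rank by exactly $3J-2$, so it preserves corank, and the transposed rule does the same for columns.

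First apply the row rule to the $d$ bottom rows of $C$ itself. Afterwards only $F$ has large entries, and the bottom entries of the last $d$ columns are single digits. Then apply the column rule to those $d$ columns. Put $F$'s digits in the $p_j$ columns and each bottom digit only in the weight-one $p_0$ column, so $P_1,Z_1$ are not expanded twice. Corank is preserved throughout, and \eqref{eq:additive-core-rank} gives the rank identity. The order is $(m+d)+d(3J-2)+d(3J'-2)$, which is \eqref{eq:encoded-order} with $J=J'=\lceil h/k\rceil$.
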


\begin{proof}
  First consider two binary digits. Let $a=a_0+2a_1$ and
  $b=b_0+2b_1$, with all four digits Boolean. Put
  \[
    M=\left(
      \begin{array}{c|cccc}
        a_0&1&0&0&1\\b_0&1&0&1&0\\0&0&0&1&1\\
        a_1&0&1&0&0\\b_1&0&1&0&1
    \end{array}\right).
  \]
  The row weights $(1,-1,1,2,-2)$ cancel the four public columns
  and leave $a-b$. These columns are independent, so elimination
  leaves precisely the original difference and four additional pivots.
  The same columns work when each digit is an entire row vector.

  For the general construction put $R=2^k$. To recover a difference
  from $J$ base-$R$ digits, use rows $p_j,z_j$ for $0\le j<J$ and
  auxiliary rows $t_j$ for $0\le j<J-1$. Form a public matrix
  $\Gamma_J$ by making one column from the coefficients of each
  of the following equations in row weights $v$:
  \begin{equation}\label{eq:radix-relations}
    \begin{aligned}
      v_{p_j}+v_{z_j}&=0 &&(0\le j<J),\\
      v_{t_j}+v_{z_j}&=0 &&(0\le j<J-1),\\
      (R-1)v_{p_j}+v_{t_j}+v_{z_{j+1}}&=0 &&(0\le j<J-1).
    \end{aligned}
  \end{equation}
  Thus $v_{z_j}=-v_{p_j}$, $v_{t_j}=v_{p_j}$, and
  $v_{p_{j+1}}=Rv_{p_j}$. The left kernel of $\Gamma_J$ is
  one-dimensional, with normalized vector $w_J$ having weights $R^j$
  on $p_j,t_j$ and $-R^j$ on $z_j$. There are $3J-1$ rows and
  $3J-2$ independent columns. Every entry of $\Gamma_J$ is
  $0$, $1$, or $R-1$: the multiplier $R$ is implemented as $(R-1)+1$.

  Replace a data row $\sum_jR^j(a_j-b_j)$ by the digit rows $a_j,b_j$
  and zero auxiliary rows, and attach $\Gamma_J$ with zeros in the
  other rows. Row elimination turns its public columns into an identity
  block and a zero row; choose $w_J^T$ as the remaining row operation.
  It recovers the original data. Clearing the other data against the
  identity block shows
  \begin{equation}\label{eq:radix-rank}
    \rank(\text{expanded matrix})
    =\rank(\text{original matrix})+3J-2.
  \end{equation}
  The expansion adds equally many rows and columns, so it preserves
  corank. Transposing gives the same rule for columns.

  Use $J$ digits for $P,Z$ and $J'$ digits for $F$. Apply the row
  rule to each bottom row of $C$, placing Alice's digits of $P$ in
  $p_j$ rows and Bob's digits of $Z$ in $z_j$ rows. Only the top
  block $F$ still has large entries. Next apply the transposed rule
  to each of the last $d$ original data columns. Put the top entries'
  digits in their $p_j$ columns. Put every bottom entry only in its
  $p_0$ column, whose weight is one; these entries already fit in
  $k$ bits. All other data positions are zero. Multiplication by
  $w_{J'}$ recovers the entire original column.

  The resulting order is
  \begin{equation}\label{eq:general-radix-order}
    (m+d)+d(3J-2)+d(3J'-2)=m+d(3J+3J'-3).
  \end{equation}
  Take $J=J'=\lceil h/k\rceil$ to obtain~\eqref{eq:encoded-order}.
  Each private digit has a position reserved for its owner, including
  the publicly assigned rows of $F$. Assign every public entry to
  Alice and zero to the other party. The maps are local, their
  supports are disjoint, and all entries are less than $R$.
  Corank preservation and~\eqref{eq:additive-core-rank} prove the rank identity.
\end{proof}

\subsection{Choosing the core size and scalar cutoff}
\begin{proof}[Proof of Theorem~\ref{thm:additive-lower-bound}]
  Take $m=\lfloor n/2\rfloor$. For $1\le k\le n\log n$, set
  $u=d/512$. Proposition~\ref{prop:hard-integer-inputs} gives $h=O(d\log n)$
  for $d\le n$, so Lemma~\ref{lem:radix-encoding} uses at most
  \[
    n'\le n/2+C_1\left(d+\frac{d^2\log n}{k}\right)
  \]
  rows. Let $d$ be the largest power of two not exceeding
  $\eta\min\{n,\sqrt{nk/\log n}\}$, where the absolute constant
  $\eta>0$ is chosen small enough for both ranges of this proof;
  in particular, $\eta\le1/6$ and $C_1(\eta+\eta^2)\le1/2$.
  Then $n'\le n$, and identity padding is local and preserves corank.
  For all sufficiently large $n$, $d$ exceeds the fixed construction
  threshold and $\log d=\Theta(\log n)$ uniformly in $k\ge1$.
  The communication bound is
  $\Omega(md\log d)=\Omega(n^{3/2}\sqrt{k\log n})$.

  For $k\ge n\log n$, choose $d$ as the largest power of two below
  $\eta n$ and put $u=\lfloor\eta k/\log(k+1)\rfloor$.
  For sufficiently large $n$, $d=\Theta(n)$ and
  $\log(u+1)=\Theta(\log k)$. Monotonicity of $\log t/t$ implies
  $n\log(k+1)\le3k$ throughout this range. Therefore
  \[
    d\log(u+1)\le3\eta k,\qquad
    u\log(u+1)\le\eta k.
  \]
  The $O(\log n)$ term in~\eqref{eq:integer-height} is $o(k)$.
  Our choice of $\eta$ also ensures $h\le k$ for sufficiently large $n$.
  The one-digit encoding has order $m+3d\le n$.
  After padding the lower bound is
  $\Omega(md\log(u+1))=\Omega(n^2\log k)=\Omega(n^2\log(nk))$.
  Both ranges imply the stated minimum, with one absolute constant.
  All reductions preserve the rank promise and fixed ownership.
\end{proof}

\begin{corollary}\label{cor:additive-long-entries}
  For every $k\ge n\log n$ and sufficiently large $n$,
  \[
    Q^*_{1/3}(\SINGplus_{n,k})=R^{\rm pub}_{1/3}(\SINGplus_{n,k})
    =\Theta(n^2\log(nk)).
  \]
  This also holds under the adjacent-rank promise.
  For $k=1$, the lower bound is $\Omega(n^{3/2}\sqrt{\log n})$.
\end{corollary}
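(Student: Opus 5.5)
The corollary combines Theorem~\ref{thm:additive-lower-bound} with the random-prime upper bound recalled at the start of Section~\ref{sec:additive-upper-bound}. The plan is to show that, for $k\ge n\log n$, the lower bound is $\Omega(n^2\log(nk))$, which matches an $O(n^2\log(nk))$ upper bound for both $R^{\rm pub}_{1/3}$ and $Q^*_{1/3}$.

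For the lower bound, compare the two terms of the minimum in Theorem~\ref{thm:additive-lower-bound}. When $k\ge n\log n$ we have $\sqrt{k\log n}\ge\sqrt n\log n$, so $n^{3/2}\sqrt{k\log n}\ge n^2\log n$. The second term needs more care because $\log(nk)$ can be as large as about $\log k$. Here I would check that $n^{3/2}\sqrt{k\log n}\ge n^2\log(nk)$, which is equivalent to $\sqrt{k\log n}\ge\sqrt n\log(nk)$. Since $k\ge n\log n$ forces $n\le k$, we get $\log(nk)\le2\log k$. The ratio $k/\log^2 k$ is increasing for large $k$, so it is at least its value at $k=n\log n$, namely about $n\log n/(\log n)^2\cdot$ const. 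This gives $k\log n\ge c\,n\log^2(nk)$ up to constants. Rather than tracking these constants, I would read the minimum as $\Omega(n^2\log(nk))$ directly from the second range in the proof of Theorem~\ref{thm:additive-lower-bound}, which already establishes exactly that bound for $k\ge n\log n$. Since $Q^*_{1/3}\le R^{\rm pub}_{1/3}$, the same lower bound holds in both models, and it holds under the adjacent-rank promise because Theorem~\ref{thm:additive-lower-bound} does.

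For the upper bound, the random-prime baseline gives $R^{\rm pub}_{1/3}\le O(n^2\log(nk))$ with no restriction on $k$. The proof is the Hadamard prime-counting argument and the Rosser--Schoenfeld prime bounds quoted at the start of Section~\ref{sec:additive-upper-bound}. Theorem~\ref{thm:additive-upper-bound} also suffices, since $\log(k+\log n)\le\log(nk)+O(1)$. Combining this with the lower bound gives $\Theta(n^2\log(nk))$ for both measures.

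For $k=1$, the minimum in Theorem~\ref{thm:additive-lower-bound} is $\min\{n^{3/2}\sqrt{\log n},\,n^2\log n\}=n^{3/2}\sqrt{\log n}$. The only possible obstacle is bookkeeping: making sure $\log(nk)$ and $\log k$ agree up to constants throughout the range $k\ge n\log n$, including superpolynomially large $k$. This holds because $n\le k$ implies $\log k\le\log(nk)\le2\log k$.
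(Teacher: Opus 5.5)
Your proposal is correct and is essentially the paper's proof: combine Theorem~\ref{thm:additive-lower-bound} with Theorem~\ref{thm:additive-upper-bound} (or, equivalently in this range, the random-prime baseline), using $\log k\le\log(nk)\le2\log k$ to identify $\log(k+\log n)$ with $\log(nk)$ up to constants. Your constant check also works as sketched: for $k\ge n\log n$ it gives $n^{3/2}\sqrt{k\log n}\ge\tfrac14 n^2\log(nk)$, so the minimum is $\Omega(n^2\log(nk))$ without consulting the internals of the lower-bound proof.
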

\begin{proof}
  Combine Theorems~\ref{thm:additive-upper-bound}
  and~\ref{thm:additive-lower-bound}.
  For $k\ge n\log n$, $\log(k+\log n)=\Theta(\log(nk))$.
\end{proof}
\subsection{Constructing the additive communication witness}\label{sec:additive-witness-construction}

We now prove Lemma~\ref{lem:additive-integer-witness}. For a signed measure $\nu$
on integer $d\times d$ matrices, write
\[
  \|\nu\|_1=\sum_E|\nu(E)|,\qquad
  \widehat\nu(\theta)=\sum_E\nu(E)
  e^{2\pi i\langle\theta,E\rangle},\qquad
  \langle\theta,E\rangle=\sum_{i,j}\theta_{ij}E_{ij}.
\]
Let $g_d(E)=+1$ for singular $E$ and $-1$ otherwise.
The entry height is $\max_{E\in\supp\nu,i,j}|E_{ij}|$.
The notation $\|\Delta\|_{\entry}$ denotes the sum of the absolute
values of all entries. Pushing forward a signed measure means
summing its signed weights over the fibers of a map; no positivity
is assumed.

\subsubsection{Structured residuals and scalar dilation}

The remaining subsections establish the following input to the construction.

\begin{lemma}[Structured residual]\label{lem:residual-interface}
  Let $d=2r$, where $r$ is a power of two, and suppose
  $r\ge160s$, $B\ge3$, and $L\ge1$. There is a finite rational signed
  measure $\nu_0$ on integer $d\times d$ matrices with mass zero,
  variation at most one, correlation
  $\langle g_d,\nu_0\rangle\ge403/440$, and height $K_0$ satisfying
  \begin{equation}\label{eq:residual-height}
    \log(K_0+1)=O(\log d+\log B+\log(L+1)).
  \end{equation}
  Its variation on coranks at least two is at most $1/40$. All Fourier derivatives of orders $0,1,\ldots,L$ vanish at $Q/t$
  whenever $t\le B$ and $Q$ factors through at most $s$ columns
  over $\Z/t$.
\end{lemma}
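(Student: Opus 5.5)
The proof will follow the outline in the proof-ideas paragraph. Start with a signed measure of correct sign correlation, then kill its Fourier transform at the prescribed frequencies by convolving with a positive measure that is invariant at those frequencies.

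\emph{Step 1 (base residual).} Take $\nu^{(1)}$ to be a normalized difference between two measures on integer $d\times d$ matrices.
\begin{itemize}
\item The positive part is supported on singular matrices of corank exactly one. For example, use $D\cdot\mathrm{diag}(1,\dots,1,0)\cdot D'$, where $D,D'$ are scaled rational isometries (integer orthogonal-like matrices built from Hadamard blocks, which is where $r$ a power of two enters).
\item The negative part is supported on nonsingular matrices built the same way with the last diagonal entry restored.
\end{itemize}
This gives mass zero and correlation close to one. A small corank-$\ge2$ leakage will appear after later convolutions, and it must be bounded by $1/40$.

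\emph{Step 2 (shears).} Let $\sigma$ be the positive law of a product of $O(s)$ random integral elementary shears $I+\lambda e_ie_j^T$. Choose the length and the ranges of $\lambda$ so that, for every $t\le B$, the image of $\sigma$ in $\mathrm{SL}_d(\Z/t)$ acts exactly uniformly on the relevant finite modules. Concretely, we need the action on $s$-tuples of columns, i.e.\ on $(\Z/t)^{d\times s}$ modulo column operations, to be exactly uniform. This is possible because $r\ge160s$ leaves room for independent free rows.

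Replace $\nu^{(1)}$ by $E\mapsto \sum\sigma(S)\nu^{(1)}(SE)$. Singularity and corank are preserved, and the variation does not grow.

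For a frequency $\theta=Q/t$ with $Q$ factoring through at most $s$ columns, we have $\langle Q/t,SE\rangle=\langle S^TQ/t,E\rangle$. Exact uniformity of $S^TQ$ over its orbit then reduces the transform to an average of $\widehat{\nu^{(1)}}$ over an orbit. This average is designed to vanish because the positive and negative parts have equal pushforwards modulo $t$ onto $s$-column projections. I will arrange that equality at the level of Step 1: both parts agree on all $s$-column marginals modulo $t$.

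\emph{Step 3 (derivatives).} To make the derivatives of orders up to $L$ vanish, convolve with a scalar interpolation filter. That is, take the pushforward under $E\mapsto \ell E$, where $\ell$ ranges over a signed measure $\rho$ on positive integers with $\sum\rho(\ell)\ell^j=0$ for $1\le j\le L$ and total weight one. Use a Lagrange-type choice supported on multiples of $\mathrm{lcm}(1,\dots,B)$ plus $1$, so that the frequency condition is preserved modulo every $t\le B$. Then the Taylor coefficients of $\widehat\nu$ at $Q/t$ cancel.

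The variation cost of $\rho$ is large. To control it, mix with weight $\epsilon$ and use a cheaper construction: vanishing to order $L$ can instead be obtained by taking products of $L+1$ independent Step-2 smoothings, each contributing one order of vanishing. The variation then stays at most one after normalizing. The heights multiply to $B^{O(L)}d^{O(1)}$, which matches~\eqref{eq:residual-height}.

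\emph{Main obstacle.} The hardest point is Step 2: building a short product of integral shears whose action modulo every $t\le B$ is exactly uniform, simultaneously for all $t$, with entries of height $\mathrm{poly}(d,B)$. I would first try this via the Chinese remainder theorem on $\mathrm{lcm}(1,\dots,B)$, taking each shear parameter uniform modulo that lcm. This makes the height $e^{O(B)}$, too large, so I would instead use uniform parameters in $[0,B^{2})$ and accept a tiny error, then correct the error exactly by adding a rational signed fix-up of negligible variation. The correlation constant $403/440$ leaves slack for this.
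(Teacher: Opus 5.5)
Your three-stage outline matches the paper's architecture: a signed residual correlated with singularity, exact shear averaging to get zeros at $Q/t$, and a scalar dilation filter for the derivatives. However, both places where the real difficulty lies are either left unconstructed or handled in a way that violates the height bound~\eqref{eq:residual-height}, and that bound is the entire content of the lemma.

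\emph{The exact zeros.} Exact uniformity of a shear action only reduces $\widehat\nu(Q/t)$ to a signed sum over orbit types of $EX\bmod t$, with $X$ an $s$-column matrix. This sum vanishes only if your singular and nonsingular parts have identical signed orbit-type laws modulo every $t\le B$, and you simply say you will arrange this. But a rationally singular matrix is singular modulo every $t$, so the nonsingular part must degenerate modulo all $t\le B$ in exactly matching fashion, as seen through $s$-column tests. The naive way to do this is to put $\operatorname{lcm}(1,\ldots,B)$ instead of $0$ in the last diagonal entry of the negative part. That gives exact agreement modulo every $t\le B$ even without shears, but it costs $\Theta(B)$ bits of height rather than $O(\log B)$.

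The paper's mechanism is different, and it is missing from your plan:
\begin{itemize}
\item It takes $E=2(U-V)$ for rational $J$-isometries $U,V$, with weight $-\det U\det V$.
\item The Dickson invariant (Lemma~\ref{lem:isometry-rank-parity}) turns this weight into the rank-parity label.
\item Lemma~\ref{lem:shear-congruence-kernel} shows that every orthogonal isometry modulo an odd prime power, even one of determinant $-1$, preserves each orbit of the alternating-shear group on $s$-column inputs. So after exact orbit averaging, the balanced determinant of the compact factor $H_R$ makes the signed frame law vanish identically.
\end{itemize}
Your ``tiny error plus signed fix-up'' for the exact design is also unsupported. A bounded-variation exact correction requires the error to be small uniformly over all irreducible constituents of the $s$- and $2s$-column actions at every modulus. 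That is what Theorem~\ref{thm:module-size-contraction} and the Fourier-algebra correction in Lemma~\ref{lem:positive-correction} supply for Theorem~\ref{thm:finite-module-design}. Even moduli are handled separately there, by dilation with $2^{\lfloor\log_2B\rfloor}$.

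\emph{The derivatives.} A filter supported on $\ell\equiv1\pmod{\operatorname{lcm}(1,\ldots,B)}$ again has $\log K_0=\Omega(B)$. Your fallback fails on three counts:
\begin{itemize}
\item If ``products'' means multiplying Fourier transforms, that is additive convolution of matrices, which destroys the singularity labels.
\item If it means composing shear averages, no derivative zeros are created. Derivatives at $Q/t$ involve the integer entries of $SE$, not only their residues.
\item A height of $B^{O(L)}$ is $O(L\log B)$ bits, not $O(\log d+\log B+\log(L+1))$. In the application $L=\ell m$, so this would wreck the digit budget.
\end{itemize}
The missing idea is Theorem~\ref{thm:modular-moment-filter}. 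It requires $\sum_{s\equiv a\bmod m}c_ss^k=0$ separately in each residue class for every $m\le B$, which imposes no congruence condition on the nodes. It realizes this by minimum-norm interpolation in the span of $s^je^{2\pi i\theta s}$, with $\theta$ of denominator at most $B$. This yields nodes of size $\mathrm{poly}(B,L,1/\epsilon)$ and tail variation at most $\epsilon$, hence only an additive height cost. The order-zero value then vanishes because the frame zeros also hold at every integer multiple of $Q/t$.
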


The proof is Corollary~\ref{cor:structured-residual}.
Here a factorization through $s$ columns means $Q=UV$ with
$U\in(\Z/t)^{d\times s}$ and $V\in(\Z/t)^{s\times d}$.
The derivative condition ensures that the transform remains small
near its zeros. The next scalar operation makes those zeros cover
every frequency with a small column image group.

\begin{lemma}[Scalar cutoff]\label{lem:scalar-cutoff}
  Let $a$ be a positive integer, and put
  $w(a)=\min_{p\text{ prime}}p^{v_p(a)+1}$.
  If a finite abelian group $G$ has $|G|<w(a)^{s+1}$, then $aG$
  can be generated by at most $s$ elements. In particular, for
  every integer $u\ge1$ and $\lambda_u=\operatorname{lcm}(1,\ldots,u)$,
  \begin{equation}\label{eq:scalar-generator-bound}
    |G|<(u+1)^{s+1}\quad\Longrightarrow\quad
    \lambda_uG\text{ has at most }s\text{ generators}.
  \end{equation}
  Moreover, $\log\lambda_u\le u\log(u+1)$.
\end{lemma}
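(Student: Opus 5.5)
The plan is to decompose $G$ into primary components and argue prime by prime. Write $G=\bigoplus_p G_p$, with $G_p\cong\bigoplus_{i=1}^{r_p}\Z/p^{e_{p,i}}$ and all $e_{p,i}\ge1$. The minimal number of generators of a finite abelian group is the maximum over $p$ of the $p$-rank $\dim_{\mathbb F_p}(H/pH)$. Since $aG=\bigoplus_p aG_p$, it suffices to show that for each prime $p$ the $p$-rank of $aG_p$ is at most $s$. Multiplication by $a=p^{v}a'$ with $p\nmid a'$ acts on $G_p$ as multiplication by $p^{v}$ composed with an automorphism. Hence
\[
  aG_p\cong\bigoplus_i \Z/p^{\max(e_{p,i}-v,0)},
\]
and its $p$-rank is the number of indices $i$ with $e_{p,i}\ge v+1=v_p(a)+1$.

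Suppose for contradiction that some prime $p$ has at least $s+1$ indices with $e_{p,i}\ge v_p(a)+1$. Each such cyclic factor has order at least $p^{v_p(a)+1}\ge w(a)$, so $|G|\ge|G_p|\ge w(a)^{s+1}$, contradicting the hypothesis. This proves the first assertion. The step needs care only in the primary decomposition bookkeeping and in the fact that the generator count is the maximal $p$-rank; I would cite the structure theorem for that fact. No step here is a real obstacle.

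For~\eqref{eq:scalar-generator-bound}, take $a=\lambda_u$. For each prime $p$, $v_p(\lambda_u)=\lfloor\log_p u\rfloor$ is the largest $e$ with $p^e\le u$. Hence $p^{v_p(\lambda_u)+1}>u$, that is, $p^{v_p(\lambda_u)+1}\ge u+1$. This holds also for primes $p>u$, where the valuation is $0$ and $p\ge u+1$. Thus $w(\lambda_u)\ge u+1$, and $|G|<(u+1)^{s+1}\le w(\lambda_u)^{s+1}$ gives the claim by the first part.

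Finally, $\lambda_u=\prod_{p\le u}p^{\lfloor\log_p u\rfloor}\le\prod_{p\le u}u\le u^{u}$, since there are at most $u$ primes up to $u$. Therefore $\log\lambda_u\le u\log u\le u\log(u+1)$. For $u=1$ both sides vanish.
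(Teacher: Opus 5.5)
Your proof is correct and follows the paper's argument: count the cyclic factors in each primary component that survive multiplication by $a$, use that the number of generators is the maximum of the primary generator counts, and observe $p^{v_p(\lambda_u)+1}>u$ for every prime $p$. The only difference is that the paper bounds $\lambda_u$ via $\lambda_u\mid u!$ rather than your product $\prod_{p\le u}p^{\lfloor\log_p u\rfloor}\le u^u$. One small slip: at $u=1$ the right side $u\log(u+1)=\log 2$ does not vanish (only $\log\lambda_1$ and $u\log u$ do), though the inequality still holds trivially there.
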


\begin{proof}
  Write the $p$-primary part of $G$ as $\bigoplus_i\Z/p^{e_i}$.
  Multiplication by $a$ leaves a nonzero cyclic factor precisely when
  $e_i>v_p(a)$. If at least $s+1$ factors survived, then
  $|G|\ge p^{(v_p(a)+1)(s+1)}\ge w(a)^{s+1}$, a contradiction.
  The number of generators of a finite abelian group is the maximum
  of its primary generator counts: pad each list to the same length
  and combine them by the Chinese remainder theorem. This proves the
  first claim. Since $p^{v_p(\lambda_u)+1}>u$ for every prime $p$,
  $w(\lambda_u)\ge u+1$. Finally $\lambda_u\mid u!$.
\end{proof}

Take $B=(u+1)^{s+1}$ and push $\nu_0$ forward under
$E\mapsto\lambda_uE$, obtaining $\nu$. This preserves mass,
variation, and rational rank. Its height $K=\lambda_uK_0$ obeys
\begin{equation}\label{eq:dilated-height}
  \log(K+1)=O(\log d+\log B+\log(L+1)+u\log(u+1)).
\end{equation}
Let $\operatorname{im}_t Q$ denote the column image group of $Q$
over $\Z/t$. If $t\le B$ and $|\operatorname{im}_t Q|<B$,
Lemma~\ref{lem:scalar-cutoff} shows that $\lambda_uQ$ factors through
$s$ columns: take generators of its image as the columns of $U$.
The chain rule and Lemma~\ref{lem:residual-interface} therefore
make every derivative through order $L$ vanish at $Q/t$.
Taylor's theorem gives
\begin{equation}\label{eq:residual-taylor-bound}
  |\widehat\nu(Q/t+\Delta)|\le2^{-(L+1)}
  \quad\text{when}\quad
  \|\Delta\|_{\entry}\le\frac{L+1}{40K}.
\end{equation}
Indeed, the remainder is at most
$(2\pi K\|\Delta\|_{\entry})^{L+1}/(L+1)!$, and
$(L+1)!\ge((L+1)/e)^{L+1}$ with $2\pi e<20$.

\subsubsection{A dichotomy for torus frequencies}

For $v\in\T^d=(\mathbb R/\Z)^d$, set
$\|v\|_{\T,\infty}=\max_i\operatorname{dist}(v_i,\Z)$.
We first need an elementary estimate for finite subgroups.

\begin{lemma}\label{lem:finite-subgroup-small-ball}
  If $G\le\T^d$ has order $M$ and $0\le\rho<1/4$, then
  \[
    \Pr_{x\sim\Unif(G)}[\|u+x\|_{\T,\infty}\le\rho]
    \le\max\{4\rho,1/M\}
    \qquad(u\in\T^d).
  \]
\end{lemma}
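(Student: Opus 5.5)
We prove the bound by a packing argument over cosets of a suitable subgroup of $G$. Let $S=\{x\in G:\|u+x\|_{\T,\infty}\le\rho\}$; we must show $|S|\le\max\{4\rho M,1\}$. If $|S|\le1$ the bound $1/M$ applies, so assume $|S|\ge2$ and fix $x_0\in S$. For any $x\in S$ the difference $y=x-x_0\in G$ satisfies $\|y\|_{\T,\infty}\le2\rho<1/2$, by the triangle inequality for the torus distance in each coordinate. Write $D=\{y\in G:\|y\|_{\T,\infty}\le2\rho\}$. Then $S-x_0\subseteq D$, so $|S|\le|D|$, and it suffices to bound $|D|\le 4\rho M$ whenever $D$ contains a nonzero element.

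Use the cyclic structure generated by a single nonzero short element. Take $y\in D\setminus\{0\}$ with order $N$ in $G$. Its cyclic subgroup $\langle y\rangle$ is a finite subgroup of $\T^d$. Some coordinate $y_i$ is a nonzero point of $\T$, and the projection of $\langle y\rangle$ to that coordinate is a cyclic subgroup of $\T$ of some order $N_i\ge2$, namely $\frac1{N_i}\Z/\Z$. The key one-dimensional fact is that in the group $\frac1{N_i}\Z/\Z$, the number of points within distance $r<1/2$ of a given point is at most $2rN_i+1$.

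This one-dimensional fact is too weak by itself to beat $4\rho M$ with an additive $+1$. The cleaner route is to count $S$ directly along a coordinate projection. Choose a coordinate $i$ on which the projection $\pi_i(G)\le\T$ is nontrivial, and write $\pi_i(G)=\frac1{N}\Z/\Z$ with kernel $H$, so that $|H|=M/N$. Every $x\in S$ has $\pi_i(u+x)$ within $\rho$ of $0$. The number of values in the coset $\pi_i(u)+\frac1N\Z$ lying in an interval of length $2\rho$ is at most $2\rho N+1$, and each value has $|H|$ preimages. This gives
\[
|S|\le(2\rho N+1)\frac MN .
\]
This is insufficient when $N$ is small. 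The remedy is to iterate over coordinates, restricting to $H$ and the shifted subgroup, and this is where I expect the main obstacle.

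To get around it, I would first try induction on $d$ via a fiber argument. Split $S$ into fibers over $\pi_i$; each fiber is a translate of $H\cap(\text{ball})$, to which induction on the remaining coordinates applies with order $|H|$. This yields $|S|\le(\#\text{admissible values})\cdot\max\{4\rho|H|,1\}$. If the fibers hit the case $4\rho|H|\ge1$ we get $4\rho|H|(2\rho N+1)$. We need this to be at most $4\rho M=4\rho|H|N$, which amounts to $2\rho N+1\le N$. That inequality holds since $\rho<1/4$ and $N\ge2$.

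In the other case the fibers contribute at most $1$ each, so $|S|\le 2\rho N+1$. Here we must compare against $\max\{4\rho M,1\}$ using $M\ge N$. If $2\rho N\ge1$, then $2\rho N+1\le4\rho N\le4\rho M$. If $2\rho N<1$, then at most one value of the coset lies in the open interval, apart from boundary ties. That shows $|S|\le1$ once the count of coset points in a closed interval of length $2\rho<1/N$ is taken to be at most $1$. Checking this boundary counting, together with the case split between $1/M$ and $4\rho$ across the induction, is the delicate step.
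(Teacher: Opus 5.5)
Your final argument is correct. It inducts on $d$, fibers $S$ over a nontrivial coordinate projection $\pi_i(G)=\tfrac1N\Z/\Z$ with at most $\lfloor2\rho N\rfloor+1$ admissible values, and splits on whether the inductive fiber bound is $4\rho|H|$ or $1$. This is essentially the paper's proof in recursive form: the paper fibers along the subgroups $G_i$ of elements whose first $i$ coordinates vanish, and uses the same dichotomy. Either some index $q_i\ge1/(2\rho)$ gives conditional probability $2\rho+1/q_i\le4\rho$, or every step admits at most one continuation and the product gives $1/M$.

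The boundary count you flag as delicate is immediate, because distinct coset points are at circular distance at least $1/N>2\rho$. You should drop the opening difference-set reduction, since $|D|\le4\rho M$ can fail: for $G=\tfrac13\Z/\Z$ and $\rho=1/6$ one has $|D|=3>2$.
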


\begin{proof}
  Let $G_i$ be the subgroup of elements whose first $i$ coordinates are zero,
  and put $q_i=|G_{i-1}|/|G_i|$. Conditional on any attainable
  first $i-1$ coordinates, coordinate $i$ is uniform on a coset of
  the cyclic subgroup of order $q_i$. An arc of length $2\rho$
  contains at most $\lfloor2\rho q_i\rfloor+1$ such points.
  If every $q_i<1/(2\rho)$, there is at most one continuation at
  each step, giving probability at most $\prod_iq_i^{-1}=1/M$.
  Otherwise, a coordinate with $q_i\ge1/(2\rho)$ has conditional
  probability at most $2\rho+1/q_i\le4\rho$. Other restrictions
  can only decrease the probability. When $\rho=0$, the event
  contains at most one group element.
\end{proof}

Let the coordinates of a random column $f\in\Z^d$ be independent,
with
\begin{equation}\label{eq:free-row-law}
  f_j=V_j+\sum_{t=1}^B R_{jt},\qquad
  V_j\sim\Unif\{0,\ldots,4B-1\},\qquad
  R_{jt}\sim\Unif\{0,\ldots,t-1\}.
\end{equation}
All variables are independent. Each coordinate is uniform modulo
every $t\le B$ and lies in $[0,T]$, where
$T=4B-1+B(B-1)/2$. For $\theta\in\T^{d\times d}$ put
\[
  C_H(\theta)=\sup_{v\in\T^d}
  \Pr[\|v+\theta f\|_{\T,\infty}\le1/H].
\]

\begin{lemma}[Frequency dichotomy]\label{lem:frequency-dichotomy}
  If
  \begin{equation}\label{eq:mask-width}
    H\ge\max\{2,\,8B(1+2dT),\,80d^2K/(L+1)\},
  \end{equation}
  then every $\theta\in\T^{d\times d}$ satisfies either
  $C_H(\theta)\le1/B$ or $|\widehat\nu(\theta)|\le2^{-(L+1)}$.
\end{lemma}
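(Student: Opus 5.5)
Assume $C_H(\theta)>1/B$; the plan is to show that $\theta$ lies within entry distance $(L+1)/(40K)$ of some $Q/t$ with $t\le B$ and $|\operatorname{im}_t Q|<B$. Then \eqref{eq:residual-taylor-bound} gives $|\widehat\nu(\theta)|\le2^{-(L+1)}$. Fix $v$ attaining (or nearly attaining) the supremum, so that $\Pr[\|v+\theta f\|_{\T,\infty}\le1/H]>1/B$.

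\textbf{Step 1: extract a finite subgroup.} The free column $f$ in~\eqref{eq:free-row-law} contains a uniform component $V$ on $\{0,\ldots,4B-1\}^d$. I would look at differences $f-f'$ of two independent samples. The event for $f$ and the event for $f'$ then give $\|\theta(f-f')\|_{\T,\infty}\le2/H$ with probability $>1/B^2$. A cleaner route is to condition on $R$, average over $V$, and apply pigeonhole to the coordinate steps $e_j$. This should show that for each column $\theta e_j$ some multiple $q\theta e_j$ with $1\le q\le$ about $B$ lies within $O(1/H)$ of zero, so that $\theta$ is near a rational matrix.

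The tool for this is the small-ball-versus-subgroup bound of Lemma~\ref{lem:finite-subgroup-small-ball}, applied as follows. Approximate $\theta$ by $\theta'$ whose columns generate a finite subgroup $G=\theta'\Z^d\le\T^d$. By the uniformity of each $f_j$ modulo every $t\le B$, the image $\theta' f$ is then uniform on a coset of $G$, or close to it. Concretely, I would write $\theta=Q/t+\Delta$, where $t\le B$ is a common denominator of the approximating rationals. Because $f_j$ is uniform modulo $t$, the variable $Qf/t$ is exactly uniform on $\operatorname{im}_tQ$ (viewed inside $\T^d$). Meanwhile $\|\Delta f\|_{\T,\infty}\le\|\Delta\|_{\entry}T$, which is small relative to $1/H$ by the second term of~\eqref{eq:mask-width}; this is the role of the factor $8B(1+2dT)$.

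\textbf{Step 2: apply the dichotomy.} Write $M=|\operatorname{im}_tQ|$. Lemma~\ref{lem:finite-subgroup-small-ball}, used with radius $\rho=1/H+\|\Delta\|T\le 1/(4B)$, gives probability at most $\max\{4\rho,1/M\}$. Since $4\rho\le1/B$, the assumption $C_H>1/B$ forces $M<B$. Combined with $t\le B$, this puts $Q/t$ inside the zero set. It remains to check that $\|\Delta\|_{\entry}\le d^2\cdot\max|\Delta_{ij}|\le(L+1)/(40K)$, which follows if each $|\Delta_{ij}|\le2/H$ and $H\ge80d^2K/(L+1)$; this is the third term of~\eqref{eq:mask-width}.

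\textbf{Main obstacle.} The hard step is producing the rational approximation $Q/t$ with denominator $t\le B$ and entrywise error $O(1/H)$ from the mere hypothesis $C_H(\theta)>1/B$. For this I would use the uniform part $V_j$ on an interval of length $4B$, one coordinate at a time. Fix the other coordinates, so that $f_j$ ranges over roughly uniform shifts. If the column $\theta e_j$ had no multiple $q\le B$ close to $0$ (in the Dirichlet sense at scale $1/H$), then the points $\theta e_j x$ for $x<4B$ would be spread out, and the small-ball probability along that coordinate would be $<1/B$. Making this precise requires a Dirichlet/pigeonhole argument for the vector $\theta e_j\in\T^d$. I would also need to combine the per-column denominators into one common $t\le B$; here I expect to instead take $t$ to be the order of the approximate subgroup generated and bound it through the small-ball probability itself, again via Lemma~\ref{lem:finite-subgroup-small-ball}. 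This step is the one I am least sure of.
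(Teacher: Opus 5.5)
Your outline is the paper's proof: a per-column pigeonhole in $V_j$, exact uniformity of $\theta_0 f$ on the subgroup $G$ generated by the columns of the rational approximation, Lemma~\ref{lem:finite-subgroup-small-ball} to force $|G|<B$, and then \eqref{eq:residual-taylor-bound}. The resolution you guess at the end is the right one. However, Step~1 as written is circular. A common denominator $t$ of the per-column rationals, such as their lcm, can be of order $B^2$, so you cannot invoke uniformity of $f_j$ modulo $t$ before $t$ is bounded.

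What you need is weaker and available immediately:
\begin{itemize}
\item Since the probability exceeds $1/B$, conditioning on every variable except $V_j$ gives some fiber with more than four, hence at least five, good values in $\{0,\ldots,4B-1\}$.
\item Two consecutive good values differ by some $1\le q_j<B$. Subtracting their two conditions gives $\|q_j\theta e_j\|_{\T,\infty}\le 2/H$. So column $j$ of $\theta$ is within $2/H$ entrywise of some $a_j/q_j$.
\item Each $f_j$ is uniform modulo its own $q_j$, so $f_j a_j/q_j$ is uniform on the cyclic group generated by that column. By independence, $\theta_0 f$ is exactly uniform on $G$, whatever $|G|$ is.
\item Only after the lemma gives $M=|G|<B$ do you choose the denominator. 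Since $M$ annihilates $G$, you get $\theta_0=Q/M$ with $Q$ integral and $\operatorname{im}_M Q\cong G$, and you take $t=M<B$.
\end{itemize}

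There is also a quantitative slip. Bounding $\|\Delta f\|_{\T,\infty}$ by $\|\Delta\|_{\entry}T$ only gives a radius of $(1+2d^2T)/H$. The hypothesis $H\ge 8B(1+2dT)$ does not make this at most $1/(4B)$. Also, $\Delta f$ is not small compared with $1/H$; what must be small compared with $1/B$ is the total radius. Use the row bound $\max_i\sum_j|\Delta_{ij}|f_j\le 2dT/H$. Then the event forces $v+\theta_0 f$ into a ball of radius $\rho=(1+2dT)/H\le 1/(8B)$, so $4\rho\le 1/(2B)$ and hence $M<B$. The entrywise $\ell^1$ norm is needed only at the end, where $\|\Delta\|_{\entry}\le 2d^2/H\le (L+1)/(40K)$ by the third term of \eqref{eq:mask-width}.
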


\begin{proof}
  Suppose some translated ball has probability greater than $1/B$.
  For each $j$, condition on all variables except $V_j$. Some fiber
  has at least five good values among $0,\ldots,4B-1$. Two consecutive
  good values have a gap $1\le q_j<B$. Subtracting their conditions
  shows that column $j$ of $\theta$ is within $2/H$ entrywise, on
  the torus, of a rational column $a_j/q_j$. Thus
  $\theta=\theta_0+\Delta$, where $|\Delta_{ij}|\le2/H$.

  Let $G$ be the subgroup of $\T^d$ generated by the columns of
  $\theta_0$, and put $M=|G|$. The coordinates $f_j$ are independent
  and uniform modulo their respective $q_j$, so $\theta_0f$ is
  uniform on $G$. On the original event it belongs to a translated
  ball of radius $\rho=(1+2dT)/H$. Lemma~\ref{lem:finite-subgroup-small-ball}
  bounds the probability by $\max\{4\rho,1/M\}$.
  Since $4\rho\le1/(2B)$, we must have $M<B$.

  Every element of $G$ is annihilated by $M$, so write
  $\theta_0=Q/M$. The numerator map identifies $G$ with
  $\operatorname{im}_M Q$, whose order is $M<B$.
  Consequently~\eqref{eq:residual-taylor-bound} applies, since
  $\|\Delta\|_{\entry}\le2d^2/H\le(L+1)/(40K)$.
  This proves the claim, including $M=1$.
\end{proof}

\subsubsection{Fourier decomposition of the communication operator}

Take $m$ independent rows $f_t^T$ with law~\eqref{eq:free-row-law}
to form $F$. Fix any public partition of these rows between the
parties, and denote the row law by $\pi$. Put
\begin{equation}\label{eq:mask-coefficients}
  \ell=\lceil\log B\rceil,\quad L=\ell m,\quad
  a_{it}=\sum_{j=1}^{\ell}U_{itj},\quad
  U_{itj}\sim\Unif\{0,\ldots,H-1\},\quad
  a_{\max}=\ell(H-1),
\end{equation}
with independent coefficients. Let
\begin{equation}\label{eq:mask-kernel}
  \begin{split}
    D_H(v)&=H^{-1}\sum_{a=0}^{H-1}e^{2\pi iav},\qquad
    \phi(v)=\prod_{i=1}^dD_H(v_i)^\ell,\\
    \kappa_\theta(u)&=\mathbb E_f|\phi(u+\theta f)|^2.
  \end{split}
\end{equation}
The geometric-sum estimate gives
$|D_H(v)|\le\min\{1,(2\operatorname{dist}(v,\Z)H)^{-1}\}$.
Outside the box in the definition of $C_H$, at least one squared
factor is at most $4^{-\ell}$. Hence
\begin{equation}\label{eq:mask-kernel-bound}
  \sup_u\kappa_\theta(u)\le C_H(\theta)+4^{-\ell}.
\end{equation}

First work modulo any integer $q\ge1$. Choose $P$ uniformly in
$(\Z/q)^{d\times(m+d)}$ and push the product of its law, the laws
of $F,a$, and the signed measure $\nu(E)$ to the pair of inputs
specified by
\begin{equation}\label{eq:masked-inputs}
  Z_0=P_0+a,\qquad Z_1=P_1+aF-E\pmod q.
\end{equation}
Call this signed matrix $W_{\circ,q}$. With $h_0=d(m+d)$, use
the positive reference laws
\[
  \alpha_\circ(F_A,P)=\pi^{\otimes m_A}(F_A)q^{-h_0},\qquad
  \beta_\circ(F_B,Z)=\pi^{\otimes m_B}(F_B)q^{-h_0},
\]
where $m_A+m_B=m$. Let $T_{\circ,q}$ be the normalized matrix
$D_{\alpha_\circ}^{-1/2}W_{\circ,q}D_{\beta_\circ}^{-1/2}$.

Unitary Fourier transformation in the $P,Z$ coordinates splits
this operator into blocks indexed by
$u_1,\ldots,u_m\in(q^{-1}\Z/\Z)^d$ and
$\Theta\in(\Z/q)^{d\times d}$. Translation by $(a,aF-E)$
contributes the phase
\[
  \exp\left(2\pi i\left[
      \sum_t a_t^T(u_t+\Theta f_t/q)-\langle\Theta,E\rangle/q
  \right]\right).
\]
After averaging over the independent coefficients and $E$, the block
has entries
\[
  \widehat\nu(-\Theta/q)
  \prod_{t=1}^m\sqrt{\pi(f_t)}\,
  \phi(u_t+\Theta f_t/q).
\]
This product separates between Alice's and Bob's rows. Each block
therefore has rank at most one, with norm
$|\widehat\nu(\Theta/q)|\prod_t\sqrt{\kappa_{\Theta/q}(u_t)}$.
The sign can be dropped because $\nu$ is real. It follows that
\begin{align}
  \|T_{\circ,q}\|
  &=\max_{\Theta\bmod q}|\widehat\nu(\Theta/q)|
  \bigl(\max_u\kappa_{\Theta/q}(u)\bigr)^{m/2}\notag\\
  &\le\max\{2^{-(L+1)},(B^{-1}+4^{-\ell})^{m/2}\}
  \le(2/B)^{m/2}.\label{eq:cyclic-witness-norm}
\end{align}
The inner maximum in the equality is over the period-$q$ grid.
The inequality uses Lemma~\ref{lem:frequency-dichotomy}, \eqref{eq:mask-kernel-bound},
$|\widehat\nu|\le\|\nu\|_1\le1$, $\kappa\le1$,
$4^{-\ell}\le B^{-2}$, and $L=\ell m$.

\subsubsection{Restricting the witness to bounded integer inputs}

Choose integer parameters
\begin{equation}\label{eq:integer-reservoir}
  \begin{split}
    D&=\left\lceil\max\{h_0(ma_{\max}T+2K),1\}\right\rceil,\\
    q&=D+ma_{\max}T+2K,\qquad I=\{K,\ldots,K+D-1\}.
  \end{split}
\end{equation}
Choose $P$ uniformly in $I^{h_0}$ and use~\eqref{eq:masked-inputs}
over the integers. Every entry of $P,Z$ lies in $[0,q)$:
the shift $aF$ is nonnegative and at most $ma_{\max}T$, and
$|E_{ij}|\le K$. Thus there is no modular wrap. Let $W$ be the
resulting signed matrix. It is the restriction of $W_{\circ,q}$
to $P\in I^{h_0}$, multiplied by $(q/D)^{h_0}$.

Normalize Alice's side by her actual probability law
$\alpha(F_A,P)=\pi^{\otimes m_A}(F_A)D^{-h_0}$ and keep
$\beta=\beta_\circ$ on Bob's side. Row restriction cannot increase
the operator norm, and the change of normalization gives
\begin{equation}\label{eq:integer-witness-norm}
  \|T_W\|\le(q/D)^{h_0/2}\|T_{\circ,q}\|
  \le e^{1/2}(2/B)^{m/2},
\end{equation}
since $q/D=1+(ma_{\max}T+2K)/D\le1+1/h_0$.

Pushing forward a signed measure preserves its mass and cannot
increase its variation. Hence $W$ has mass zero and variation at
most one. For every sampled pair, substitution in~\eqref{eq:additive-core}
recovers the residual $E$ exactly. By~\eqref{eq:additive-core-rank}, its
singularity label is $g_d(E)$, and consequently
\[
  \langle f,W\rangle=\langle g_d,\nu\rangle
  \ge403/440>5/6.
\]
The residual's exceptional variation is also preserved as an upper
bound: forming $W$ averages over probability laws, recovers $E$ exactly,
and can only decrease variation by the triangle inequality. Thus
$\sum_{\rank C\le m+d-2}|W_{xy}|\le1/40$.
The full product of the two reference supports consists of valid
integer inputs. The label $f$ is defined there even on pairs that
receive zero witness weight.

\begin{proof}[Completion of Lemma~\ref{lem:additive-integer-witness}]
  Choose
  \[
    r=d/2,\qquad s=d/512,\qquad B=(u+1)^{s+1},\qquad
    \ell=\lceil\log B\rceil,\qquad L=\ell m.
  \]
  For sufficiently large powers of two $d$, these are integer
  parameters with $r\ge160s$. We have
  $\log B=\Theta(d\log(u+1))$, so~\eqref{eq:dilated-height} gives
  $\log(K+1)=O(\log d+(d+u)\log(u+1)+\log m)$.
  Also $\log(T+1)=O(\log B)$.
  Take the least integer $H$ satisfying~\eqref{eq:mask-width}.
  Equations~\eqref{eq:mask-width}, \eqref{eq:mask-coefficients}, and~\eqref{eq:integer-reservoir}
  then give
  \[
    \log H,\ \log(a_{\max}+1),\ \log q
    =O(\log d+(d+u)\log(u+1)+\log m).
  \]
  Thus every entry in either reference support is nonnegative and
  has the required number of bits. The mass, variation, correlation,
  and norm estimates above prove~\eqref{eq:additive-witness-properties}.
\end{proof}

\subsection{Split isometries and singularity correlation}\label{sec:split-isometries}

\subsubsection{A parametrization by two shears}

Let $r\ge2$ be even and put $d=2r$. Write

\[
  J_0=
  \begin{pmatrix}0&I_r\\I_r&0
  \end{pmatrix},\qquad
  J=
  \begin{pmatrix}I_r&0\\0&-I_r
  \end{pmatrix},\qquad
  P=
  \begin{pmatrix}I_r&I_r\\I_r&-I_r
  \end{pmatrix}.
\]

For skew-symmetric $r$-by-$r$ matrices $X,Y$, define

\begin{equation}
  T(X,Y)=
  \begin{pmatrix}I&X\\0&I
  \end{pmatrix}
  \begin{pmatrix}I&0\\Y&I
  \end{pmatrix}
  =
  \begin{pmatrix}I+XY&X\\Y&I
  \end{pmatrix},\qquad
  R(X,Y)=\tfrac12 PT(X,Y)P.                         \label{eq:two-shear-isometry}
\end{equation}

Direct multiplication gives $T^TJ_0T=J_0$, $\det T=1$,
$R^TJR=J$, and $\det R=1$.
\subsubsection{Rank parity for orthogonal isometries}

\begin{lemma}[Rank parity; Dickson invariant]

  \label{lem:isometry-rank-parity}

  Let $g$ preserve a nondegenerate symmetric bilinear form over
  $\mathbb{Q}$. Then $\det g=(-1)^{\operatorname{rank} (I-g)}$.

\end{lemma}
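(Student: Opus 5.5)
The plan is to show the equivalent statement $\det g\,(-1)^{\rank(I-g)}=1$ by induction on the rank of $I-g$. Each step splits off a reflection, which changes both factors by $-1$. Let $b$ be the form on $V=\Q^n$ and put $W=\operatorname{im}(I-g)$, so that $\rank(I-g)=\dim W$. The key identity comes from the isometry property. For $x,y\in V$,
\[
b((I-g)x,gy)=b(x,gy)-b(x,y)=-b(x,(I-g)y).
\]
Hence $\ker(I-g)=W^{\perp}$. Indeed, $(I-g)y=0$ holds iff $b(x,(I-g)y)=0$ for all $x$ (by nondegeneracy), iff $b((I-g)x,gy)=0$ for all $x$, iff $gy\in W^\perp$. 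Since $gy=y$ on the kernel, the direction kernel $\subseteq W^{\perp}$ is immediate. For the converse, a dimension count works: $\dim\ker(I-g)=n-\dim W=\dim W^\perp$.

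\textbf{Induction step.} Suppose $g\ne I$. I need a vector $v$ with $b(v,v)\ne0$ of the form $v=(I-g)x$.

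If such an anisotropic $v$ exists, let $s_v$ be the reflection in $v$. Then $\det s_v=-1$, and $s_vg$ is again an isometry. From $b(x,x)=b(gx,gx)$ one gets $b(v,v)=2b(x,x-gx)$. This makes $s_v(gx)=x$ a short computation, since $gx=x-v$. The aim is to show $\rank(I-s_vg)=\rank(I-g)-1$. Concretely, $\operatorname{im}(I-s_vg)\subseteq W$, because $s_v$ acts trivially modulo $\langle v\rangle\subseteq W$. Moreover $x$ is fixed by $s_vg$ but not by $g$. Combining this with $\ker=(\operatorname{im})^\perp$ for both isometries, the fixed space grows by exactly one, as I will check via $\ker(I-g)\subseteq\ker(I-s_vg)$. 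Any $y$ fixed by $g$ lies in $W^\perp\subseteq v^\perp$, so $s_v$ fixes it as well. Since the images have codimension-matched perps, the rank drops by exactly one. Induction then gives the parity identity.

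\textbf{Main obstacle: $W$ may be totally isotropic.} The delicate case is when $b(w,w)=0$ for all $w\in W$, so no anisotropic $v$ of this form exists. Polarizing gives $W\subseteq W^\perp=\ker(I-g)$. Thus $(I-g)^2=0$, so $g$ is unipotent and $\det g=1$. It then remains to show $\dim W$ is even. Define $\beta(x,y)=b(x,(I-g)y)$ on $V/\ker(I-g)$. The displayed identity together with $(I-g)gy=(I-g)y$ shows that $\beta$ is alternating: $\beta(x,x)=b(x,x-gx)=\tfrac12 b(v,v)=0$. It is also nondegenerate on $V/\ker(I-g)$, and this space has dimension $\dim W$. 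A nondegenerate alternating form forces even dimension, giving $(-1)^{\dim W}=1=\det g$.

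This totally isotropic case is where I expect the care to be needed; the characteristic-zero hypothesis (division by 2) is used both in the reflection computation and in $\beta(x,x)=0$. Alternatively, one can reduce to this case directly. Even if $W$ contains anisotropic vectors, some $(I-g)x$ might be isotropic, but then I would pick any $x$ with $b((I-g)x,(I-g)x)\neq0$. If none exists, the form vanishes on all of $W$ by polarization, which is exactly the unipotent case above.
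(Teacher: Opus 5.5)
Your proof is correct, but it takes a different route from the paper. The paper gives no argument of its own: it quotes the identity $\det g=(-1)^{\Delta(g)}$ for the Dickson invariant $\Delta(g)=\dim\operatorname{im}(I-g)\bmod 2$ in characteristic different from two (First, Section~5.1), and applies it to the quadratic form $x\mapsto b(x,x)/2$.

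You instead prove that identity directly. From $\ker(I-g)=\operatorname{im}(I-g)^{\perp}$ you split off a reflection $s_v$ in an anisotropic $v=(I-g)x$, which flips the determinant and lowers $\rank(I-g)$ by one. In the Cartan--Dieudonn\'e exceptional case, where the image is totally isotropic, you get $(I-g)^2=0$ and hence $\det g=1$; meanwhile $b(x,(I-g)y)$ is a nondegenerate alternating form on $V/\ker(I-g)$, so $\rank(I-g)$ is even.

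One step should be stated more carefully. Your kernel inclusion together with the new fixed vector $x$ only shows that the rank drops by at least one. It drops by exactly one because $(I-s_vg)-(I-g)=(I-s_v)g$ has rank one. Equivalently, $\ker(I-s_vg)\cap v^{\perp}\subseteq\ker(I-g)$, since $s_v$ fixes $v^{\perp}$ pointwise and $s_v^2=I$.

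With that sentence added, your argument is self-contained and works verbatim over any field of characteristic not two. It also shows exactly where $2$ must be invertible: the reflection formula, polarization, and $\beta(x,x)=\tfrac12 b(v,v)$. The paper's citation buys brevity and ties the lemma to the general theory of the Dickson invariant, whose characteristic hypothesis is the same one you use.
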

\begin{proof}
  The Dickson invariant is
  $\Delta(g)=\dim\operatorname{im}(I-g)\bmod2$.
  In characteristic different from two it satisfies
  $\det g=(-1)^{\Delta(g)}$; see
  First~\cite[Section~5.1, equation~(5.1)]{first}.
  Apply this identity over $\mathbb Q$ to the quadratic form
  $x\mapsto b(x,x)/2$ associated with the given bilinear form.
\end{proof}

\subsubsection{Uniform transversality for two shears}

\begin{lemma}

  \label{lem:skew-completion}

  Suppose $\binom AC$ has column rank $r$ and $A^TC$ is skew-symmetric.
  There is a rational skew-symmetric $Y$ with
  $\operatorname{rank} (C+YA)\ge r-1$.

\end{lemma}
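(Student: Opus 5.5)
The plan is to use changes of basis that preserve the hypotheses to bring $A$ into a normal form, and then to choose $Y$ block by block. There are two symmetries. First, right multiplication of $\binom AC$ by an invertible rational $G$ preserves the column rank. It also preserves skew-symmetry of $A^TC$, which becomes $G^TA^TCG$. It does not change $\rank(C+YA)$, since that matrix becomes $(C+YA)G$. Second, for invertible $S$, replace $A$ by $SA$ and $C$ by $S^{-T}C$. This leaves $A^TC$ unchanged and preserves the column rank, because the block-diagonal matrix $\diag(S,S^{-T})$ is invertible. Moreover
\[
  S^{-T}C+Y(SA)=S^{-T}\bigl(C+S^TYS\,A\bigr),
\]
and $Y\mapsto S^TYS$ is a bijection on rational skew-symmetric matrices. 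So it suffices to prove the lemma after these replacements. With $a=\rank A$, I will choose $S,G$ so that $A=\diag(I_a,0)$.

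In this normal form, write $C=\bigl(\begin{smallmatrix}C_{11}&C_{12}\\C_{21}&C_{22}\end{smallmatrix}\bigr)$ with blocks conforming to $a+(r-a)$. Then
\[
  A^TC=\begin{pmatrix}C_{11}&C_{12}\\0&0\end{pmatrix}.
\]
Skew-symmetry of this matrix forces $C_{12}=0$ and $C_{11}$ skew. The last $r-a$ columns of $\binom AC$ are then $\bigl(0;0;C_{22}\bigr)$. Full column rank therefore forces $C_{22}$ to be invertible.

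For any skew $Y=\bigl(\begin{smallmatrix}Y_{11}&-Y_{21}^T\\Y_{21}&Y_{22}\end{smallmatrix}\bigr)$ we have
\[
  C+YA=\begin{pmatrix}C_{11}+Y_{11}&0\\C_{21}+Y_{21}&C_{22}\end{pmatrix}.
\]
Because of the zero block and the invertible $C_{22}$, its rank equals $\rank(C_{11}+Y_{11})+(r-a)$. Take $Y_{21}=0$, $Y_{22}=0$, and $Y_{11}=-C_{11}+N$, where $N$ is a standard skew form of maximal rank in dimension $a$ (a direct sum of $\bigl(\begin{smallmatrix}0&1\\-1&0\end{smallmatrix}\bigr)$ blocks, padded with a zero when $a$ is odd). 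Then $Y_{11}$ is skew since $C_{11}$ is, and $\rank(C_{11}+Y_{11})=\rank N\ge a-1$. Hence $\rank(C+YA)\ge r-1$.

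I do not expect a genuine obstacle. The only point needing care is the bookkeeping of the two normalizations: checking that the left change by $S$ sends skew $Y$ to skew $S^TYS$ bijectively, so the chosen $Y$ pulls back to a rational skew matrix for the original $A,C$. The odd-$a$ case is the only source of the loss of one in the rank bound, consistent with the statement.
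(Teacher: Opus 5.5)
Your proof is correct and is essentially the paper's argument: the same normalization $SAV=\diag(I_a,0)$ with $C\mapsto S^{-T}CV$, the same block form (skew $C_{11}$, $C_{12}=0$, invertible $C_{22}$), and the same choice of $-C_{11}$ plus a maximal-rank standard skew form in the top-left block of $Y$. The only cosmetic difference is the off-diagonal block. The paper takes it to be $-C_{21}$ below and $C_{21}^T$ above, so that $C'+Y'\diag(I_a,0)$ is block diagonal. You leave it zero and read off the rank from the block-triangular form.
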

\begin{proof}

  Choose invertible $S,V$ with $SAV=\operatorname{diag}(I_s,0)$.
  Set $C'=S^{-T}CV$. The skew condition and full column rank give

  \[
    C'=
    \begin{pmatrix}K&0\\ B&E
    \end{pmatrix},
    \qquad K^T=-K,\qquad E\text{ invertible}.
  \]

  Let $F_s$ be a skew-symmetric matrix of rank
  $s-(s\bmod2)$, for example a direct sum of standard alternating
  two-by-two blocks and at most one zero block. Take

  \[
    Y'=
    \begin{pmatrix}F_s-K&B^T\\-B&0
    \end{pmatrix},
    \qquad Y=S^TY'S.
  \]

  Then $C'+Y'\operatorname{diag}(I_s,0)=\operatorname{diag}(F_s,E)$,
  of rank at least $r-1$. Transforming back proves the claim.

\end{proof}

\begin{theorem}

  \label{thm:two-shear-transversality}

  For every fixed rational $J_0$-isometry $G$, there are rational
  skew-symmetric $X,Y$ such that

  \begin{equation}
    \operatorname{rank} (I-GT(X,Y))=
    \begin{cases}
      d,&\det G=1,\\
      d-1,&\det G=-1.
    \end{cases}                                             \label{eq:transverse-residual-rank}
  \end{equation}

  If the free entries of $X,Y$ are instead independently uniform in
  an arbitrary $L$-element subset of $\mathbb{Q}$, the probability of
  failing to attain the indicated rank is at most $2d/L$.

\end{theorem}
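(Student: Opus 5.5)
The plan is to separate the upper bound on rank, which comes from Lemma~\ref{lem:isometry-rank-parity}, from a lower bound, which needs an explicit construction. First apply the parity lemma. The matrix $GT(X,Y)$ preserves the nondegenerate symmetric form $J_0$, since both $G$ and $T$ do. Because $\det T=1$, we have $\det(GT)=\det G$. Hence $\rank(I-GT)\equiv 0$ modulo two when $\det G=1$, and $\rank(I-GT)\equiv1$ when $\det G=-1$. It therefore suffices to find $X,Y$ with $\rank(I-GT(X,Y))\ge d-1$: parity then forces the value $d$ or $d-1$ as indicated. For the probabilistic statement, the event $\rank(I-GT)\ge d-1$ is the nonvanishing of some $(d-1)$-minor of $I-GT(X,Y)$. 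Every entry of $T(X,Y)$ is a polynomial of degree at most two in the free entries. So once a rational witness exists, the relevant minor is a nonzero polynomial of degree at most $2(d-1)<2d$. The Schwartz--Zippel lemma then bounds the failure probability by $2d/L$.

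The main work, and the expected obstacle, is the existence of $X,Y$ with $\rank(I-GT)\ge d-1$. I would rewrite the condition via kernels. The kernel of $I-GT$ consists of the vectors $x$ with $GTx=x$, equivalently $Tx=G^{-1}x$. Write $T=N(X)M(Y)$, where $N(X)=\begin{pmatrix}I&X\\0&I\end{pmatrix}$ and $M(Y)=\begin{pmatrix}I&0\\Y&I\end{pmatrix}$. Then the kernel has the same dimension as $\{x: M(Y)x=N(X)^{-1}G^{-1}x\}$. I would first choose $X$ to put $G':=N(-X)G^{-1}$ in general position. Split $G'=\begin{pmatrix}G_{11}&G_{12}\\G_{21}&G_{22}\end{pmatrix}$. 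The equation $M(Y)x=G'x$ reads
\[
(I-G_{11})x_1=G_{12}x_2,\qquad Yx_1+x_2=G_{21}x_1+G_{22}x_2 .
\]

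One natural route is to pick $X$ so that the $r$-dimensional subspace $\binom{A}{C}:=\operatorname{im}$ of the relevant block column is a $J_0$-Lagrangian. In that case $A^TC$ is skew-symmetric, because isometries preserve $J_0$ and $J_0$-isotropy of $\binom AC$ is exactly skewness of $A^TC$. Lemma~\ref{lem:skew-completion} then supplies a skew $Y$ with $\rank(C+YA)\ge r-1$, which should bound the fixed space by one dimension. Concretely, I would take $\binom AC$ to be the image under $G'$ (or $G'-I$ suitably restricted) of the Lagrangian $\{x_2=0\}$, arranging by a choice of $X$ that $I-G_{11}$ relates to $A$ invertibly on its image. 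Then I would check that $M(Y)x=G'x$ forces $x$ into the kernel of $C+YA$, up to a complement that $A$ controls. The delicate point, where I expect the real work to lie, is getting the shapes right. The fixed-vector equation must reduce to exactly the pair $(A,C)$ of Lemma~\ref{lem:skew-completion}, and the leftover degenerate directions, coming from $\ker A$ where $s<r$, must be absorbed by the invertible block $E$ appearing in that lemma's proof. If that direct reduction is awkward, I would instead use the isotropic graph of $G'$ in the doubled space and apply the lemma there.
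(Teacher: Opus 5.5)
Your reduction to the bound $\rank(I-GT(X,Y))\ge d-1$ via Lemma~\ref{lem:isometry-rank-parity} is correct, and so is the Schwartz--Zippel step. Using a single $(d-1)$-minor in both cases is also fine: $d$ is even, so parity upgrades rank at least $d-1$ to rank $d$ when $\det G=1$.

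The gap is the existence of one rational pair $(X,Y)$ attaining rank at least $d-1$. That is the real content of the theorem, and your sketch does not supply it. The levers you name also do not work as stated.
\begin{itemize}
\item The first block column of $G'=N(-X)G^{-1}$ is the image of the Lagrangian $\{x_2=0\}$ under an isometry. It is therefore Lagrangian for every $X$, so choosing $X$ for this purpose gains nothing.
\item The step ``arrange by a choice of $X$ that $I-G_{11}$ relates to $A$ invertibly'' can fail outright. Write $G=\bigl(\begin{smallmatrix}A&B\\C&D\end{smallmatrix}\bigr)$. Then $G^{-1}=\bigl(\begin{smallmatrix}D^T&B^T\\C^T&A^T\end{smallmatrix}\bigr)$, hence $G_{11}=D^T-XC^T$, which does not depend on $X$ when $C=0$. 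For $G=I$ you get $I-G_{11}=0$ for every $X$. So no finite choice of $X$ lets you solve the first fixed-point equation for $x_1$ and reach a pair to which Lemma~\ref{lem:skew-completion} applies.
\end{itemize}

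The missing idea is to apply Lemma~\ref{lem:skew-completion} to the first block column $\binom AC$ of $G$ itself, and then send $X$ to infinity rather than tune it.
\begin{itemize}
\item The lemma applies because $\binom AC$ has full column rank ($G$ is invertible) and $A^TC$ is skew ($G^TJ_0G=J_0$). This gives $Y$ with $\rank(C+YA)\ge r-1$.
\item Take $X=tF$ with $F$ a fixed invertible skew matrix, which exists since $r$ is even. Right multiplication by the lower shear with parameter $-Y$ gives
\[
  (I-GT(tF,Y))\begin{pmatrix}I&0\\-Y&I\end{pmatrix}
  =\begin{pmatrix}I-A&-tAF-B\\-Y-C&I-tCF-D\end{pmatrix}.
\]
\item Divide the last block column by $t$ and let $t\to\infty$. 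The limit has the rank of $\bigl(\begin{smallmatrix}I-A&-A\\-Y-C&-C\end{smallmatrix}\bigr)$. A block column subtraction and a Schur complement on $I_r$ show this rank is $r+\rank(C+YA)\ge d-1$.
\item By lower semicontinuity of rank, some $(d-1)$-minor is a nonzero polynomial in $t$, so some finite rational $t$ works.
\end{itemize}
Your parity and Schwartz--Zippel arguments then complete the proof.

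This degeneration also covers cases like $G=I$. There the limiting matrix is $\bigl(\begin{smallmatrix}0&-F\\-Y&0\end{smallmatrix}\bigr)$, even though $I-G_{11}$ vanishes identically in your parametrization.
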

\begin{proof}

  Write $G=
  \begin{pmatrix}A&B\\ C&D
  \end{pmatrix}$. Its first $r$
  columns satisfy Lemma~\ref{lem:skew-completion}, since $G^TJ_0G=J_0$.
  Choose $Y$ from that lemma and a fixed invertible skew-symmetric
  $F$ of order $r$, which exists because $r$ is even.
  Right multiplication by the lower shear with parameter $-Y$ gives

  \begin{equation}
    (I-GT(tF,Y))
    \begin{pmatrix}I&0\\-Y&I
    \end{pmatrix}
    =
    \begin{pmatrix}I-A&-tAF-B\\-Y-C&I-tCF-D
    \end{pmatrix}.       \label{eq:transversality-block-reduction}
  \end{equation}

  Divide the last $r$ columns by $t$ and let $t$ tend to infinity.
  The limit has the rank of

  \[
    \begin{pmatrix}I-A&-A\\-Y-C&-C
    \end{pmatrix},
  \]

  namely $r+\operatorname{rank} (C+YA)\ge d-1$, by a column subtraction and a
  Schur complement on $I_r$. Thus some $(d-1)$-minor in \eqref{eq:transversality-block-reduction}
  is a nonzero polynomial in $t$. At some rational $t$ the rank is
  at least $d-1$. Lemma~\ref{lem:isometry-rank-parity} then forces precisely \eqref{eq:transverse-residual-rank}.

  For determinant plus one, $\det(I-GT(X,Y))$ is consequently a
  nonzero polynomial. For determinant minus one, some fixed
  $(d-1)$-minor is nonzero as a polynomial. Entries have degree at
  most two in the free variables, so their respective total degrees
  are at most $2d$ and $2(d-1)$. The Schwartz--Zippel bound
  says that a nonzero polynomial of total degree $D$ vanishes on at
  most a $D/L$ fraction of such a product grid
  \cite[Lemma~1 and Corollary~1]{schwartz}; see also Zippel~\cite{zippel}.
  This proves the stated probability estimate.

\end{proof}

\subsubsection{A signed residual from isometry differences}

Let $\mathcal H$ be the group of block-diagonal signed permutations
$\operatorname{diag}(H_+,H_-)$ of block order $r$.

\begin{lemma}[Rank correlation]

  \label{lem:isometry-residual-correlation}

  Let $G$ have any finite law on integral determinant-one $J_0$-isometries.
  Independently choose $H_L,H_R$ uniformly in $\mathcal H$ and the free
  entries of alternating $X,Y$ uniformly in $\{0,\ldots,L_0-1\}$.
  Put

  \[
    U=\tfrac12 H_LPGT(X,Y)PH_R=N_U/2.
  \]

  For independent copies $U,V$, push the signed weight $-\det U\det V$
  to $E=2(U-V)$. The resulting integer measure $\omega$ has mass zero,
  variation one, and

  \[
    \langle g_d,\omega\rangle\ge1-2d/L_0.
  \]

  In particular $L_0=80d$ gives correlation at least $39/40$.

\end{lemma}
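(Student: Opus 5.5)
The plan is to reduce singularity of $E=2(U-V)$ to the rank of $I-g$ for a single rational isometry $g$. Lemma~\ref{lem:isometry-rank-parity} then fixes the parity of that rank, and Theorem~\ref{thm:two-shear-transversality} controls the remaining bad event.

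\emph{Integrality, mass and variation.} Since $G$ and $T(X,Y)$ are integral, $N_U=H_LPGT(X,Y)PH_R$ is an integer matrix, and so $E=N_U-N_V$ is integral. Every signed permutation in $\mathcal H$ preserves $J$. The map $g\mapsto\frac12 PgP$ carries $J_0$-isometries to $J$-isometries, because $P^TJP=2J_0$ and $P^2=2I$. Hence $U$ and $V$ are $J$-isometries with $\det U=\det H_L\det G\det H_R=\det H_L\det H_R\in\{\pm1\}$. Flipping the sign of one coordinate of $H_L$ is a measure-preserving bijection of $\mathcal H$ that negates $\det U$. So $\det U$ is a uniform sign independent of everything else, and $\det U\det V$ is a uniform sign. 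This gives $\omega$ mass zero. The weights $-\det U\det V$ have modulus one and the law is a probability law, so the pushforward has variation at most one, which is the bound we need.

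\emph{Parity step.} Put $g=U^{-1}V$. It is a rational $J$-isometry with $\det g=\det U\det V$. Moreover $E$ is singular iff $U-V=U(I-g)$ is singular iff $\rank(I-g)<d$. By Lemma~\ref{lem:isometry-rank-parity}, $(-1)^{\rank(I-g)}=\det U\det V$.
\begin{itemize}
\item When $\det U\det V=-1$, the rank is odd, hence less than the even number $d$. So $E$ is singular, and the weight $+1$ meets $g_d=+1$.
\item When $\det U\det V=+1$, the weight is $-1$, and it contributes correctly unless $E$ is singular.
\end{itemize}
Therefore
\[
\langle g_d,\omega\rangle=1-2\Pr[\det U\det V=1,\ E\text{ singular}].
\]

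\emph{Transversality step (the main point).} We must show that, conditioned on everything except the shear parameters $X',Y'$ of $V$, and on $\det U\det V=1$, singularity has probability at most $2d/L_0$. Write $V=\frac12H_L'PG'T(X',Y')PH_R'$.

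Singularity of $U-V$ is equivalent to singularity of $I-U^{-1}V$. Conjugating by the fixed invertible matrix $\frac12 PH_R'$ shows this is equivalent to singularity of $I-MT(X',Y')$, where $M=\frac12PH_R'U^{-1}H_L'PG'$. A direct check, using that $\frac12P(\cdot)P$ interchanges the $J$- and $J_0$-isometry groups, shows that $M$ is a fixed rational $J_0$-isometry with $\det M=\det U\det V=1$.

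Theorem~\ref{thm:two-shear-transversality}, applied to $M$ with the free entries of $X',Y'$ uniform on the $L_0$-element set $\{0,\ldots,L_0-1\}$, bounds the probability of singularity by $2d/L_0$. Averaging over the conditioning and using $\Pr[\det U\det V=1]=1/2$ bounds the bad event by $d/L_0$. This gives correlation at least $1-2d/L_0$, and hence $39/40$ when $L_0=80d$.

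The step that needs care is verifying that $M$ is genuinely a $J_0$-isometry, with every factor of $\frac12$ and $P$ tracked correctly, and that it is independent of $X',Y'$ after conditioning. Once that is checked, the transversality theorem applies directly.
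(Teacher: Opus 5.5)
Your proposal is correct and follows the paper's route. Rank parity applied to $U^{-1}V$ settles every pair with $\det U\det V=-1$. Conditioning on everything except the shears of one copy and conjugating by fixed invertible factors then reduces the remaining case to $\rank(I-MT(X',Y'))$ for a fixed rational $J_0$-isometry $M$ with $\det M=\det U\det V$, where the two-shear transversality theorem applies. Your explicit $M$ and the check via $PJP=2J_0$ and $P^2=2I$ simply spell out the paper's ``cyclically moving factors.''

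The one point left short is the claim of variation exactly one, since you only argue $\le 1$. Your parity step already closes this. The weight $-\det U\det V=-(-1)^{\rank E}$ depends only on $E$, so no atom has cancellation and the variation equals the total probability one; this is exactly how the paper argues.
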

\begin{proof}

  Since $P^2=2I$, both $U,V$ preserve the same form $J$, and $N_U,N_V$
  are integral. By Lemma~\ref{lem:isometry-rank-parity},
  $\det U\det V=(-1)^{\operatorname{rank}(U-V)}$.
  Thus every contribution at one residual has the same sign,
  $(-1)^{d-1+\operatorname{rank}E}$, proving variation one.
  The determinant of a uniform compact factor is balanced, so the
  signed mass is zero.

  Condition on one copy and on every factor of the other except its
  two shears. Multiplying by fixed invertible matrices and cyclically
  moving factors reduces its residual rank to
  $\operatorname{rank}(I-G_0T(X,Y))$ for a fixed $J_0$-isometry $G_0$.
  Theorem~\ref{thm:two-shear-transversality} bounds the probability of failing
  to attain the largest rank of the appropriate parity by $2d/L_0$.
  The relative determinant component is balanced. Therefore the
  probability $p_0$ of full rank is at least $(1-2d/L_0)/2$.
  Every full-rank residual has negative weight, and
  $g_d=1-2\mathbf1_{\mathrm{full\ rank}}$. Mass zero now gives
  $\langle g_d,\omega\rangle=2p_0\ge1-2d/L_0$.

\end{proof}

\subsection{Positive correction in Fourier algebra norm}\label{sec:positive-correction}

For a finite group $G$, with probability-normalized Haar inner
products, use

\begin{equation}
  \|f\|_{\mathcal{A}}=\sum_{\pi\in\widehat G}
  d_\pi\left\|\widehat f(\pi)\right\|_{\mathrm{tr}} ,\qquad
  \widehat f(\pi)=\mathbb{E} _G f(g)\pi(g)^* .                   \label{eq:fourier-algebra-norm}
\end{equation}

This finite-group formula follows from Eymard's coefficient-norm
characterization~\cite[Lemma~2.14]{eymard} and the equality
$A(G)=B(G)$ for compact groups~\cite[Example~3.6(1)]{eymard}.
To verify the normalization, write a unitary coefficient as
$f(g)=\eta^*\rho(g)\xi$ and decompose $\rho$ into irreducibles.
Schur orthogonality gives
\[
  d_\pi\widehat f(\pi)=\sum_j\xi_{\pi,j}\eta_{\pi,j}^*,
  \qquad
  \|f\|_{\mathcal A}
  \le\sum_{\pi,j}\|\xi_{\pi,j}\|\,\|\eta_{\pi,j}\|
  \le\|\xi\|\,\|\eta\|.
\]
Conversely, Fourier inversion and a singular-value decomposition of
each $\widehat f(\pi)$ give such a coefficient with
$\|\xi\|\,\|\eta\|=\sum_\pi d_\pi\|\widehat f(\pi)\|_{\mathrm{tr}}$.
Taking the infimum over coefficients proves~\eqref{eq:fourier-algebra-norm}.
This norm satisfies

\begin{equation}
  \|f\|_\infty\le\|f\|_{\mathcal{A}},\qquad
  \|fg\|_{\mathcal{A}}\le\|f\|_{\mathcal{A}}\|g\|_{\mathcal{A}}. \label{eq:fourier-algebra-inequalities}
\end{equation}

These are the supremum bound in Eymard's Remark~2.15(1) and the
Banach-algebra bound in Proposition~2.16, respectively~\cite{eymard}.
Projection onto any sum of complete irreducible coefficient
spaces is also a contraction, since it drops terms from \eqref{eq:fourier-algebra-norm}.

The next lemma uses the positive-reweighting method of
Sherstov~\cite[Theorem~4.1]{sherstov-correction}: solve a near-identity
Gram system to make small correlations exactly zero while retaining
positive weights. His theorem is stated for Boolean functions and
bounds on sums of pairwise correlations. Here we use complete
irreducible coefficient spaces and control the correction in the
Fourier algebra norm, which gives the uniform bound needed below.
\begin{lemma}[Exact positive correction]

  \label{lem:positive-correction}

  Let $\mathcal{W}$ be a mean-zero sum of
  complete irreducible coefficient spaces. Let $\mathcal{V}$ be
  another such sum including constants, $\mathcal{W}$, and all
  products of two elements of $\mathcal{W}$.
  Assume both spaces are closed under complex conjugation.
  Let $\mu$ be a probability measure with Haar density $h$.
  If

  \begin{equation}
    e=P_{\mathcal{V}}h-1,\qquad
    \|e\|_{\mathcal{A}}\le\eta\le1/4,     \label{eq:projected-density-error}
  \end{equation}

  then there is a real $a\in\mathcal{W}$ with
  $\|a\|_\infty\le\eta/(1-\eta)\le1/3$ such that

  \begin{equation}
    \sigma=\frac{(1-a)\mu}{\mathbb{E} _\mu(1-a)}
    \quad\hbox{has}\quad
    \mathbb{E} _\sigma f=0\quad(f\in\mathcal{W}).      \label{eq:corrected-probability-law}
  \end{equation}

  If $\mu$ and the defining coefficient functions are rational,
  the correcting weights can be rational.

\end{lemma}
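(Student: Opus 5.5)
We look for the correction $a$ in $\mathcal W$ by solving a near-identity linear system, following Sherstov's Gram-system method. The orthogonality condition in~\eqref{eq:corrected-probability-law} is linear in $a$: we need $\mathbb E_\mu[(1-a)\bar f]=0$ for every $f\in\mathcal W$. Writing $\mathbb E_\mu F=\mathbb E_G[hF]$ and using that $\bar f\in\mathcal W$, that $a\bar f\in\mathcal V$ whenever $a\in\mathcal W$, and that $\mathcal V$ is a sum of complete irreducible coefficient spaces closed under conjugation, only $P_{\mathcal V}h=1+e$ enters. The condition therefore becomes
\[
  \mathbb E_G[(1+e)\bar f]=\mathbb E_G[(1+e)a\bar f]\qquad(f\in\mathcal W).
\]
Because $\mathcal W$ is mean-zero, $\mathbb E_G\bar f=0$. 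The equation is thus $P_{\mathcal W}(e)=P_{\mathcal W}((1+e)a)$, that is,
\[
  a+P_{\mathcal W}(ea)=P_{\mathcal W}e .
\]
We define the operator $S\colon\mathcal W\to\mathcal W$ by $S a=P_{\mathcal W}(ea)$.

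The next step is to bound $S$ in the Fourier algebra norm. By~\eqref{eq:fourier-algebra-inequalities} and the fact that projection onto complete coefficient spaces contracts the $\mathcal A$-norm, we get $\|Sa\|_{\mathcal A}\le\|e\|_{\mathcal A}\|a\|_{\mathcal A}\le\eta\|a\|_{\mathcal A}$. Since $\eta\le1/4<1$, the Neumann series $a=\sum_{j\ge0}(-S)^jP_{\mathcal W}e$ converges in the finite-dimensional space $\mathcal W$, and it gives
\[
  \|a\|_\infty\le\|a\|_{\mathcal A}\le\frac{\eta}{1-\eta}\le\frac13 .
\]

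Two properties of $a$ remain to check. \emph{Realness:} $h$ is real, and $\mathcal V$ is conjugation-closed, so $P_{\mathcal V}$ commutes with conjugation and $e$ is real. Likewise $P_{\mathcal W}$ commutes with conjugation, so $S$ and the right-hand side commute with conjugation. Uniqueness of the solution then forces $\bar a=a$. \emph{Positivity:} since $\|a\|_\infty\le1/3$, we have $1-a\ge2/3>0$, so $\mathbb E_\mu(1-a)>0$ and $\sigma$ is a genuine probability measure. By construction $\mathbb E_\sigma f=0$ for all $f\in\mathcal W$.

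For rationality, $a$ is the unique solution of a finite linear system $(I+S)a=P_{\mathcal W}e$. We write this system in a basis of real rational coefficient functions spanning $\mathcal W$. The system matrix entries are then $\mathbb E_\mu$ of rational functions, so they are rational, and the coefficient vector of $a$ is rational by Cramer's rule. The invertibility of $I+S$ was established above.

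The main obstacle is the first step: verifying that only $P_{\mathcal V}h$ enters, that is, that $\mathbb E_G[h\,a\bar f]=\mathbb E_G[(P_{\mathcal V}h)\,a\bar f]$. This requires $a\bar f\in\mathcal V$, which uses the product hypothesis together with $\bar f\in\mathcal W$. It also requires orthogonality of the complement of $\mathcal V$ to $\mathcal V$, which holds because $\mathcal V$ is a sum of complete isotypic coefficient spaces.
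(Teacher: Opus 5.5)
Your proposal is correct and follows the paper's proof essentially step for step. You reduce to $a+P_{\mathcal W}(ea)=P_{\mathcal W}e$ via $\mathcal W\mathcal W\subseteq\mathcal V$, solve it by a Neumann series in $\|\cdot\|_{\mathcal A}$ using the Banach-algebra inequality and contractivity of the projections, get realness from conjugation invariance plus uniqueness, and get rationality from an invertible rational linear system. For that last step, the rational system is most cleanly the Gram system $\sum_k x_k\,\mathbb E_\mu[g_kg_l]=\mathbb E_\mu[g_l]$ in a rational basis $(g_k)$. Its matrix is the Haar Gram matrix times the matrix of $I+S$, hence invertible.
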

\begin{proof}

  For $a\in\mathcal{W}$, products $a\overline b$ with
  $b\in\mathcal{W}$ lie in $\mathcal{V}$. Orthogonality therefore gives

  \[
    P_{\mathcal{W}}(ha)=a+P_{\mathcal{W}}(ea).
  \]

  On $\mathcal{W}$ the operator
  $a\mapsto P_{\mathcal{W}}(ea)$ has norm at most $\eta$
  in $\|\cdot\|_{\mathcal{A}}$. The Neumann series solves

  \[
    a+P_{\mathcal{W}}(ea)=P_{\mathcal{W}}e
  \]

  with norm at most $\eta/(1-\eta)$.
  Closure under complex conjugation and real $h$ imply the solution is real.
  It satisfies $P_{\mathcal{W}}(h(1-a))=0$, so \eqref{eq:corrected-probability-law} follows,
  with weights in $[2/3,4/3]$ before normalization.
  In a rational basis the equations are an invertible rational
  Gram system, proving the final claim.

\end{proof}

\subsection{Mixing on frame orbits over odd finite fields}\label{sec:finite-field-frame-mixing}

Let $q$ be an odd prime power. Fix $1\le k\le r/2$ and a
symmetric $k$-by-$k$ matrix $C$ over $\mathbb{F} _q$. Write

\begin{equation}
  \mathcal{X}_C=\left\{\binom AB:
    A,B\in\mathbb{F} _q^{r\times k},\
  \operatorname{rank} \binom AB=k,\ A^tB+B^tA=C\right\}.               \label{eq:frame-gram-fiber}
\end{equation}

This set is nonempty: take $B=\binom{I_k}{0}$ and
$A=\binom{C/2}{0}$. Let $J_C$ be the orthogonal projection onto constant
functions on $\mathcal X_C$, given by uniform averaging.
Upper and lower alternating shears act by

\begin{equation}
  U(X)\binom AB=\binom{A+XB}{B},\qquad
  L(Y)\binom AB=\binom A{B+YA},\qquad X^t=-X,\ Y^t=-Y.   \label{eq:frame-shear-actions}
\end{equation}

Their uniform averages $\mathcal{U} ,\mathcal{L}$ are self-adjoint projections
for counting-measure inner products. Define

\begin{equation}
  v_k=\frac{q^k-1}{q-1},\qquad
  \delta_{r,k}(q)
  =v_kq^{-r}(1+2q+q^2+q^k).                          \label{eq:frame-mixing-error}
\end{equation}

Total variation distance between probability measures includes
the factor $1/2$.

\begin{theorem}[Uniform mixing on a frame fiber]

  \label{thm:finite-field-frame-mixing}

  For every $x\in\mathcal{X}_C$,

  \begin{equation}
    d_{\mathrm{TV}}\bigl((\mathcal{U} \mathcal{L} )^2\delta_x,J_C\delta_x\bigr)
    \le\delta_{r,k}(q).         \label{eq:frame-mixing-total-variation}
  \end{equation}

  The same bound holds with $\mathcal{U} ,\mathcal{L}$ interchanged. Therefore

  \begin{equation}
    \|(\mathcal{U} \mathcal{L} )^2-J_C\|
    \le2\delta_{r,k}(q).       \label{eq:frame-mixing-operator-norm}
  \end{equation}

  For $r\ge8k$, the shear groups act transitively on $\mathcal X_C$.

\end{theorem}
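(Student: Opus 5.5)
The plan is a direct computation of the two-step walk $x\mapsto UL\,UL\,x$. I would parametrize each step and compare the resulting law with the uniform law on $\mathcal X_C$.

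\textbf{Step 1: describe the target.} Write $x=\binom AB$ and apply $\mathcal L$ first, so $B\mapsto B+YA$ with $Y$ a uniform alternating matrix. The map $Y\mapsto YA$ sends alternating $r\times r$ matrices onto the space
\[
\{N\in\mathbb F_q^{r\times k}:\ A^tN \text{ alternating}\},
\]
provided $A$ has rank $k$, and it is uniform on that space. So $\mathcal L\delta_x$ is uniform over all $B'$ with $A^tB'+B'^tA=C$; this is exactly the fiber condition with $A$ held fixed. Symmetrically, $\mathcal U$ resamples $A$ uniformly over the admissible $A'$ for the current $B$, whenever $B$ has full column rank $k$.

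\textbf{Step 2: count the error.} The walk $(\mathcal U\mathcal L)^2$ is then a Gibbs-type sampler on the pairs $(A,B)$ constrained to $\mathcal X_C$. Its failure comes from intermediate frames whose active half $A$ or $B$ is rank deficient. I would bound the probability that a uniformly resampled $r\times k$ block, restricted to an affine subspace of codimension about $k^2$, has a nonzero $v$ in its kernel. A union bound over the $v_k$ projective directions $v$, with each event of probability about $q^{-r}$ times a small factor, gives the shape $v_kq^{-r}(\cdots)$. I would not reproduce the exact constant $1+2q+q^2+q^k$ in the plan; I expect the terms to come from the successive steps together with a correction from $C$-dependent affine conditions.

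\textbf{Step 3: show the two-step law is close to uniform.} On the good event, both intermediate halves have full rank. I would then show that the law after $\mathcal U\mathcal L\,\mathcal U\mathcal L$ is within the stated distance of $J_C\delta_x$. The argument runs as follows:
\begin{itemize}
\item After $\mathcal L$, the block $B'$ is uniform in its fiber given $A$.
\item After $\mathcal U$, the block $A'$ is uniform given $B'$.
\item The pair $(A',B')$ is therefore uniform on $\mathcal X_C$ except for a bias coming from the fiber sizes. These sizes are constant on full-rank frames, since the fiber is an affine space of fixed dimension when the fixed half has rank $k$.
\end{itemize}
The remaining nonuniformity is carried only by the rank-deficient configurations bounded in Step 2. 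The second round then removes the dependence on the starting $A$.

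\textbf{Step 4: consequences.} The interchanged statement follows by the symmetric argument with the roles of $A$ and $B$ swapped. For the operator norm, $(\mathcal U\mathcal L)^2-J_C$ is a Markov difference. Total variation at most $\delta$ from every starting point bounds its $\ell_\infty\to\ell_\infty$ norm by $2\delta$. The interchanged bound is the same estimate for the adjoint $(\mathcal L\mathcal U)^2$, which bounds the $\ell_1$ norm by $2\delta$. Riesz--Thorin interpolation then gives the $\ell_2$ bound of $2\delta$. Transitivity for $r\ge 8k$ follows because $\delta_{r,k}(q)<1$ there: every point then has positive mass under the walk started from any $x$.

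\textbf{Main obstacle.} The hardest part is Step 2, matching the precise error $\delta_{r,k}(q)$ through careful rank-deficiency counting over the affine fibers. I would try first to handle each kernel direction $v$ by a linear-algebra computation of the probability that $Av=0$ on the affine fiber.
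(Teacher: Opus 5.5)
Your Step~1 and your Step~4 are correct and match the paper. In Step~1, when the fixed half has rank $k$, one shear step resamples the other half uniformly on an affine fiber of dimension $h=rk-k(k+1)/2$. In Step~4 you pass from the two total-variation bounds to the operator norm.

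\textbf{The gap is Step~3, which is the real content of the theorem.} After one round $\mathcal U\mathcal L$ from $\binom AB$ with $\rank A=k$, the block $B'$ is confined to $\{B':A^tB'+B'^tA=C\}$. This set is a $q^{-k(k+1)/2}$ fraction of all $r\times k$ matrices. Under $J_C\delta_x$, by contrast, the $B$-marginal is essentially uniform on all full-rank $r\times k$ matrices, since each has a fiber of the same size $q^h$. So the one-round law is at total variation distance at least roughly $1-q^{-k(k+1)/2}$ from uniform. It is not uniform up to a fiber-size bias, and the discrepancy is not carried by rank-deficient configurations. Your sentence that ``the second round removes the dependence on the starting $A$'' is exactly what must be proved, and your plan gives no mechanism for it.

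The paper's mechanism is a two-step transition count:
\begin{itemize}
\item Take full-rank $B$ and any $B'$ with $\rank[B\ B']=2k$, and change coordinates so that $B,B'$ become the first $2k$ standard columns. The intermediate $A$ satisfying both $A^tB+B^tA=C$ and $A^tB'+B'^tA=C$ then number exactly $q^{rk-k(k+1)}$, because the constraints fix the symmetric parts of two disjoint $k\times k$ blocks.
\item At most a $v_kq^{2-r}$ fraction of these $A$ are rank deficient.
\item Hence the upper-then-lower transition $B\to B'$ has probability at least $q^{-2h}(1-v_kq^{2-r})q^{rk-k(k+1)}=(1-v_kq^{2-r})q^{-rk}$. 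This is nearly uniform over all $r\times k$ matrices.
\item The excluded targets, those with $\rank[B\ B']<2k$, have uniform mass at most $v_kq^{k-r}$.
\end{itemize}
That excluded set is the source of the $q^k$ term in $\delta_{r,k}(q)$. It is not a single-block rank deficiency and not a $C$-dependent correction, so your Step~2 bookkeeping could not reach the stated bound.

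\textbf{The chronology also needs adjusting.} Your Step~1 assumes the starting $A$ has rank $k$, but $x\in\mathcal X_C$ is arbitrary. The paper applies the steps in the order lower, upper, lower, upper, and each contributes as follows:
\begin{itemize}
\item The first $\mathcal L$ serves only to make $B$ full rank. From every full frame it fails with probability at most $v_kq^{1-r}$, since $Aw=0$ forces $Bw\ne0$.
\item The middle pair mixes $B$ via the count above.
\item The last $\mathcal U$ makes $A$ exactly uniform in its fiber.
\item Passing from uniform on all matrices to uniform on full-rank ones costs $v_kq^{-r}$.
\item The uniform law on $\mathcal X_C$ is $\mathcal L$-stationary, so it gives rank-deficient $B$ mass at most $v_kq^{1-r}$.
\end{itemize}
These contributions add to $v_kq^{-r}(1+2q+q^2+q^k)$.

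\textbf{Your transitivity argument does not work as stated.} Total variation below $1$ from uniform does not imply that the walk charges every point. Use instead that $2\delta_{r,k}(q)\le15q^{2k-r}<1$ for $r\ge8k$. If there were two orbits, an orbit indicator minus its mean would be a nonconstant vector fixed by $(\mathcal U\mathcal L)^2$ and annihilated by $J_C$. That contradicts the operator-norm bound.
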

\begin{proof}

  For a nonzero vector $b$, the map $X\mapsto Xb$ from alternating
  matrices is onto $b^\perp$, with equal-sized fibers. Thus, for any
  full frame $\binom AB$ and any nonzero $w\in\mathbb{F} _q^k$,
  $Aw+XBw=0$ has probability at most $q^{-(r-1)}$ if $Bw\ne0$.
  If $Bw=0$, it is impossible, since then $Aw\ne0$.
  Taking one representative of each projective line of $w$ gives

  \begin{equation}
    \Pr_X[\operatorname{rank} (A+XB)<k]\le\alpha_1:=v_kq^{1-r}.           \label{eq:single-shear-rank-failure}
  \end{equation}

  The analogous bound holds for $B+YA$. It is uniform in the
  starting frame and does not assume either block is full rank.

  If $B$ has rank $k$, its upper orbit consists of all solutions
  of $A^tB+B^tA=C$. Each has equal probability, and the number
  of solutions is

  \begin{equation}
    q^h,\qquad h=rk-k(k+1)/2.                \label{eq:frame-orbit-size}
  \end{equation}

  Indeed a change $B\mapsto SB$, $A\mapsto S^{-t}A$ can put
  $B$ into $\binom{I_k}{0}$, where the assertion follows by
  prescribing the symmetric part of the first $k$ rows of $A$.
  The orbit assertion follows by prescribing the corresponding
  columns of the alternating matrix in \eqref{eq:frame-shear-actions}.

  Fix now full-rank $B,B'$ such that $\operatorname{rank} [B\ B']=2k$.
  The two simultaneous constraints

  \begin{equation}
    A^tB+B^tA=C,\qquad A^tB'+(B')^tA=C             \label{eq:simultaneous-gram-constraints}
  \end{equation}

  have exactly $q^{rk-k(k+1)}$ solutions. Change coordinates
  so that $B$ has the first $k$ standard columns and $B'$ has
  the next $k$. The constraints prescribe the symmetric parts
  of two disjoint $k$-by-$k$
  blocks of $A$.
  Among these solutions the fraction with $\operatorname{rank}  A<k$ is at most

  \begin{equation}
    \alpha_2:=v_kq^{2-r}.         \label{eq:double-constraint-rank-failure}
  \end{equation}

  To prove this, fix a projective direction $w$. Make a common
  invertible column change sending the first coordinate vector
  to $w$, followed by the row change just described.
  The first column of $A$ has exactly $r-2$ independent uniform
  coordinates; its two remaining coordinates are prescribed by
  the two diagonal entries in \eqref{eq:simultaneous-gram-constraints}. Thus $Aw=0$ is either impossible
  or has probability $q^{-(r-2)}$. A union bound proves \eqref{eq:double-constraint-rank-failure}.

  Start with full-rank $B$ and average first over the upper orbit,
  then over a lower orbit. The resulting marginal probability
  of any $B'$ with $\operatorname{rank} [B\ B']=2k$ is at least

  \begin{equation}
    q^{-2h}(1-\alpha_2)q^{rk-k(k+1)}
    =(1-\alpha_2)q^{-rk},      \label{eq:frame-transition-lower-bound}
  \end{equation}

  when $\alpha_2\le1$. If $\alpha_2>1$, the later bound is
  trivial and needs no lower estimate. In \eqref{eq:frame-transition-lower-bound} we counted only
  the intermediate matrices $A$ of full rank, for which the
  lower orbit has size $q^h$; all other contributions are nonnegative.

  For a uniformly random $r$-by-$k$ matrix $B'$, a projective
  union bound gives

  \begin{equation}
    \Pr[\operatorname{rank} [B\ B']<2k]\le\beta:=v_kq^{k-r},\qquad
    \Pr[\operatorname{rank}  B'<k]\le\alpha_0:=v_kq^{-r}.                \label{eq:random-frame-rank-failure}
  \end{equation}

  For the first inequality, $B'w$ must belong to the fixed
  $k$-dimensional column space of $B$ for some nonzero $w$.
  Summing the common mass in \eqref{eq:frame-transition-lower-bound} shows that the marginal
  of the output $B'$ is within $\alpha_2+\beta$ of the uniform law on all matrices,
  and hence within $\alpha_2+\beta+\alpha_0$ of the uniform law
  on full-rank matrices.

  The chronological order of $(\mathcal{U} \mathcal{L} )^2$ is lower, upper,
  lower, upper. Its first step makes $B$ full rank except with
  probability $\alpha_1$. Its middle two steps have the preceding
  marginal bound. On full-rank $B'$ the last step makes $A'$
  exactly uniform subject to its Gram constraint. Thus the whole
  output is within
  $\alpha_1+\alpha_2+\beta+\alpha_0$ of the uniform law on
  the part of $\mathcal{X}_C$ with $B'$ full rank.

  The uniform law on $\mathcal{X}_C$ is stationary under the lower
  average. By \eqref{eq:single-shear-rank-failure}, its probability of a rank-deficient $B$ is
  at most $\alpha_1$. Conditional on full-rank $B$, it is precisely
  the comparison law above, by \eqref{eq:frame-orbit-size}. The triangle inequality gives
  \eqref{eq:frame-mixing-total-variation}, since
  $2\alpha_1+\alpha_2+\beta+\alpha_0=\delta_{r,k}(q)$.
  Interchanging $A,B$ proves the adjoint statement. The row and
  column absolute-sum bounds for $(\mathcal{U} \mathcal{L} )^2-J_C$ are both
  $2\delta_{r,k}(q)$, proving \eqref{eq:frame-mixing-operator-norm} by the Schur bound.

  Finally, if $r\ge8k$, then
  $v_k\le(3/2)q^{k-1}$ and $1+2q+q^2+q^k\le5q^{k+1}$ give
  $2\delta_{r,k}(q)\le15q^{2k-r}<1$.
  If there were two shear-group orbits, an orbit indicator minus its
  mean would be a nonconstant fixed vector, contradicting \eqref{eq:frame-mixing-operator-norm}.
  This proves transitivity without any nondegeneracy assumption on $C$.

\end{proof}

\subsection{Congruence orbits of primitive frames}\label{sec:primitive-frame-congruence}

Let $q=p^a$ with $p$ odd. Let $J_0=\left(
  \begin{smallmatrix}0&I_r\\I_r&0
\end{smallmatrix}\right)$,
and let $\mathcal K_q$ be the kernel of reduction from
$\operatorname{SO}(J_0,\mathbb Z/q)$ to $\operatorname{SO}(J_0,\mathbb F_p)$.

The following is a congruence refinement of Witt extension.
The usual extension theorem over semiperfect rings extends an isometry
between direct summands of a unimodular quadratic space
\cite[Corollary~4.9]{first}. We additionally require the extension to
be the identity modulo $p$, and prove this by successive corrections.

\begin{lemma}[Congruence orbits on primitive frames]

  \label{lem:primitive-frame-congruence}

  For odd $p$ and $k\le d$, the group $\mathcal{K}_q$ is transitive
  on the $d\times k$ frames of column rank $k$ modulo $p$ with
  a prescribed reduction modulo $p$ and a prescribed Gram
  matrix $X^TJ_0X$ modulo $q$.

\end{lemma}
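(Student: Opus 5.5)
We argue by induction on the exponent $a$, lifting one power of $p$ at a time; this is the usual successive-correction scheme. Let $X,X'$ be two $d\times k$ frames over $\Z/q$ with $X\equiv X'\pmod p$, column rank $k$ modulo $p$, and $X^TJ_0X=X'^TJ_0X'\pmod q$. For $a=1$ take the identity. For the inductive step, write $q'=p^{a-1}$. By the induction hypothesis applied modulo $q'$ there is $g'\in\mathcal K_{q'}$ with $g'X\equiv X'\pmod{q'}$. Lift $g'$ to some $g_1\in\operatorname{SO}(J_0,\Z/q)$ that is the identity modulo $p$. Smoothness of $\operatorname{SO}(J_0)$ over $\Z_p$ for odd $p$ lets us lift its points, and the congruence kernels surject. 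After replacing $X$ by $g_1X$, we may assume $X'=X+q'Y$ for some $Y$ over $\Z/p$. The top-exponent case $a=1$ needs separate handling, but there the statement is trivial.

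Next we compare Gram matrices. Since $p$ is odd, $2(a-1)\ge a$ whenever $a\ge2$, so the quadratic term in $q'$ vanishes modulo $q$. The equality of Gram matrices modulo $q$ therefore reduces to the linear condition $\bar X^TJ_0Y+Y^TJ_0\bar X=0$ over $\mathbb F_p$, where $\bar X=X\bmod p$. We then seek a correction $g=I+q'\Lambda$, with $\Lambda$ in the Lie algebra $\{\Lambda:\Lambda^TJ_0+J_0\Lambda=0\}$ over $\mathbb F_p$. Such a $g$ is orthogonal modulo $q$, again because the quadratic terms vanish. Its determinant is $1+q'\operatorname{tr}\Lambda=1$, since the trace of $\Lambda=J_0S$ with $S$ skew is zero. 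Finally $g$ is trivial modulo $p$, so $g\in\mathcal K_q$. The requirement $gX\equiv X'\pmod q$ becomes $\Lambda\bar X=Y$ over $\mathbb F_p$.

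The main step is surjectivity: every $Y$ satisfying the linear condition must equal $\Lambda\bar X$ for some $\Lambda$ in the orthogonal Lie algebra. Write $\Lambda=J_0S$ with $S$ skew-symmetric, and set $Z=J_0Y$. The condition becomes $\bar X^TZ+Z^T\bar X=0$, and we need a skew $S$ with $S\bar X=Z$. This is the same fact used earlier in Lemma~\ref{lem:skew-completion} and in the shear-orbit count of Theorem~\ref{thm:finite-field-frame-mixing}. To prove it, change basis by an invertible $Q$ so that $\bar X=\binom{I_k}{0}$; this uses rank $k$ modulo $p$. Then $S\mapsto SQ^{-1}$-type conjugation $Q^{-T}SQ^{-1}$ preserves skewness. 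With $Z=\binom{Z_1}{Z_2}$, the condition says $Z_1$ is skew. Take $S=\begin{pmatrix}Z_1&-Z_2^T\\Z_2&0\end{pmatrix}$, which is skew and satisfies $S\binom{I_k}{0}=Z$. When $k=d$, $Z_2$ is empty and the same formula applies.

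The obstacle I expect is bookkeeping in the first step. I will need to verify that the intermediate lifts remain in $\operatorname{SO}$ rather than just $\operatorname{O}$, and that the reduction map $\operatorname{SO}(J_0,\Z/q)\to\operatorname{SO}(J_0,\Z/q')$ restricts to a surjection on congruence kernels. I would handle this directly, without invoking smoothness. Any element of the kernel of level $q'$ can be built as a product of elements $I+p^j\Lambda_j$, corrected at each level by the same Lie-algebra argument. Alternatively, one can avoid lifting entirely. At each level, apply the correction $I+p^{j}\Lambda$ to the current frame itself, and compose these corrections; this yields an element of $\mathcal K_q$ directly. This second route is the one I would actually write.
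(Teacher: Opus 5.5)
Your top-level computation is correct. For $a\ge2$ and $\Lambda=J_0S$ with $S$ skew, $I+p^{a-1}\Lambda$ is an exact isometry modulo $p^a$, has determinant one, and is trivial modulo $p$. Your block formula for $S$ solves $S\bar X=J_0\bar Y$, exactly as the paper's $W=TV-V^TT^T+V^T(T^TX)V$ does. So your first route, induction on $a$, is complete once you cite smoothness for the surjectivity $\mathcal K_{p^a}\to\mathcal K_{p^{a-1}}$.

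The route you say you would actually write, however, fails. For $\Lambda$ in the Lie algebra one has exactly $(I+p^j\Lambda)^TJ_0(I+p^j\Lambda)=J_0+p^{2j}\Lambda^TJ_0\Lambda$. So for $2j<a$ the correction is not in $\mathcal K_q$. Worse, it moves the Gram matrix of the current frame by $p^{2j}(\Lambda X)^TJ_0(\Lambda X)$, which destroys the linear condition needed at the next level. Concretely, take $r=2$, $X=e_1$, and $X'=e_1+pe_2-p^2e_3+pe_4$. Both have Gram matrix $0$ and agree modulo $p$. But for every integral $\Lambda_1=J_0S_1$ with $S_1$ skew and $\Lambda_1e_1\equiv e_2+e_4\pmod p$, the frame $(I+p\Lambda_1)e_1$ has Gram matrix $\equiv2p^2\pmod{p^3}$. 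Hence for $a\ge3$ no isometry carries it to $X'$, and the iteration is stuck at the second level. Your sketched direct proof of kernel surjectivity, via products of elements $I+p^j\Lambda_j$ ``corrected by the same Lie-algebra argument,'' has the same defect. Lifting a group element also needs a different linear step: solving $N^TJ_0+J_0N=-S$ for a symmetric $S$.

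The missing idea is to use, at each level, an exact isometry whose first-order term is $I+p^j\Lambda$. The paper takes the Cayley transform $(I-p^j\Lambda/2)^{-1}(I+p^j\Lambda/2)$. It exists because $2$ is a unit and $I\pm p^j\Lambda/2\equiv I\pmod p$. It preserves $J_0$ exactly, has determinant one, is trivial modulo $p$, and is congruent to $I+p^j\Lambda$ modulo $p^{2j}$, hence modulo $p^{j+1}$. Its higher-order terms, beginning with $p^{2j}\Lambda^2/2$, cancel the defect $p^{2j}\Lambda^TJ_0\Lambda$. Gram matrices therefore stay equal, and the level-by-level argument needs no lifting of group elements. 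Alternatively, keep your induction on $a$ and prove the kernel lift directly by the Hensel step $g\mapsto g(I-\tfrac12p^tJ_0S)$. This turns $g^TJ_0g=J_0+p^tS$, with $S$ integral symmetric, into an isometry modulo $p^{2t}$, so you need not cite smoothness.
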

\begin{proof}

  Suppose two frames have already been matched modulo $p^s$.
  Write their first-order difference as $Z$ modulo $p$ and
  their common field reduction as $X$. Equality of Gram matrices
  implies $X^TJ_0Z+Z^TJ_0X=0$. Choose a left inverse $V$ of $X$
  and put $T=J_0Z$. Then

  \[
    W=TV-V^TT^T+V^T(T^TX)V
  \]

  is skew-symmetric and satisfies $WX=T$. Thus $A=J_0^{-1}W$
  is an orthogonal Lie algebra element with $AX=Z$.
  Lift $W$ skew-symmetrically over the local ring and apply the
  Cayley map $(I-p^sA/2)^{-1}(I+p^sA/2)$. Its first-order term is $I+p^sA$, so it matches the frames
  modulo $p^{s+1}$ with an element of $\mathcal{K}_q$.
  Iteration proves transitivity. The full column rank supplies
  the left inverse; the Gram matrix need not be nondegenerate.

\end{proof}

\subsection{Finite-module designs and structured residuals}\label{sec:finite-module-designs}

\subsubsection{Representation levels on finite modules}

Fix an odd prime $p$, an exponent $a\ge1$, and a positive integer
$b$. Let $\Gamma_a$ be the group generated, modulo $p^a$, by

\[
  U(X)=
  \begin{pmatrix}I_r&X\\0&I_r
  \end{pmatrix},\qquad
  L(Y)=
  \begin{pmatrix}I_r&0\\Y&I_r
  \end{pmatrix},\qquad
  X^t=-X,\quad Y^t=-Y.
\]

It preserves $J_0=\left(
  \begin{smallmatrix}0&I_r\\I_r&0
\end{smallmatrix}\right)$.
For every finite abelian $p$-group $M$ of exponent dividing
$p^a$ and with at most $b$ generators, let

\[
  \mathcal{H}_M=\ell^2(M^{2r}).
\]

The matrices act by integer linear combinations of the $2r$
coordinates. Write $|M|=p^\ell$ and $k=\dim_{\mathbb{F} _p}M/pM$.
The level of a type occurring in these representations is the
least $\ell$ for which it occurs in some $\mathcal{H}_M$ in
this family. The zero group has level zero and gives just the
trivial type. Let $N_M$ project onto the complete isotypic
components in $\mathcal{H}_M$ whose types have level $\ell$.

Let $\mathsf{U} ,\mathsf{L}$ be the uniform averages of the two shear groups.

\begin{theorem}[Contraction by module size]

  \label{thm:module-size-contraction}

  Suppose $r\ge80b$. For every irreducible type of level
  $\ell\ge1$,

  \begin{equation}
    \|(\mathsf{U} \mathsf{L} )^2\|\le p^{-3r\ell/4}.       \label{eq:module-size-contraction}
  \end{equation}

  The bound is independent of the prime-power exponent $a$.

\end{theorem}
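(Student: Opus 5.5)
The plan is to reduce the contraction on a level-$\ell$ type to a total-variation estimate on primitive orbits. First I would identify the level-$\ell$ types through primitive frames. Write a point of $\mathcal H_M=\ell^2(M^{2r})$ as $x=\binom AB$ with $A,B\in M^r$. Equivalently, $x$ is a homomorphism $\widehat M\to(\Z/p^a)^{2r}$, where $\widehat M\cong M$ is the dual group. If the coordinates of $x$ generate a proper subgroup $M'<M$, then the orbit of $x$ and the operators $\mathsf U,\mathsf L$ restricted to it are already realized inside $\mathcal H_{M'}$. The same holds dually for quotients of $M$, since a character of $M$ that kills a subgroup factors through the quotient.

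Hence every type of level exactly $\ell$ with $|M|=p^\ell$ is carried by functions on the set $\mathcal P_M$ of primitive points. These are the points whose reduction modulo $p$ is a frame of full rank $k=\dim M/pM\le b$, and they are orthogonal to constants on each $\Gamma_a$-orbit. Since $N_M$ projects onto complete isotypic components, it suffices to bound $(\mathsf U\mathsf L)^2$ on functions on $\mathcal P_M$ that have mean zero on every orbit. In matrix form, after choosing a basis of $M$, such a point is a $2r\times k$ frame over the ring $\Z/p^a$, with column relations given by the invariant factors of $M$.

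Next I would describe the orbits. Each orbit is a fiber with a fixed Gram matrix $x^TJ_0x$, taken with values in $\operatorname{Sym}$ of the appropriate module. This comes from two ingredients. Theorem~\ref{thm:finite-field-frame-mixing} with $r\ge8k$ gives transitivity modulo $p$. Lemma~\ref{lem:primitive-frame-congruence} then gives transitivity on the congruence fibers, after noting that the shear groups generate $\mathcal K_q$ together with the field-level group. The key step is to show that for every primitive $x$,
\[
  d_{\mathrm{TV}}\bigl((\mathsf U\mathsf L)^2\delta_x,\ \text{uniform on the orbit of }x\bigr)\le \tfrac12 p^{-3r\ell/4}.
\]
The Schur row/column bound, used as in the last step of Theorem~\ref{thm:finite-field-frame-mixing}, then yields~\eqref{eq:module-size-contraction}, because the operator is self-adjoint up to interchanging $\mathsf U$ and $\mathsf L$.

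To prove this total-variation bound I would repeat the four-step argument of Theorem~\ref{thm:finite-field-frame-mixing} with $\mathbb F_q^k$ replaced by $M$. Counting projective lines becomes counting cyclic subgroups, or characters, of $M$.

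The single-shear failure probability is now the probability that $A+XB$ fails to generate $M$. I would bound it by a union over the $O(p^{\ell})$ minimal obstructions $w\in\widehat M$. For a fixed $w$ of order $p^{j}$, the map $X\mapsto XBw$ from alternating matrices onto $(Bw)^\perp$ has equal fibers. This gives probability about $p^{-(r-1)j}$.

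The orbit counts~\eqref{eq:frame-orbit-size} and~\eqref{eq:simultaneous-gram-constraints} become $|M|^{r}\cdot|\operatorname{Sym}^2\text{-part}|^{-1}$. Their $\ell$-dependence is $p^{r\ell-O(b\ell)}$, because the Gram constraints cost at most $|M|^{k}\le p^{b\ell}$. Summing the error terms gives a total of order $p^{\ell(2b+2)-r\ell}$. This is at most $p^{-3r\ell/4}$ once $r\ge80b$ and $p\ge3$, and nothing in the estimate depends on $a$.

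The main obstacle is making this union bound genuinely scale like $p^{-r\ell}$ rather than $p^{-r}$. A naive argument only detects failure of full rank modulo $p$, which costs $p^{-r}$ and is independent of $\ell$. I would first try to control failure in every layer $p^iM/p^{i+1}M$ at once by summing over characters $w$ weighted by their orders. Here the mean-zero-on-orbits hypothesis is essential: contributions from $x$ whose image lies in a proper submodule are exactly the lower-level types, which $N_M$ has already removed. If the one-shot count does not give enough decay, my fallback is induction on $\ell$ along $0\to pM\to M\to M/pM\to0$. In that induction, Lemma~\ref{lem:primitive-frame-congruence} handles the fibers over the reduction and Theorem~\ref{thm:finite-field-frame-mixing} handles each layer, so the gain multiplies across the layers.
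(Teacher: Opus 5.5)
\textbf{There is a genuine gap: your key total-variation estimate is false for every $\ell\ge2$.} No sharper union bound can rescue it. The error is in the reduction step. You assert that the lower-level types in $\mathcal H_M$ are exactly those carried by non-generating points, so that it suffices to control orbit-mean-zero functions on $\mathcal P_M$.

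Lower-level types also enter through quotients. Take any surjection $M\to\mathbb F_p$. The induced map $M^{2r}\to\mathbb F_p^{2r}$ is $\Gamma_a$-equivariant, sends $\mathcal P_M$ to nonzero vectors, and maps each orbit onto an orbit with fibres of equal size. Pulling back therefore embeds the level-one part of $\mathcal H_{\mathbb F_p}$ into the orbit-mean-zero functions on $\mathcal P_M$, intertwining $\mathsf U$ and $\mathsf L$. Equivalently, pushforward along this map contracts total variation, so your chain is never closer to uniform than the chain on $\mathbb F_p^{2r}$, whatever $\ell$ is. The $\ell$-independent cost of about $p^{-r}$ that you noticed is therefore intrinsic, not an artifact of the union bound.

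At level one the contraction really is only polynomial in $p^{-r}$. Work in the symplectic Fourier model for $M=\mathbb F_p$. Take a line $W'\le\mathcal A$ and a function $f$ supported on $(W')^r$ with $f(0)=0$ and $\sum f=0$.
\begin{itemize}
\item $f$ is orthogonal to the functions $\mathbf 1_{W^r}$, which span the trivial type, and $\mathsf Uf=f$.
\item $\mathsf Sf$ vanishes on $(W')^r$.
\item On $W^r$, for each of the $p$ other lines $W$, $\mathsf Sf$ equals $p^{-r/2}$ times a unitary discrete Fourier transform of $f$.
\end{itemize}
Hence $\|\mathsf U\mathsf S\mathsf Uf\|=p^{(1-r)/2}\|f\|$. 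Since $\mathsf U\mathsf S\mathsf U$ is self-adjoint and $(\mathsf U\mathsf L)^2=(\mathsf U\mathsf S\mathsf U)^3\mathsf S$, the norm of $(\mathsf U\mathsf L)^2$ on these functions is at least $p^{-3(r-1)/2}$. This exceeds $p^{-3r\ell/4}$ for $\ell\ge2$, contradicting what your Schur-test step would give on the pulled-back functions, where the orbit average is zero.

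Your fallback induction is still phrased as decay of the distance to uniform, so it hits the same obstruction.

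What is missing is an $L^2$ argument on $N_M\mathcal H_M$ itself. It must use orthogonality to every smaller-module copy, both the submodule copies you handled and the quotient copies you dropped. The paper gets this in the symplectic Fourier model:
\begin{itemize}
\item It expands $\mathsf U=P_{\mathcal C}$ by M\"obius inversion over isotropic subgroups (Lemma~\ref{lem:isotropic-projection-expansion}).
\item A block $P_W\mathsf SP_{W'}$ has norm $|M|^{-r}(|W'||I|)^{r/2}$ with $I=W\cap(W')^\perp$.
\item When $I\ne0$, the block's range lies in $\mathcal K_I\cong\mathcal H_{M'}$ with $|M'|=|M|/|I|$, so $N_M$ kills it (Lemmas~\ref{lem:smaller-module-representations} and~\ref{lem:symplectic-fourier-blocks}). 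Submodule and quotient copies both arise as such $\mathcal K_I$.
\item The surviving blocks have norm at most $p^{-r\ell/2}$. There are at most $p^{8k\ell}$ pairs, with coefficients at most $p^{12k^2}$, which gives $\|\mathsf U\mathsf S\mathsf U\|\le p^{-r\ell/4}$ on level $\ell$.
\item Cubing via $(\mathsf U\mathsf L)^2=(\mathsf U\mathsf S\mathsf U)^3\mathsf S$ finishes the proof.
\end{itemize}
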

\subsubsection{The symplectic Fourier model}

Let $M^*=\operatorname{Hom}(M,\mathbb{Q}/\mathbb{Z} )$, and put
$\mathcal{A}=M^*\oplus M$, with the perfect alternating pairing

\[
  \omega((\xi,x),(\eta,y))=\xi(y)-\eta(x)\in\mathbb{Q}/\mathbb{Z} .
\]

Partial Fourier transformation of the first $r$ coordinates
identifies $\mathcal{H}_M$ with $\ell^2(\mathcal{A}^r)$.
In this model the block swap becomes the self-adjoint unitary

\begin{equation}
  (\mathsf{S}  f)(z)=|M|^{-r}\sum_{w\in\mathcal{A}^r}
  \exp\!\left(2\pi i\sum_i\omega(z_i,w_i)\right)f(w),
  \qquad \mathsf{S} ^2=I.                                      \label{eq:symplectic-fourier-transform}
\end{equation}

An upper shear acts by a phase depending only on the pairings
$\omega(z_i,z_j)$. Thus

\begin{equation}
  \mathsf{U} =P_{\mathcal{C}},\qquad
  \mathcal{C}=\{z:\langle z_1,\ldots,z_r\rangle
  \text{ is isotropic in }\mathcal{A}\},
  \qquad \mathsf{L} =\mathsf{S} \mathsf{U} \mathsf{S} .                               \label{eq:shear-isotropic-projection}
\end{equation}

The same formulas follow directly from character orthogonality
as over a field; the characters now take values of order dividing
$p^a$. The swap normalizes $\Gamma_a$, and its action on every
$\mathcal{H}_{M'}$ preserves the collection of lower-level types.
Consequently $N_M$ commutes with $\mathsf{S} ,\mathsf{U} ,\mathsf{L}$.

\begin{lemma}[Smaller module copies]

  \label{lem:smaller-module-representations}

  For any isotropic subgroup $I\le\mathcal{A}$, the space

  \begin{equation}
    \mathcal{K}_I=\{f:\operatorname{supp}f\subseteq(I^\perp)^r,
    \ f(z+t)=f(z)\text{ for every }t\in I^r\}
    \label{eq:isotropic-quotient-subspace}
  \end{equation}

  is equivalent to $\mathcal{H}_{M'}$ for a finite abelian
  $p$-group $M'$ satisfying

  \begin{equation}
    |M'|=|M|/|I|,\qquad \exp M'\mid p^a,\qquad
    \dim_{\mathbb{F} _p}M'/pM'\le k.              \label{eq:quotient-module-parameters}
  \end{equation}

  In particular $N_M\mathcal{K}_I=0$ when $I\ne0$.

\end{lemma}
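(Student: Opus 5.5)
The plan is to identify $\mathcal{K}_I$ with $\ell^2((I^\perp/I)^r)$ and then realize the symplectic module $I^\perp/I$ in the form $M'^*\oplus M'$ for a suitable $M'$. First I will check that the restriction map is well defined. A function in $\mathcal{K}_I$ is supported on $(I^\perp)^r$ and is invariant under translation by $I^r$. Since $I$ is isotropic, $I\le I^\perp$, so such a function descends to a function on $(I^\perp/I)^r$. Pulling back gives the inverse map. After rescaling by $|I|^{r/2}$, this identification is unitary for counting measure.

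Next I will show that the action of $\Gamma_a$ on $\mathcal{H}_M=\ell^2(\mathcal{A}^r)$ preserves $\mathcal{K}_I$ and corresponds to the analogous action on $\ell^2((I^\perp/I)^r)$. I will do this on the generators $\mathsf{U}(X)$ and $\mathsf{S}$, since $L(Y)=\mathsf{S}U(Y)\mathsf{S}$ up to the swap. The upper shears act by phases $\exp(2\pi i\sum_{i<j}X_{ij}\omega(z_i,z_j))$. On $(I^\perp)^r$ these phases are invariant under $I^r$, because $\omega(I,I^\perp)=0$. They also agree with the phases computed from the induced pairing on $I^\perp/I$. For $\mathsf{S}$, I will split the sum over $w\in\mathcal{A}^r$ into cosets of $I^r$. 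If $f$ is supported on $(I^\perp)^r$ and is $I^r$-invariant, then $\mathsf{S}f$ has the same two properties. Indeed, summing the character over a coset of $I$ kills everything outside $I^\perp$, and translating $z$ by $I$ leaves $\omega(z,w)$ unchanged for $w\in I^\perp$. A count of the normalizations $|M|^{-r}$, $|I|^{r}$, and $|I^\perp/I|^{-r/2}$ then recovers the quotient Fourier operator. The phase conventions should be checked against the definition of $\mathsf{S}$ in the symplectic Fourier model above.

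The main step, and the one I expect to be the obstacle, is producing $M'$ with $I^\perp/I\cong M'^*\oplus M'$ symplectically. A naive guess for $M'$ is a quotient or subgroup of $M$, but a general isotropic $I$ need not be compatible with the decomposition $\mathcal{A}=M^*\oplus M$. My plan is to use the classification of finite symplectic $p$-groups, with $p$ odd. Any finite abelian group carrying a nondegenerate alternating $\mathbb{Q}/\mathbb{Z}$-pairing has a Lagrangian $\Lambda$ and a complementary Lagrangian. It is therefore isometric to $\Lambda^*\oplus\Lambda$ with the standard pairing. I would prove this by induction: pick an element $x$ of maximal order $p^e$, pick $y$ with $\omega(x,y)$ of order $p^e$, split off the hyperbolic plane $\langle x,y\rangle\cong(\Z/p^e)^2$, and pass to its orthogonal complement. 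The pairing on $I^\perp/I$ is nondegenerate because $I^{\perp\perp}=I$ for a perfect pairing on a finite group. Setting $M'=\Lambda$, I get $|M'|^2=|I^\perp/I|=|\mathcal{A}|/|I|^2$, hence $|M'|=|M|/|I|$. The exponent of $M'$ divides that of $\mathcal{A}$, which divides $p^a$.

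For the generator bound, $M'$ is a direct summand of a subquotient of $\mathcal{A}$. By the splitting, $\mathcal{A}$ has rank $2k$, so $I^\perp/I$ has at most $2k$ generators. Because the splitting comes in hyperbolic pairs, $I^\perp/I\cong M'\oplus M'$ up to duality, and $M'$ has rank at most $k$. Finally, the equivalence respects the symplectic-model intertwiners, so every type occurring in $\mathcal{K}_I$ occurs in $\mathcal{H}_{M'}$. When $I\neq0$ we have $|M'|<|M|$, so these types have level below $\ell$. Hence $N_M$, which projects onto the types of level exactly $\ell$, annihilates $\mathcal{K}_I$.
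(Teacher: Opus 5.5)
Your proposal is correct and follows essentially the same route as the paper. You identify $\mathcal{K}_I$ unitarily with $\ell^2((I^\perp/I)^r)$, check that the upper-shear phases and $\mathsf{S}$ descend with quotient normalization $|M'|^{-r}=|M|^{-r}|I|^r$, realize $I^\perp/I\cong (M')^*\oplus M'$, and bound the number of generators through subquotients of $\mathcal{A}$.

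The only difference is that the paper cites the classification of finite symplectic abelian groups (Prasad--Shapiro--Vemuri), whereas you sketch the standard hyperbolic-plane splitting; that sketch is valid as written. It is worth stating explicitly that $k\le b$ is what puts $M'$ in the family, so the types in $\mathcal{K}_I$ really have level below $\ell$.
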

\begin{proof}

  The induced alternating pairing on $I^\perp/I$ is perfect.
  By the standard classification of finite symplectic abelian groups
  \cite[Corollary~5.7]{prasad-shapiro-vemuri}, $I^\perp/I$ is isometric
  to $(M')^*\oplus M'$ for a finite abelian $p$-group $M'$.

  Its size is $|M|^2/|I|^2$, and its exponent divides $p^a$.
  Every subgroup of $\mathcal{A}$ has at most $2k$ generators:
  its $p$-torsion embeds in $\mathcal{A}[p]$, and the number of
  generators of a finite abelian $p$-group equals the dimension
  of its $p$-torsion. Quotients do not increase the number of
  generators. Hence $I^\perp/I$ has at most $2k$ generators,
  which gives \eqref{eq:quotient-module-parameters}.

  Inflate a function on $(I^\perp/I)^r$ to $(I^\perp)^r$,
  multiply it by $|I|^{-r/2}$, and set it to zero elsewhere.
  This is a unitary identification with \eqref{eq:isotropic-quotient-subspace}. Upper-shear phases
  descend to the quotient, since all rows annihilate $I$.
  In \eqref{eq:symplectic-fourier-transform}, summing over $I^r$ gives zero off $(I^\perp)^r$;
  on that set it gives a factor $|I|^r$. The quotient Fourier
  normalization is therefore $|M'|^{-r}$. The identification
  intertwines both upper shears and $\mathsf{S}$, hence lower shears.
  This proves the representation equivalence.

\end{proof}

\subsubsection{A subgroup expansion of the isotropic projection}

For a subgroup $W\le\mathcal{A}$, let $P_W$ be multiplication
by $1_{W^r}$. Let $\mu(W,V)$ be the Möbius function of the
subgroup lattice, and for isotropic $W$ put

\begin{equation}
  c(W)=\sum_{\substack{V\ge W\\V\text{ isotropic}}}
  \mu(W,V).       \label{eq:isotropic-mobius-coefficients}
\end{equation}

\begin{lemma}

  \label{lem:isotropic-projection-expansion}

  \begin{equation}
    P_{\mathcal{C}}=\sum_{W\text{ isotropic}}c(W)P_W,
    \qquad |c(W)|\le p^{6k^2},
    \qquad \#\{W\le\mathcal{A}\}\le p^{4k\ell}.           \label{eq:isotropic-projection-expansion}
  \end{equation}

\end{lemma}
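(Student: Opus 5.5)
The plan is to prove the three assertions in turn: a Möbius-inversion identity for the projection, a bound on the coefficients, and a count of subgroups.

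\textbf{The identity.} A point $z\in\mathcal A^r$ lies in $\mathcal C$ exactly when the subgroup $\langle z\rangle=\langle z_1,\dots,z_r\rangle$ is isotropic. Also $P_W$ multiplies by $1[\langle z\rangle\le W]$. Hence both sides of the identity are multiplication operators, and it suffices to check, for every subgroup $Z\le\mathcal A$, that
\[
  1[Z\text{ isotropic}]=\sum_{W\text{ isotropic},\,W\ge Z}c(W).
\]
Substitute the definition of $c(W)$ and exchange the order of summation. Any $V$ that occurs is isotropic, and every $W$ with $Z\le W\le V$ is then automatically isotropic. The right side therefore becomes
\[
  \sum_{V\ge Z\text{ isotropic}}\ \sum_{Z\le W\le V}\mu(W,V).
\]
By the defining property of the Möbius function, the inner sum is $\delta_{Z,V}$. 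The whole expression thus equals $1[Z\text{ isotropic}]$, as required.

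\textbf{Coefficient bound.} Here I would use the known form of the Möbius function on the subgroup lattice of a finite abelian $p$-group. One has $\mu(W,V)=0$ unless $V/W$ is elementary abelian, say of rank $j$. In that case $\mu(W,V)=(-1)^jp^{\binom j2}$. Every subgroup of $\mathcal A$ has at most $2k$ generators (as noted in Lemma~\ref{lem:smaller-module-representations}), so $V/W$ has rank $j\le 2k$. The relevant $V$ correspond to subspaces of $\mathcal A/W$ contained in its socle, which has $\mathbb F_p$-dimension at most $2k$. The number of such subspaces is at most
\[
  \sum_j\binom{2k}{j}_p\le (2k+1)p^{k^2}.
\]
Each term is at most $p^{\binom{2k}2}\le p^{2k^2}$. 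Multiplying, we get
\[
  |c(W)|\le(2k+1)p^{3k^2}\le p^{6k^2},
\]
using $2k+1\le 3^{k}\le p^{k}\le p^{3k^2}$ for odd $p$. The case $k=0$ is trivial.

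\textbf{Subgroup count.} We have $|\mathcal A|=p^{2\ell}$, and any subgroup needs at most $2k$ generators. Each subgroup is therefore the image of some generating tuple in $\mathcal A^{2k}$, which gives
\[
  \#\{W\}\le|\mathcal A|^{2k}=p^{4k\ell}.
\]

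The main point needing care is the exact Möbius formula for subgroup lattices of abelian $p$-groups, together with the reduction to the elementary abelian socle of $\mathcal A/W$. Everything else is routine counting.
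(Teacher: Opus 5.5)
Your proposal is correct and follows essentially the same route as the paper: Möbius inversion checked on the subgroup generated by the rows, the classical formula $\mu(0,T)=(-1)^jp^{\binom j2}$ restricted to elementary quotients $V/W\le(\mathcal A/W)[p]$ of dimension at most $2k$, and the subgroup count via generating $2k$-tuples. The only difference is that you count the contributing $V$ with Gaussian binomials rather than the paper's cruder $p^{4k^2}$ count of generating tuples. Your bound $(2k+1)p^{k^2}$ does need a line of justification, since for $k=1$ the single term $\binom{2}{1}_p=p+1$ already exceeds $p^{k^2}=p$. The inequality itself is still true, and in any case the large slack below $p^{6k^2}$ absorbs any constant factor there.
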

\begin{proof}

  For an isotropic subgroup $R$, every interval $[R,V]$ with
  isotropic $V$ is the full subgroup interval. Möbius inversion
  in \eqref{eq:isotropic-mobius-coefficients} gives $\sum_{W\ge R}c(W)=1$. If $R$ is nonisotropic, no
  isotropic subgroup contains $R$. Apply these facts to the subgroup generated
  by the rows to obtain the projection identity.

  For a finite abelian $p$-group $T$, the classical subgroup-lattice
  formula~\cite[Corollary~3.5]{hawkes-isaacs-ozaydin} gives

  \begin{equation}
    \mu(0,T)=
    \begin{cases}
      (-1)^j p^{j(j-1)/2},&T\simeq\mathbb{F} _p^j,\\
      0,&T\text{ is not elementary abelian}.
    \end{cases}                                          \label{eq:subgroup-mobius-formula}
  \end{equation}

  Only elementary subgroups $V/W$ of $(\mathcal{A}/W)[p]$ contribute
  to \eqref{eq:isotropic-mobius-coefficients}. The latter space has dimension at most $2k$. It has
  at most $p^{4k^2}$ subspaces, by counting generating $2k$-tuples,
  and each value in \eqref{eq:subgroup-mobius-formula} has magnitude at most $p^{2k^2}$.
  This proves the coefficient bound. Finally every subgroup of
  $\mathcal{A}$ has at most $2k$ generators, so counting their
  generating tuples gives $|\mathcal{A}|^{2k}=p^{4k\ell}$.

\end{proof}

\begin{lemma}[Fourier blocks]

  \label{lem:symplectic-fourier-blocks}

  For isotropic $W,W'$, put $I=W\cap(W')^\perp$. Then

  \begin{equation}
    \|P_W\mathsf{S}  P_{W'}\|=|M|^{-r}(|W'||I|)^{r/2}.         \label{eq:symplectic-fourier-block-norm}
  \end{equation}

  After compression by $N_M$, this block is zero unless the pairing
  between $W$ and $W'$ is perfect. A surviving block has
  $|W|=|W'|\le|M|$ and norm at most $|M|^{-r/2}$.

\end{lemma}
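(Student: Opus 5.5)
The norm formula will come from computing $KK^*$ for the kernel of the block, which factors over the $r$ coordinates. The vanishing after compression will come from showing that the range of the block (or of its adjoint) lies in one of the spaces $\mathcal K_I$ of Lemma~\ref{lem:smaller-module-representations}, which $N_M$ annihilates.

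\textbf{Norm formula.} By~\eqref{eq:symplectic-fourier-transform}, $P_W\mathsf S P_{W'}$ maps $\ell^2((W')^r)$ to $\ell^2(W^r)$ with kernel
\[
  |M|^{-r}\exp\bigl(2\pi i\textstyle\sum_i\omega(z_i,w_i)\bigr).
\]
This is the $r$-fold tensor power of the one-coordinate kernel $K(z,w)=|M|^{-1}e^{2\pi i\omega(z,w)}$ on $W\times W'$. Operator norms are multiplicative under tensor products, so it suffices to show $\|K\|=|M|^{-1}(|W'||I|)^{1/2}$.

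Character orthogonality over $W'$ gives
\[
  (KK^*)(z,z')=|M|^{-2}\sum_{w\in W'}e^{2\pi i\omega(z-z',w)}
  =|M|^{-2}|W'|\,\mathbf 1[z-z'\in W\cap(W')^\perp].
\]
Thus $KK^*$ equals $|M|^{-2}|W'|$ times the indicator of the coset relation modulo $I$ in $W$. That indicator is $|I|$ times the orthogonal projection onto $I$-invariant functions. Hence $\|K\|^2=\|KK^*\|=|M|^{-2}|W'||I|$, and raising to the $r$th power proves~\eqref{eq:symplectic-fourier-block-norm}.

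\textbf{Vanishing after compression.} Suppose first $I\ne0$. The range of $P_W\mathsf SP_{W'}$ is spanned by the functions $z\mapsto e^{2\pi i\sum_i\omega(z_i,w_i)}$ restricted to $W^r$ with $w\in (W')^r$. These are invariant under translation by $I^r$, because $I\subseteq(W')^\perp$. They are supported in $W^r\subseteq (I^\perp)^r$, because $I\subseteq W$ and $W$ isotropic give $W\subseteq W^\perp\subseteq I^\perp$. Moreover $I\le W$ is isotropic. So the range lies in $\mathcal K_I$, and Lemma~\ref{lem:smaller-module-representations} gives $N_MP_W\mathsf SP_{W'}=0$.

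Suppose instead $J=W'\cap W^\perp\ne0$. Since $\mathsf S$ is self-adjoint, the adjoint of the block is $P_{W'}\mathsf SP_W$. The same argument with $W$ and $W'$ exchanged shows its range lies in $\mathcal K_J$, so $N_M$ kills it, and therefore $P_W\mathsf SP_{W'}N_M=0$.

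Either way, the compression $N_MP_W\mathsf SP_{W'}N_M$ vanishes unless both $W\cap(W')^\perp$ and $W'\cap W^\perp$ are zero. This is exactly the statement that the pairing $W\times W'\to\mathbb Q/\mathbb Z$ induced by $\omega$ is perfect.

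\textbf{Surviving blocks.} For a perfect pairing, $W'\cong\operatorname{Hom}(W,\mathbb Q/\mathbb Z)$, so $|W|=|W'|$, and $I=0$. The norm formula then gives $|M|^{-r}|W'|^{r/2}$. Since $\omega$ is perfect on $\mathcal A$, we have $|W||W^\perp|=|\mathcal A|=|M|^2$. Isotropy gives $W\subseteq W^\perp$, hence $|W|\le|M|$. Therefore the norm is at most $|M|^{-r/2}$.

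The main point needing care is the support inclusion $W^r\subseteq(I^\perp)^r$ together with the $I$-invariance of the range, which is what places the range inside $\mathcal K_I$. Both follow from the isotropy of $W$, as checked above.
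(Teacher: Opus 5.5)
Your proposal is correct and follows the paper's proof essentially step for step. The paper uses the same one-coordinate Gram computation $KK^*=|M|^{-2}|W'|\,\mathbf 1[z-z'\in I]$ tensored over the $r$ rows, the same placement of the block's range in $\mathcal K_I$ with the adjoint handling $W'\cap W^\perp$, and the same bound $|W|^2\le|\mathcal A|=|M|^2$ from isotropy. The one point the paper states more explicitly than you do is that trivial left and right kernels give injections $W\hookrightarrow\widehat{W'}$ and $W'\hookrightarrow\widehat W$, which by counting force the pairing to be perfect.
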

\begin{proof}

  For one row, the unnormalized row Gram matrix has entry $|W'|$
  when two rows differ by $I$, and zero otherwise. Its nonzero
  eigenvalues are $|W'||I|$. Tensor over $r$ rows and include
  the normalization in \eqref{eq:symplectic-fourier-transform}, proving \eqref{eq:symplectic-fourier-block-norm}.

  The range is supported on $W^r$ and invariant under translation
  by $I^r$. Since $W$ is isotropic, $W\subseteq I^\perp$.
  Thus the range lies in $\mathcal{K}_I$. Lemma~\ref{lem:smaller-module-representations}
  kills it when $I\ne0$; the adjoint gives the right-kernel
  condition. If both kernels vanish, the maps into the dual groups
  are injective in both directions, so the pairing is perfect.
  Finally $W\subseteq W^\perp$ implies $|W|^2\le|\mathcal{A}|$.

\end{proof}

\begin{proof}[Proof of Theorem~\ref{thm:module-size-contraction}]

  Choose $M$ of minimum size for the type, with $k\le b$.
  Put $A=\mathsf{U} \mathsf{S} \mathsf{U}$ and expand both indicators by \eqref{eq:isotropic-projection-expansion}.
  Lemma~\ref{lem:symplectic-fourier-blocks} and the triangle inequality give

  \begin{equation}
    \begin{split}
      \|A|_{N_M\mathcal{H}_M}\|
      &\le p^{8k\ell+12k^2-r\ell/2}\\
      &\le p^{20k\ell-r\ell/2}
      \le p^{-r\ell/4},
    \end{split} \label{eq:first-occurrence-compression-bound}
  \end{equation}

  since $\ell\ge k$ and $r\ge80k$.
  Since $\mathsf L=\mathsf S\mathsf U\mathsf S$ and $\mathsf U^2=\mathsf U$,
  we have $(\mathsf{U} \mathsf{L} )^2=A^3\mathsf{S}$, and all operators preserve
  the sum $N_M\mathcal H_M$ of the complete isotypic components of level $\ell$.
  Cubing \eqref{eq:first-occurrence-compression-bound} proves \eqref{eq:module-size-contraction}.

\end{proof}

\subsubsection{Exact averaging at small odd moduli}

Let $\Gamma_m$ be the group generated by the integer shears
modulo $m$, and let $J_m$ be its Haar orbit average.
Here $\rho_m$ denotes the permutation representation on the functions
of an $s$-column input, with action
$[\rho_m(g)f](X)=f(g^{-1}X)$.
For an integer matrix $G$, write
$\|G\|_{\mathrm{row}}=\max_i\sum_j|G_{ij}|$.

\begin{theorem}

  \label{thm:finite-module-design}

  For integers $B\ge3$, $s\ge1$, and $r\ge160s$, there is a
  finite positive rational law on integral determinant-one
  $J_0$-isometries $G$ such that

  \begin{equation}
    \mathbb{E} \rho_m(G)=J_m\quad\text{on }(\mathbb{Z} /m)^{2r\times s}
    \quad(m\le B\text{ odd}).          \label{eq:exact-modular-orbit-average}
  \end{equation}

  Every support element is a product of $32$ alternating shears,
  and

  \begin{equation}
    \|G\|_{\mathrm{row}}\le(1+rB^4)^{32}. \label{eq:shear-word-height}
  \end{equation}

\end{theorem}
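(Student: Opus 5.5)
Our plan is to build the law as a \emph{corrected product of shear averages}. We start from the positive law $\mu$ of a random word $(\mathsf U\mathsf L)^2$ repeated, with the alternating entries taken uniformly in $\{0,\ldots,N-1\}$ for a common modulus $N$ divisible by every odd $m\le B$. A natural choice is $N=\prod_{p\le B\text{ odd}}p^{a_p}$ with $p^{a_p}\le B<p^{a_p+1}$, so the entries are at most $B^{O(1)}$. Reducing modulo each odd $m\le B$ gives exactly uniform shears modulo $m$, since $m\mid N$.

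By the Chinese remainder theorem, $\rho_m$ on $(\Z/m)^{2r\times s}$ factors through the product over primes $p^a\,\|\,m$. Its nontrivial isotypic pieces are tensor products in which at least one prime-power factor carries a type of level $\ell\ge1$ in the sense of Theorem~\ref{thm:module-size-contraction}, with $M$ having $b\le s$ generators. The hypothesis $r\ge160s\ge80b$ puts us in the range of that theorem. Each $(\mathsf U\mathsf L)^2$ block therefore contracts such a component by $p^{-3r\ell/4}$, uniformly in the exponent $a$. The trivial-type part is exactly the orbit average $J_m$; orbits here mean orbits of the generated group $\Gamma_m$, and Lemma~\ref{lem:primitive-frame-congruence} together with Theorem~\ref{thm:finite-field-frame-mixing} identify them with Gram-type invariants.

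Next we pass to the finite group $G_N=\Gamma_N$, generated by the shears modulo $N$. We let $\mathcal W$ be the span of the coefficients of all nontrivial irreducible types occurring in any $\rho_m$ with $m\le B$ odd, and $\mathcal V$ the span of constants, $\mathcal W$, and products of two elements of $\mathcal W$. Products of two level-$\ell$ types occur inside representations on $2s$-column inputs, so $\mathcal V$ is still controlled by the contraction theorem with $b\le2s$; this is why the hypothesis has $160s$ rather than $80s$. With $h$ the density of $\mu$ (four blocks of $(\mathsf U\mathsf L)^2$, i.e.\ $16$ shears), we must show $\|P_{\mathcal V}h-1\|_{\mathcal A}\le1/4$. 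By \eqref{eq:fourier-algebra-norm}, this norm equals $\sum_\pi d_\pi\|\widehat h(\pi)\|_{\mathrm{tr}}$ over nontrivial $\pi$ in $\mathcal V$. We bound each term by $d_\pi^2\|\widehat h(\pi)\|$, using the contraction $p^{-3r\ell/4}$ per block. The dimensions $d_\pi\le|M|^{2r}=p^{2r\ell}$ are then beaten by taking enough blocks so that the decay exceeds $p^{4r\ell}$ times the count of types, which the subgroup count $p^{4k\ell}$ of Lemma~\ref{lem:isotropic-projection-expansion} dominates. This count is the main obstacle: we must sum over all prime combinations and levels and still land below $1/4$. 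We expect the budget of $16$ shears for $\mu$ to suffice.

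Lemma~\ref{lem:positive-correction} then gives real $a\in\mathcal W$ with $1-a\in[2/3,4/3]$ and the law $\sigma$ annihilating $\mathcal W$. Hence $\mathbb E_\sigma\rho_m(G)$ kills every nontrivial component and equals $J_m$, which is \eqref{eq:exact-modular-orbit-average}. The weights are rational because $\mu$ is rational. The correction only reweights words modulo $N$, but we still need integral representatives. We handle this by composing twice, taking $\sigma$ on words and realizing it by a second independent uniform $16$-shear word whose lifts are integral; this gives $32$ shears. Each shear $U(X)$ with $|X_{ij}|\le B^4$ has row norm at most $1+rB^4$, and submultiplicativity of $\|\cdot\|_{\mathrm{row}}$ gives \eqref{eq:shear-word-height}. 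Finally, shears are integral $J_0$-isometries of determinant one.
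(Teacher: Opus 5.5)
Your architecture is the paper's (a random $\mathsf U\mathsf L$ word, contraction by module size with at most $2s$ generators, a Fourier-algebra bound on $P_{\mathcal V}h-1$, then Lemma~\ref{lem:positive-correction}). However, the height bound \eqref{eq:shear-word-height} fails for your word. Any $N$ divisible by every odd $m\le B$ is a multiple of the odd part of $\operatorname{lcm}(1,\ldots,B)$, which is at least $\prod_{3\le p\le B}p=e^{(1+o(1))B}$. Parameters uniform in $\{0,\ldots,N-1\}$ are therefore exponential in $B$, not $B^{O(1)}$, and your closing assumption $|X_{ij}|\le B^4$ contradicts your own choice. This matters downstream: $B=(u+1)^{s+1}$, and the height estimate in Corollary~\ref{cor:structured-residual} needs the word to contribute only $O(\log d+\log B)$ entry bits. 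The missing idea is the parameter law $Z=\sum_{j=1}^{B^2}Z_j$ with $Z_j\sim\Unif\{0,\ldots,j-1\}$. It is exactly uniform modulo every $q\le B^2$, yet $0\le Z<B^4/2$. With this law the word on $\Gamma_D$ is no longer a product of uniform local words. You therefore also need the paper's observation that every type in $\mathcal V$ (the coefficients of $2s$-column actions) factors through a conductor $q\le B^2$, such as $\operatorname{lcm}(m,n)$ with $m,n\le B$. At that conductor the reduced word is the uniform shear word, and its average tensor-factors over the primes by CRT.

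Your block count also does not close. With $d_\pi\|\widehat h(\pi)\|_{\mathrm{tr}}\le d_\pi^2\|\widehat h(\pi)\|$, the relevant count is the sum of squared dimensions of level-$\ell$ types, which is at most $P(\ell)p^{4r\ell}$. This holds because the coefficient space of $\mathcal H_M$ has dimension at most $|M|^{4r}$ and there are $P(\ell)\le 2^{\ell-1}$ groups of order $p^\ell$; the subgroup count of Lemma~\ref{lem:isotropic-projection-expansion} plays no role here. Four $(\mathsf U\mathsf L)^2$ blocks contract only by $p^{-3r\ell}$ via Theorem~\ref{thm:module-size-contraction}. That leaves $P(\ell)p^{r\ell}$ per level, so the series diverges. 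With this estimate at least six blocks are needed; the paper takes eight, obtaining $\sum_\ell 2^{\ell-1}p^{-2r\ell}$ and the bound $e^{1/16}-1<1/4$. Finally, your ``compose twice'' step rests on a misconception. The weight $1-a$ is a function on $\Gamma_D$, and it is simply pulled back to each integer word through its reduction modulo $D$. This leaves the support, integrality, and height unchanged. The $32$ shears in the statement are the eight $\mathsf U\mathsf L\mathsf U\mathsf L$ blocks of a single word, not $16+16$.
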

\begin{proof}

  Let $D$ be the odd part of $\operatorname{lcm}(1,\ldots,B)$.
  CRT identifies $\Gamma_D$ with $\prod_{p^a\parallel D}\Gamma_a$:
  each local root parameter can be prescribed independently.
  Sample eight independent blocks $ULUL$. Every free integer
  root coordinate is an independent copy of

  \begin{equation}
    Z=\sum_{j=1}^{B^2}Z_j,
    \qquad Z_j\sim\operatorname{Unif}\{0,\ldots,j-1\}.
    \label{eq:shear-parameter-law}
  \end{equation}

  This is uniform modulo every integer at most $B^2$, and
  $0\le Z<B^4/2$. Let $h$ be the Haar density of the reduced
  word law on $\Gamma_D$.

  Let $\mathcal{W}$ be the sum of the mean-zero coefficient spaces
  of the $s$-column actions at all odd $m\le B$. Let $\mathcal{V}$
  be the sum of the $2s$-column coefficient spaces at divisors
  $q\mid D$ with $q\le B^2$, including constants. These are
  complete irreducible coefficient spaces closed under complex
  conjugation, and $\mathcal{W}\mathcal{W}\subseteq\mathcal{V}$:
  use $\operatorname{lcm}(m,n)$ and zero columns at the missing
  CRT factors.

  At a fixed prime, use first levels in the family with at most
  $b=2s$ generators. There are at most $P(\ell)\le2^{\ell-1}$
  isomorphism types of groups of size $p^\ell$: their invariant
  factor types are partitions of $\ell$, which inject into its
  ordered compositions. Therefore the sum of squared dimensions
  of irreducible types first appearing at level $\ell$ is at most

  \begin{equation}
    P(\ell)p^{4r\ell}.              \label{eq:representation-dimension-sum}
  \end{equation}

  The coefficient space of each $\mathcal{H}_M$ has dimension
  at most $(\dim\mathcal{H}_M)^2=|M|^{4r}$.
  By \eqref{eq:module-size-contraction}, the eight-block operator norm on such a type is at most
  $p^{-6r\ell}$.

  Every type included in $\mathcal{V}$ factors through a conductor
  $q\le B^2$. At that conductor the parameters in \eqref{eq:shear-parameter-law} are
  uniform under CRT, so its averaged representation is the tensor
  product of the local averages. In Fourier algebra norm this gives

  \begin{equation}
    \begin{split}
      \|P_{\mathcal{V}}h-1\|_{\mathcal{A}}
      &\le\prod_{3\le p\le B}
      \left(1+\sum_{\ell\ge1}2^{\ell-1}p^{-2r\ell}\right)-1
      <\frac14.
    \end{split} \label{eq:design-density-error}
  \end{equation}

  The sum in one factor is $p^{-2r}/(1-2p^{-2r})\le2p^{-2r}$.
  The sum over primes is at most $4\cdot3^{-2r}\le1/16$;
  thus the product minus one is at most $e^{1/16}-1<1/4$.
  The larger product bounds a sum of nonnegative errors. Independence on the entire group
  $\Gamma_D$ is not assumed.

  Apply Lemma~\ref{lem:positive-correction}: solve
  $a+P_{\mathcal{W}}((P_{\mathcal{V}}h-1)a)=P_{\mathcal{W}}(h-1)$
  in Fourier algebra norm and use the positive weight $1-a$.
  It lies in $[2/3,4/3]$ before normalization and has rational
  values in a rational coefficient basis. Lift the weight to
  the integer word by its reduction modulo $D$.
  The correction gives \eqref{eq:exact-modular-orbit-average} without changing the word support.
  Each shear has row norm at most $1+rB^4$, proving \eqref{eq:shear-word-height}.

\end{proof}

\subsubsection{Determinant-sign cancellation over odd residue rings}

The following orbit argument is separate from the design's
analytic estimate.

\begin{lemma}

  \label{lem:shear-congruence-kernel}

  For $r\ge3$ and odd $p$, the group $\Gamma_a$ contains
  the full kernel of

  \begin{equation}
    \operatorname{SO}(J_0,\mathbb{Z} /p^a)\longrightarrow
    \operatorname{SO}(J_0,\mathbb{F} _p).       \label{eq:orthogonal-congruence-reduction}
  \end{equation}

  If $r\ge8s$, every orthogonal isometry modulo $p^a$, including
  one of determinant $-1$, preserves each $\Gamma_a$-orbit on
  $s$-column inputs.

\end{lemma}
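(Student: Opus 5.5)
We prove the two assertions separately. The first reduces to a statement over $\mathbb F_p$ followed by a congruence-kernel argument; the second combines the first with orbit information from Theorem~\ref{thm:finite-field-frame-mixing} and Lemma~\ref{lem:primitive-frame-congruence}.

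\emph{Step 1: the congruence kernel.} For $a=1$ there is nothing to prove, so let $a\ge2$. Write $\mathcal K_q$ for the kernel of~\eqref{eq:orthogonal-congruence-reduction}. Its successive layers $\mathcal K^{(s)}/\mathcal K^{(s+1)}$, the elements congruent to $I$ modulo $p^s$ taken modulo $p^{s+1}$, are identified with the orthogonal Lie algebra $\mathfrak{so}(J_0,\mathbb F_p)=\{A: A^TJ_0+J_0A=0\}$. The identification is $I+p^sA\mapsto A$, realized by the Cayley map used in Lemma~\ref{lem:primitive-frame-congruence}. In block form, $A=\begin{pmatrix}\alpha&X\\Y&-\alpha^T\end{pmatrix}$ with $X,Y$ alternating.

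Shears $U(p^sX)$ and $L(p^sY)$ lie in $\Gamma_a$ and produce the off-diagonal parts directly. The diagonal part $\mathrm{diag}(\alpha,-\alpha^T)$ comes from commutators. Modulo $p^{2s}$ one has $[U(p^sX),L(Y)]\equiv I+p^s(\ldots)$, and more usefully the group commutator of $U(X)$ with $L(p^sY)$ has first-order term $p^s[\mathrm{ad}]$ equal to $\mathrm{diag}(XY,-YX)$ plus off-diagonal terms. For $r\ge3$, the products $XY$ with $X,Y$ alternating span $\mathfrak{gl}_r(\mathbb F_p)$. This holds because the map $\wedge^2\otimes\wedge^2\to\mathfrak{gl}_r$ is onto once $r\ge3$, as one checks on elementary matrices $E_{ij}$ using a third index.

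Hence every layer is generated by $\Gamma_a\cap\mathcal K^{(s)}$. Induction downward from $s=a-1$ then shows $\mathcal K_q\subseteq\Gamma_a$. The element $U(X)$ needed here is an honest integer shear, so the commutator stays inside $\Gamma_a$.

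\emph{Step 2: orbits of $\Gamma_a$ on $s$-column inputs.} Let $g\in O(J_0,\mathbb Z/p^a)$ and let $Xi$ be an $s$-column input. We must show that $gXi$ lies in the $\Gamma_a$-orbit of $Xi$. The approach is to describe the orbits first. The image of $\Gamma_a$ modulo $p$ is the group generated by the alternating shears over $\mathbb F_p$, and this contains $\Omega(J_0,\mathbb F_p)$ at least.

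Orbit invariants of $\Gamma_a$ on $(\mathbb Z/p^a)^{2r\times s}$ include the Gram matrix $Xi^TJ_0Xi$ and the submodule structure. I would argue as follows. After applying the Smith-type filtration, decompose $Xi$ into primitive pieces (frames of full rank modulo $p$) scaled by powers of $p$. Orbits of such data under the full isometry group are determined by Gram data, and we aim to show that the same data already determine $\Gamma_a$-orbits. Concretely, first match the reduction modulo $p$ using the transitivity assertion of Theorem~\ref{thm:finite-field-frame-mixing}. It applies since $r\ge8s\ge8k$, with the frame there being the block of primitive columns. Then lift the match using Lemma~\ref{lem:primitive-frame-congruence} together with Step 1, which places $\mathcal K_q$ inside $\Gamma_a$.

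Since $g$ preserves the Gram matrix, this shows that $gXi$ and $Xi$ lie in one $\Gamma_a$-orbit. The argument does not depend on $\det g$.

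\emph{Main obstacle.} The hardest part is the non-primitive case. When the columns of $Xi$ are dependent modulo $p$, the frame fibers $\mathcal X_C$ apply only after an inductive peeling of $p$-adic levels. Each level has to be matched while fixing the previously matched columns. For that I would apply Theorem~\ref{thm:finite-field-frame-mixing} inside the stabilizer, which again contains shear subgroups on a complement of dimension at least $r-2s$, still large enough to keep transitivity.

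A second delicate point is the determinant $-1$ case over $\mathbb F_p$. Shears generate only a subgroup of $SO$ (essentially $\Omega$), so spinor-norm and determinant obstructions must be shown invisible on $s$-column data. This follows because $r\ge8s$ leaves a large orthogonal complement of the span of $Xi$. Any isometry can be corrected on that complement, by a reflection or by an element of nontrivial spinor norm, to an element of $\Omega$ agreeing with it on the span.
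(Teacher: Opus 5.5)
Your Step~1 and the primitive case of Step~2 follow the paper's proof. The $p^s$-level shears give the alternating off-diagonal blocks, and conjugating by an honest integer shear yields the diagonal block $\mathrm{diag}(XY,-YX)$. Products of alternating matrices span $\mathfrak{gl}_r$ for $r\ge3$; for the diagonal you also need $2$ invertible, which holds because $p$ is odd. A primitive frame is then matched by finite-field transitivity (Theorem~\ref{thm:finite-field-frame-mixing}), a lift to $\Gamma_a$, and Lemma~\ref{lem:primitive-frame-congruence} together with $\mathcal K_q\subseteq\Gamma_a$.

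The gap is the non-primitive case, which you leave as a plan. You propose to peel $p$-adic levels and to apply frame transitivity inside stabilizers, which you assert contain shear subgroups on a complement of dimension at least $r-2s$. Neither that containment nor the existence of a suitable hyperbolic complement is established, and the span of the input may be degenerate. None of this is needed.

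The missing observation is that Smith normal form over $\mathbb{Z}/p^a$ writes every input as $X=YTV$. Here $Y$ is a \emph{single} primitive $2r\times s$ frame, consisting of all $s$ columns, including those that $T$ multiplies by $p^e$ or by $0$. The factor $T$ is diagonal and $V\in\mathrm{GL}_s(\mathbb{Z}/p^a)$. For any isometry $g$, the frame $gY$ is primitive with the same Gram matrix as $Y$. The primitive case therefore supplies $\gamma\in\Gamma_a$ with $\gamma Y=gY$, and right multiplication by $TV$ gives $\gamma X=gX$.

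The invariant that matters is the Gram matrix of $Y$, not of $X$. Your final inference in Step~2 (``since $g$ preserves the Gram matrix'') is valid only for primitive inputs, because the Gram matrix of a non-primitive $X$ does not determine its orbit. For example, take $s=1$ and $a=3$. The vectors $p(e_1+e_{r+1})$ and $p(e_1+(1+p)e_{r+1})$ both have Gram value $2p^2$ modulo $p^3$. They nevertheless lie in different orbits of the full isometry group, since matching them would require matching $e_1+e_{r+1}$ and $e_1+(1+p)e_{r+1}$ modulo $p^2$, whose Gram values $2$ and $2+2p$ differ.

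Finally, your second ``delicate point'' about determinant $-1$ and spinor norm is moot. Transitivity on $\mathcal X_C$ relates $\overline Y$ and $\overline{gY}$ whatever $g$ is, so no correction on an orthogonal complement is required.
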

\begin{proof}

  For $1\le j<a$, the layer of the kernel from $p^{j+1}$ to
  $p^j$ is the additive Lie algebra

  \begin{equation}
    \left\{
      \begin{pmatrix}Z&B\\C&-Z^t
      \end{pmatrix}:
    B,C\text{ alternating}\right\}.
    \label{eq:orthogonal-congruence-layer}
  \end{equation}

  The shears with parameters divisible by $p^j$ supply the
  $B,C$ directions. Conjugate an upper such direction by
  $L(Y)$, with $Y$ alternating. Modulo the already supplied
  directions its $Z$ block is $-BY$.
  Products of alternating matrices span all $r$-by-$r$ matrices:
  for distinct $i,h,j$,

  \[
    (E_{ih}-E_{hi})(E_{hj}-E_{jh})=E_{ij},\qquad
    (E_{ij}-E_{ji})^2=-E_{ii}-E_{jj}.
  \]

  The second identity supplies every diagonal because $2$ is
  invertible and three distinct indices are available.
  Thus the image of $\Gamma_a$ in every layer \eqref{eq:orthogonal-congruence-layer} is surjective.
  Matching the layers successively proves \eqref{eq:orthogonal-congruence-reduction}.

  For a primitive $s$-frame $Y$, the finite-field orbit theorem
  in Section~\ref{sec:finite-field-frame-mixing} supplies an element
  of $\Gamma_1$ taking $\overline Y$ to $\overline{HY}$ for
  any isometry $H$: their Gram matrices agree. Lift it to
  $\Gamma_a$. The resulting frame and $HY$ have the same
  reduction and exact Gram matrix. The principal-kernel
  transitivity lemma in Section~\ref{sec:primitive-frame-congruence},
  together with \eqref{eq:orthogonal-congruence-reduction}, matches them exactly. For an arbitrary
  input, write $X=YTV$ by Smith normal form with $Y$ primitive,
  $T$ diagonal, and $V$ invertible; push this equality through
  right multiplication by $TV$. CRT gives the same conclusion
  over every odd composite modulus.

\end{proof}

Conjugate to $J=\operatorname{diag}(I_r,-I_r)$ coordinates by
$P=\left(
  \begin{smallmatrix}I&I\\I&-I
\end{smallmatrix}\right)$.
Follow $G$ in Theorem~\ref{thm:finite-module-design} by a uniform compact
signed permutation $H\in\mathcal H$ as in Section~\ref{sec:split-isometries}.
The rational isometry
$PGPH/2$ has clearing factor two. For each odd $m\le B$,
the Haar orbit average is invariant under every conjugated $H$
by Lemma~\ref{lem:shear-congruence-kernel}. Hence its determinant-signed
$s$-column operator is zero, since $\mathbb{E} _H\det H=0$.

\subsubsection{Residual height and Fourier zeros}

We now prove the residual statement used in Lemma~\ref{lem:residual-interface}.

\begin{corollary}

  \label{cor:structured-residual}

  Let $r$ be a power of two, $d=2r$, $r\ge160s$, $B\ge3$,
  and $L\ge1$. There is a finite rational signed residual
  $\nu$ on integer $d\times d$ matrices, of mass zero and variation
  at most one, with correlation
  $\langle g_d,\nu\rangle\ge403/440$ and entry height $K$ satisfying

  \begin{equation}
    \log_2(K+1)=O(\log d+\log B+\log(L+1)). \label{eq:structured-residual-height}
  \end{equation}

  Its variation on matrices of corank at least two is at most $1/40$.
  All Fourier derivatives through order $L$, including order zero,
  vanish at $Q/t$ whenever $t\le B$ and $Q$ factors through at
  most $s$ columns modulo $t$. In particular

  \begin{equation}
    |\widehat\nu(Q/t+\Delta)|\le2^{-(L+1)}
    \quad\text{if }\|\Delta\|_{\mathrm{entry,1}}\le(L+1)/(40K).
    \label{eq:structured-residual-neighborhood}
  \end{equation}

\end{corollary}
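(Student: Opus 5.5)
The residual $\nu$ will be built as the isometry-difference measure of Lemma~\ref{lem:isometry-residual-correlation}, with the compact factor drawn from the design of Theorem~\ref{thm:finite-module-design}. This gives exact cancellation at small moduli, and an interpolation filter then upgrades the vanishing to derivatives. Concretely, we set
\[
  U=\tfrac12 H_LPG_LT(X,Y)PH_R ,
\]
where $G_L$ has the design law of Theorem~\ref{thm:finite-module-design} for the given $B,s,r$. We take $H_L,H_R\in\mathcal H$ uniform and the shear parameters $X,Y$ uniform in $\{0,\dots,L_0-1\}$ with $L_0=80d$. Independently we form $V$ in the same way, push $-\det U\det V$ forward to $E=2(U-V)$, and call the result $\omega$. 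Lemma~\ref{lem:isometry-residual-correlation} applies because the design law is a finite law on integral determinant-one $J_0$-isometries. It gives mass zero, variation one, and $\langle g_d,\omega\rangle\ge39/40$. The entries are bounded by $(1+rB^4)^{32}$ times the polynomial-size factors from $T(X,Y)$ with $X,Y=O(d)$, so $\log K_\omega=O(\log d+\log B)$.

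The Fourier zeros come next. Take $\theta=Q/t$ with $Q=UV'$ factoring through $s$ columns modulo $t\le B$. Then $e^{2\pi i\langle\theta,E\rangle}$ depends on $E$ only through $EU\bmod t$ paired with $V'$, that is, through the action of the two isometry copies on an $s$-column input modulo $t$. For odd $t$ we average over $G_L$ using~\eqref{eq:exact-modular-orbit-average}, which replaces the design by the Haar orbit average $J_t$. By the paragraph after Lemma~\ref{lem:shear-congruence-kernel}, the determinant-signed $s$-column operator then vanishes, because $\mathbb E\det H=0$ and the orbit average is invariant under the signed permutations. For $t$ even, the $2$-part of $t$ must also be handled. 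I would use the clearing factor two and the dyadic structure of $P$ with $r$ a power of two, though I am unsure of the details, and the full argument may need CRT separately at $2$. This is one source of difficulty.

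To get derivatives of order $\le L$, I would convolve $\omega$ with a scalar filter. Let $\psi$ be a signed measure on a small integer window, and form $\nu=\sum_c\psi(c)\,(E\mapsto cE)_*\omega$, or alternatively add independent integer perturbations that reproduce the action modulo every $t\le B$. The Fourier transform becomes $\sum_c\psi(c)\widehat\omega(c\theta)$. The $j$th derivative is a combination of the derivatives of $\widehat\omega$ at $c\theta$, weighted by $c^j$. The structural zeros at $cQ/t$, which also factor through $s$ columns, kill order zero. For higher orders I would instead use a polynomial-moment filter: take $E\mapsto E+N$ with $N$ independent and built from the law~\eqref{eq:free-row-law}-type sums. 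This makes the relevant residues uniform while keeping $\langle g_d,\cdot\rangle$ nearly intact, which costs the drop from $39/40$ to $403/440$ and contributes the corank-$\ge2$ variation of at most $1/40$. Choosing the interpolation nodes among $O(L)$ integers keeps $\log K=O(\log d+\log B+\log(L+1))$.

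Finally, \eqref{eq:structured-residual-neighborhood} follows from Taylor's theorem as in~\eqref{eq:residual-taylor-bound}. The main obstacle is the derivative-vanishing step. We need a filter that gives vanishing of all derivatives through order $L$ without destroying the rank correlation, and that also controls the exceptional corank variation. I would try first a filter that is an interpolation in a scalar multiplier $c$, with moments $\sum_c\psi(c)c^j=0$ for $1\le j\le L$ and $\sum_c\psi(c)=1$, since the structural zero set is closed under $Q\mapsto cQ$.
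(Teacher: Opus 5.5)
Your construction of $\omega$ matches the paper's proof: the design factor $G$, the regularizer $T(X,Y)$ of width $80d$, the compact factors $H_L,H_R$, and the weight $-\det U\det V$ pushed to $E=2(U-V)$. So do your height count and your cancellation at odd moduli, where implicitly $J_m\rho_m(T)=J_m$ because the regularizer lies in $\Gamma_m$.

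\textbf{Even moduli.} The first genuine gap is the case of even $t$, which you leave open. The design and Lemma~\ref{lem:shear-congruence-kernel} work only over odd residue rings. Neither the clearing factor two nor the shape of $P$ gives a zero at $Q/t$ when $t=2^eu$ with $e\ge1$. The missing idea is a dyadic pre-dilation: push $\omega$ forward under $E\mapsto 2^cE$ with $c=\lfloor\log_2B\rfloor$.
\begin{itemize}
\item For $t=2^eu\le B$ with $u$ odd, one has $e\le c$, so the new transform at $Q/t$ equals $\widehat\omega(2^{c-e}Q/u)$.
\item Since $2^{c-e}Q$ still factors through $s$ columns modulo $u$, this vanishes by the odd case when $u\ge3$, and by mass zero when $u=1$.
\item The dilation preserves ranks, mass, variation and correlation, and costs $O(\log B)$ bits.
\end{itemize}

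\textbf{The derivative filter.} The second gap is that the condition you impose on the filter is the wrong one. We have $\partial^\alpha\widehat{\widetilde\nu}(Q/t)=\sum_c\psi(c)c^{|\alpha|}(\partial^\alpha\widehat\omega)(cQ/t)$, and $(\partial^\alpha\widehat\omega)(cQ/t)$ depends on $c\bmod t$ and is generally nonzero. So the global moments $\sum_c\psi(c)c^j=0$ do not suffice. You need $\sum_{c\equiv a\bmod m}\psi(c)c^j=0$ for every $m\le B$, every residue $a$, and every $1\le j\le L$ simultaneously, as in \eqref{eq:modular-scalar-moment-cancellation}. By a Vandermonde argument, every weighted positive node then needs at least $L$ weighted companions in its class modulo each $m\le B$. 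Hence a filter on $O(L)$ consecutive integers cannot work once $B$ is large.

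You also need the filter to be close to a point mass in $\ell^1$. With $\psi(1)=1$ and tail $\tau=\sum_{c\ne1}|\psi(c)|\le1/32$, dividing by $1+\tau$ keeps the variation at most one and turns $39/40$ into $(39/40)(31/33)=403/440$. A generic interpolating $\psi$ with $\sum_c\psi(c)=1$ can have large $\ell^1$ norm, and then normalization destroys the correlation.

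The paper achieves both requirements as follows. It places the remaining nodes in a far window $\{H,\dots,N\}$ with $H=\lceil 2B^4L^2+3BL/\epsilon\rceil$ and $\epsilon=1/32$. It takes the weights $-u_s/s$ from the Legendre--Fourier interpolation of Lemma~\ref{lem:interpolation-gram-bound}. The tail is then below $3BL/H$, and $\log N=O(\log B+\log(L+1))$.

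Your alternative $E\mapsto E+N$ with random integer $N$ cannot work at all, since it changes ranks and hence the singularity label.

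\textbf{The corank-two bound.} The $1/40$ bound on coranks at least two is not produced by any filter. It is the transversality failure probability $2d/L_0$ for $\omega$: each atom of $\omega$ carries a single sign, so its variation on that set equals the failure probability. It survives filtering because positive integer dilations preserve rank and the normalized coefficients have absolute sum one.
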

\begin{proof}

  Let $G$ have the positive integer-shear law of Theorem~\ref{thm:finite-module-design}.
  Independently choose the two-shear regularizer $T=T(X,Y)$ of width $80d$
  and compact signed permutations $H_L,H_R$. In the notation of
  Section~\ref{sec:split-isometries}, use the composition

  \[
    U=\frac12 H_LPGTPH_R=\frac{N_U}{2}.
  \]

  All factors in $N_U$ are integral. Since $P^2=2I$, this is a rational
  $J$-isometry with common clearing factor two. The regularizer lies
  between $G$ and the final $P$; the compact factors lie outside.

  For an odd modulus $m\le B$, averaging over $G$ on an $s$-column
  frame gives the orbit projector $J_m$. The regularizer belongs to
  $\Gamma_m$, so $J_m\rho_m(T)=J_m$. The compact right factor preserves
  each orbit by Lemma~\ref{lem:shear-congruence-kernel}; hence its
  balanced determinant sign makes the signed frame law zero.
  Conditioning on $H_L$ and pushing forward preserves this zero.

  For independent copies $U,V$, push the weight $-\det U\det V$ to the
  integer matrix $E=2(U-V)=N_U-N_V$. The proof of
  Lemma~\ref{lem:isometry-residual-correlation} applies: conditional
  on the other factors, cyclically moving the fixed invertible factors
  reduces the rank test to $I-G_0T(X,Y)$ for a fixed rational
  $J_0$-isometry $G_0$. Uniform transversality gives correlation at least
  $39/40$. Rank parity gives variation one, and the balanced compact
  signs give mass zero. Conditioning on either copy transfers its signed
  frame zero to $E$.
  Explicitly, for every fixed $X\in(\Z/m)^{d\times s}$, the signed
  distribution of $EX$ is zero. If $Q=AB$ factors through $s$ columns,
  then $\langle Q,E\rangle=\langle A,EB^T\rangle$, so the Fourier
  transform vanishes at $Q/m$. The same holds for every integer
  multiple of $Q$, since it has the same kind of factorization.

  The word bound \eqref{eq:shear-word-height}, the two-shear regularizer, and the two copies of
  $P$ use $O(\log d+\log B)$ entry bits.

  Dilate by $2^{\lfloor\log_2B\rfloor}$. At an even modulus $m=2^eu\le B$
  the dyadic component is zero and the odd component is the zero signed
  law; for $u=1$, use mass zero. CRT gives the joint zero.
  The added bit cost is $O(\log B)$.
  Finally apply the filter of Section~\ref{sec:modular-fourier-jets}
  with tail variation $1/32$ and order $L$. Its positive nonzero dilation
  nodes preserve rational ranks, and its correlation estimate gives
  $\langle g_d,\nu\rangle\ge403/440$. It adds
  $O(\log B+\log(L+1))$ entry bits, proving~\eqref{eq:structured-residual-height}.
  Its Taylor remainder proves \eqref{eq:structured-residual-neighborhood}.
  The transversality estimate also gives variation at most $1/40$ on
  coranks at least two before filtering. Every nonzero scalar dilation
  preserves ranks, and the filter's absolute coefficients sum to one
  after normalization. Hence the same exceptional-variation bound holds
  for the final residual.

\end{proof}

\subsection{Simultaneous modular Fourier jets}\label{sec:modular-fourier-jets}

\subsubsection{Scalar filters with small tail variation}

Throughout, $B,L\ge1$ are integers and $0<\epsilon<1$.
Define

\begin{equation}
  H=\left\lceil 2B^4L^2+\frac{3BL}{\epsilon}\right\rceil,
  \qquad N=(16L^2+1)H,\qquad I=\{H,\ldots,N\}.
  \label{eq:scalar-filter-nodes}
\end{equation}

\begin{theorem}

  \label{thm:modular-moment-filter}

  There are rational coefficients $c_s$, supported on $\{1\}\cup I$,
  such that

  \begin{equation}
    c_1=1,\qquad \tau:=\sum_{s\in I}|c_s|\le\epsilon,
    \label{eq:scalar-filter-tail-variation}
  \end{equation}

  and, simultaneously for every $1\le m\le B$, $a\in\mathbb{Z}/m$,
  and $1\le k\le L$,

  \begin{equation}
    \sum_{s\equiv a\bmod m}c_s s^k=0.     \label{eq:modular-scalar-moment-cancellation}
  \end{equation}

  In particular

  \begin{equation}
    N=O(B^4L^4+BL^3/\epsilon),\qquad
    \log N=O(\log(2B)+\log(2L)+\log\epsilon^{-1}).          \label{eq:scalar-filter-support-bound}
  \end{equation}

  No bound on the denominators of the coefficients is asserted.

\end{theorem}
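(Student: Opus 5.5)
We construct $c$ as a small correction: put $c_1=1$ and choose the tail coefficients $c_s$, $s\in I$, to cancel every moment constraint that the single node $s=1$ violates. Group the constraints by residue. For fixed $m\le B$ and $a$, the condition~\eqref{eq:modular-scalar-moment-cancellation} involves only nodes $s\equiv a\bmod m$. The residue classes modulo $\lambda=\operatorname{lcm}(1,\ldots,B)$ refine all classes modulo every $m\le B$. It therefore suffices to impose the stronger requirement that, for every class $b\bmod\lambda$, the tail nodes in that class have prescribed moments. For $b\equiv1$ these are $\sum_{s\equiv1}c_ss^k=-1$ for $1\le k\le L$; for every other class they are zero. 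Classes with zero targets get $c_s=0$, so only the class $b\equiv1\bmod\lambda$ needs work. This reduction is lossy, though, because $\lambda$ may far exceed $H$, and then the class of $1$ modulo $\lambda$ may contain few points of $I$. We therefore instead work separately for each $m\le B$, or more cleverly with a structure that is simultaneously good for all small moduli.

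A cleaner route uses finite differences and polynomial interpolation. We want a signed measure $\sigma$ on $\{1\}\cup I$ whose pushforward to $\Z/m\times\{\text{moments}\}$ vanishes, i.e.\ $\sum_s c_s\,\mathbf1[s\equiv a\ (m)]\,s^k=0$. Note that $\mathbf1[s\equiv a\ (m)]=\frac1m\sum_{j}e^{2\pi i j(s-a)/m}$. The conditions are therefore equivalent to the vanishing of $\sum_s c_s s^k e^{2\pi i js/m}$ for all $m\le B$, $j$, and $1\le k\le L$. In generating-function terms, $F(z)=\sum_s c_sz^s$ must satisfy $(z\,d/dz)^kF=0$ at every root of unity $\zeta$ of order at most $B$ and for $1\le k\le L$. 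The $k=0$ value is left free, and it must remain free, since $c_1=1$ forces $F(1)\ne0$ in general. Our choice is
\[
F(z)=z\,\Phi(z^{\lambda'}),
\]
where the inner polynomial is designed to flatten $F$ near roots of unity. Concretely, write $F(z)=z\,G(z)$ with $G(0)=1$, and require $G-G(\zeta)$ to vanish to order $L+1$ at each relevant $\zeta$, up to the adjustment coming from the factor $z$. It is simpler to demand directly that $F$ have vanishing derivatives of orders $1,\ldots,L$ at each $\zeta$. This is a Hermite interpolation problem with about $B^2L$ linear conditions on the coefficients of $z^s$, $s\in I$, and right-hand sides given by the derivatives of $z$.

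The main step, and the obstacle, is solving this Hermite system with $\ell^1$ norm at most $\epsilon$. The first approach to try is the following. Take $s=Hx$ with $x$ in a structured set $\{1,\ldots,16L^2+1\}$, and choose $H$ divisible by no particular structure but large; $H\ge2B^4L^2$ makes $z^H$ nearly constant to the needed order near each $\zeta$ once rescaled. The moment conditions $\sum_s c_s s^k\chi(s)=-\chi(1)$ then become, after factoring $H^k$, a Vandermonde-type system in $x$ whose solution has size $O(BL/H)$ in each constraint direction. Using Chebyshev-type nodes among $16L^2+1$ integer points keeps the Lebesgue constant, and hence the total variation, bounded by $3BL/H\le\epsilon$. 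I expect the delicate part to be the simultaneous handling of different residues: the nodes $Hx$ must hit all residues mod $m$ in a controlled way. If this fails, I would fall back to solving each class $s\equiv1\bmod m$ and using the rationality of the linear system to obtain rational $c_s$. Rationality is automatic from solving a rational linear system once solvability with small norm is shown by duality, via Hahn–Banach on the finite system: bound the dual polynomial by Markov/Bernstein inequalities on $I$.
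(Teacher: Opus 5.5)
There is a genuine gap. The one concrete construction you propose fails: the tail supported on the multiples $s=Hx$, $1\le x\le 16L^2+1$.

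First, $H$ is not yours to choose; it is fixed by \eqref{eq:scalar-filter-nodes}. Suppose some prime $p\le B$ divides $H$; for instance, $\epsilon=1/32$, the value used later in the paper, makes $H=2B^4L^2+96BL$ even. Then no tail node lies in the class $1\bmod p$, and the condition with $m=p$, $a=1$, $k=1$ reduces to $c_1=0$.

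Second, even if $H$ were coprime to every $m\le B$, you would be imposing $L\sum_{m\le B}\varphi(m)\approx 3B^2L/\pi^2$ moment conditions on only $16L^2+1$ unknowns. That system is overdetermined once $B^2\gg L$.

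Third, the claimed bound $3BL/H$ is just the target restated. The Chebyshev-node and Lebesgue-constant remark says nothing about the roughly $B^2$ character twists $e^{2\pi i\theta s}$.

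Your fallbacks do not repair this. Per-modulus solutions interfere, because every node lies in a residue class for every modulus at once. The Hahn--Banach route amounts to proving $|h(1)|\le\epsilon\max_{s\in I}|s\,h(s)|$ for all $h\in\mathcal W=\operatorname{span}\{s^j1_{s\equiv a\bmod m}:0\le j<L,\ m\le B\}$. Markov--Bernstein inequalities for algebraic polynomials do not supply this, because the real difficulty is separating the different frequencies.

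The missing ingredients are the following.

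\emph{Division by $s$.} Write $s^k=s\cdot s^{k-1}$ and set $c_s=-u_s/s$ on $I$. Choose weights $u_s$ that reproduce $f(1)=\sum_{s\in I}u_sf(s)$ for every $f\in\mathcal W$. Applying this to $f(s)=s^{k-1}1_{s\equiv a\bmod m}$ gives the cancellation, and $s\ge H$ gives $\tau\le H^{-1}\sum_s|u_s|$. So the factor $1/H$ comes for free. You only need an $H$-independent bound $\sum_s|u_s|<3BL$ for extrapolation to the nearby point $s=1$.

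\emph{Using all of $I$.} That bound needs all $T=16L^2H+1\ge 32B^4L^4$ points of $I$, not a sparse subset. Take the Legendre-normalized features $\sqrt{2j+1}\,P_j(t_s)e^{2\pi i\theta s}$, with $\theta$ ranging over rationals of denominator at most $B$. These are nearly orthonormal on $I$, giving $\|G-\mathrm{Id}\|<1/2$: equal frequencies are handled by a Riemann-sum estimate, and distinct ones by Abel summation using their separation of at least $B^{-2}$.

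\emph{Evaluation at $1$.} The evaluation vector satisfies $\|\phi(1)\|^2\le\mathrm{e}B^2L^2$, because $t_1$ lies within $1/(8L^2)$ of $[-1,1]$. The minimum-norm weights $u_s=T^{-1}\phi(s)^*G^{-1}\phi(1)$ then obey $\sum_s|u_s|\le\sqrt2\,\|\phi(1)\|\le\sqrt{2\mathrm{e}}\,BL$ by Cauchy--Schwarz.

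\emph{Rationality.} This follows by solving the same minimum-norm system in the rational basis $s^j1_{s\equiv a\bmod m}$.
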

\subsubsection{Discrete interpolation with polynomial and Fourier features}

Let $\mathcal{F}_B\subset\mathbb{R}/\mathbb{Z}$ be the set of distinct
rational frequencies of reduced denominator at most $B$, including
zero. Its size is $F=\sum_{m\le B}\varphi(m)\le B^2$,
with $\varphi(1)=1$. Distinct frequencies have cyclic separation
at least $B^{-2}$. Consider the space

\begin{equation}
  \mathcal{W}=\operatorname{span}_{\mathbb{C}}
  \{s^j\mathrm{e} ^{2\pi i\theta s}:\theta\in\mathcal{F}_B,\ 0\le j<L\}.
  \label{eq:polynomial-fourier-space}
\end{equation}

In particular it contains every $s^j1_{s\equiv a\bmod m}$
with $m\le B$ and $j<L$.

Put $T=|I|=16L^2H+1$ and $t_s=(2s-H-N)/(N-H)$.
For the Legendre polynomial $P_j$, use features

\begin{equation}
  \phi_{\theta,j}(s)=\sqrt{2j+1}\,P_j(t_s)\mathrm{e} ^{2\pi i\theta s},
  \qquad
  G=\frac1T\sum_{s\in I}\phi(s)\phi(s)^* .               \label{eq:interpolation-gram-matrix}
\end{equation}

There are $FL$ features; they span \eqref{eq:polynomial-fourier-space}.

\begin{lemma}

  \label{lem:interpolation-gram-bound}

  The Gram matrix and evaluation vector satisfy

  \begin{equation}
    \|G-\mathrm{Id}\|\le\frac{12B^4L^4}{T}<\frac12,
    \qquad \|\phi(1)\|^2\le \mathrm{e}  B^2L^2.                    \label{eq:interpolation-gram-bounds}
  \end{equation}

  Consequently there are real rational weights $u_s$, $s\in I$,
  such that

  \begin{equation}
    \sum_{s\in I}u_s f(s)=f(1)\quad(f\in\mathcal{W}),
    \qquad \sum_{s\in I}|u_s|<3BL.                         \label{eq:interpolation-reproducing-weights}
  \end{equation}

\end{lemma}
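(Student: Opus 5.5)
The argument estimates the Gram matrix entry by entry, bounds the evaluation vector $\phi(1)$ by Legendre growth just outside $[-1,1]$, and then takes the minimal-norm reproducing weights. We first expand the entries of $G$. Index features by pairs $(\theta,j)$. The entry at $((\theta,j),(\theta',j'))$ is
\[
  \frac1T\sum_{s\in I}\sqrt{(2j+1)(2j'+1)}\,P_j(t_s)P_{j'}(t_s)\,\mathrm e^{2\pi i(\theta-\theta')s}.
\]
We split into two cases according to whether $\theta=\theta'$.

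For $\theta=\theta'$ the exponential is one. The sum is a midpoint-type Riemann sum, with $T$ equally spaced nodes filling $[-1,1]$, for $\tfrac12\int_{-1}^1\sqrt{(2j+1)(2j'+1)}P_jP_{j'}\,dt=\delta_{jj'}$. Its error is at most the total variation of the integrand divided by $T$. The Markov bound $|P_j'|\le j^2$ on $[-1,1]$ and $|P_j|\le1$ bound this variation by $O(L^3)$, so the error is $O(L^3/T)$.

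For $\theta\ne\theta'$ we use Abel summation. The partial sums of $\mathrm e^{2\pi i(\theta-\theta')s}$ are bounded by $1/(2\|\theta-\theta'\|_{\T})\le B^2/2$, using the stated separation $B^{-2}$. Summation by parts then bounds the entry by $\tfrac1T(B^2/2)$ times (sup plus total variation of the polynomial weight), which is $O(B^2L^3/T)$.

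Summing absolute values along a row, over at most $FL\le B^2L$ features, and applying the Schur bound gives $\|G-\mathrm{Id}\|\le cB^4L^4/T$. A careful version of the constants should give $c=12$. Since $T=16L^2H+1>32B^4L^4$ by~\eqref{eq:scalar-filter-nodes}, the result is less than $1/2$. This constant bookkeeping is the main obstacle. If $12$ does not come out cleanly, any constant below $16$ still gives $\|G-\mathrm{Id}\|<1/2$.

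For the evaluation vector, note that $t_1=-1-2(H-1)/(N-H)$ and $N-H=16L^2H$, so $|t_1|\le1+1/(8L^2)$. The classical growth bound $|P_j(1+x)|\le(1+x+\sqrt{2x+x^2})^j\le\mathrm e^{j\sqrt{2x}(1+o(1))}$ applies outside $[-1,1]$. With $j<L$ and $x=1/(8L^2)$ it gives $P_j(t_1)^2\le\mathrm e$. Hence
\[
  \|\phi(1)\|^2\le F\sum_{j<L}(2j+1)\,\mathrm e\le\mathrm e\,B^2L^2 .
\]

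For the weights, set $u_s=T^{-1}\phi(s)^*G^{-1}\phi(1)$. By construction $\sum_su_s\phi(s)^*=\phi(1)^*$, so $u$ reproduces evaluation at $1$ on $\mathcal W$, because the features span $\mathcal W$. The frequency set $\mathcal F_B$ is closed under negation, so $\mathcal W$ is closed under complex conjugation. The function $f(1)$ is therefore also reproduced by $\overline u$, and hence by $\operatorname{Re}u$, which gives real weights. Furthermore,
\[
  \sum_s|u_s|^2=T^{-1}\phi(1)^*G^{-1}\phi(1)\le 2\,\mathrm e\,B^2L^2/T,
\]
and Cauchy--Schwarz gives $\sum_s|u_s|\le\sqrt{2\mathrm e}\,BL<3BL$.

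Finally, we make the weights rational. $\mathcal W$ has a rational spanning set, namely the functions $s^j1_{s\equiv a\bmod m}$ with $m\le B$ and $j<L$. So the reproducing conditions form a finite rational linear system in $(u_s)_{s\in I}$. Its solution set is a rational affine subspace, and it contains the real solution just found. Rational points are dense in such a subspace. We therefore pick a rational solution close enough to the real one that the strict inequality $\sum|u_s|<3BL$ is preserved.
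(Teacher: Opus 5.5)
Your proposal is essentially the paper's proof. It uses the same equal-frequency quadrature and distinct-frequency Abel-summation estimates with a row-sum bound, Legendre growth at $t_1$ from the Laplace integral, and the minimum-norm weights $u_s=T^{-1}\phi(s)^*G^{-1}\phi(1)$ with Cauchy--Schwarz. The one difference is in the last step: the paper gets realness and rationality at once, because the minimum-norm solution equals $R^{t}(RR^{t})^{-1}f(1)$ for a rational evaluation matrix $R$. You instead take $\operatorname{Re}u$ and perturb to a nearby rational solution, which also works.

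The constant you flagged does close with your own estimates. Each node $t_s$ lies in its own cell of the partition of $[-1,1]$ into $T$ equal cells, so the equal-frequency error is at most $\mathrm{TV}(h)/T<8L^3/T$ by Markov's inequality. The distinct-frequency entries are at most $\frac{B^2}{2T}(2L+8L^3)\le5B^2L^3/T$, so the row sums give $\|G-\mathrm{Id}\|\le8B^4L^4/T$, which is within $12B^4L^4/T$.

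Two small fixes. Replace the $1+o(1)$ factor by the exact bound $\log(1+x+\sqrt{2x+x^2})\le\sqrt{2x}$. Also, the construction gives $\sum_s u_s\phi(s)=\phi(1)$, not its conjugate.
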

\begin{proof}

  We use the standard Legendre orthogonality and bounds
  \cite[Table~18.3.1 and Sections~18.9--18.10 and~18.14]{dlmf}:

  \begin{equation}
    \begin{gathered}
      \frac12\int_{-1}^{1}P_i(t)P_j(t)\,dt
      =\frac{1_{i=j}}{2j+1},\qquad
      |P_j(t)|\le1\quad(-1\le t\le1),\\
      |P'_j(t)|\le j(j+1)/2\quad(-1\le t\le1),\qquad
      |P_j(\cosh\alpha)|\le\mathrm{e} ^{j\alpha}\quad(\alpha\ge0).
      \label{eq:legendre-polynomial-bounds}
    \end{gathered}
  \end{equation}

  In particular, the Laplace integral representation
  \cite[equation~(18.10.5)]{dlmf}, continued as a polynomial identity, is

  \begin{equation}
    P_j(t)=\frac1\pi\int_0^\pi
    \bigl(t+\sqrt{t^2-1}\cos v\bigr)^j\,dv.
    \label{eq:legendre-integral-representation}
  \end{equation}

  Its integrand has absolute value at most one for
  $|t|\le1$, and at most $\mathrm{e} ^{j\alpha}$ at $t=\cosh\alpha$.
  For $j\ge1$, the derivative bound follows from
  $P'_j=\frac{j+1}{2}P_{j-1}^{(1,1)}$ and
  $|P_{j-1}^{(1,1)}(t)|\le j$ on $[-1,1]$
  \cite[equations~(18.9.15) and~(18.14.1)]{dlmf}; the case $j=0$
  is immediate.

  Write $p_j=\sqrt{2j+1}P_j$. For $0\le i,j<L$, the function
  $h=p_ip_j$ obeys

  \begin{equation}
    \|h\|_\infty\le2L,
    \operatorname{TV} _{[-1,1]}(h)\le4L^3.              \label{eq:legendre-product-variation}
  \end{equation}

  Indeed its derivative has supremum at most $2L^3$ by \eqref{eq:legendre-polynomial-bounds}.
  The average over the $T$ equally spaced points differs from
  $\frac12\int h$ by at most

  \begin{equation}
    \frac{\operatorname{TV} (h)}{T-1}+\frac{2\|h\|_\infty}{T}
    \le\frac{12L^3}{T}.                             \label{eq:legendre-quadrature-error}
  \end{equation}

  To see the first inequality, compare the integral with the
  left-endpoint average over $T-1$ intervals, then compare that
  average with the average including the last endpoint.

  For distinct frequencies, every consecutive partial sum of
  $\mathrm{e} ^{2\pi i(\theta-\theta')s}$ has magnitude at most
  $[2\|\theta-\theta'\|_{\mathbb{R}/\mathbb{Z}}]^{-1}\le B^2/2$.
  Abel summation therefore bounds the corresponding Gram entry by

  \begin{equation}
    \frac{B^2}{2T}\bigl(\|h\|_\infty+\operatorname{TV} (h)\bigr)
    \le\frac{3B^2L^3}{T}.         \label{eq:off-frequency-gram-bound}
  \end{equation}

  By \eqref{eq:legendre-quadrature-error}, \eqref{eq:off-frequency-gram-bound}, and orthogonality, every entry of $G-\mathrm{Id}$
  has absolute value at most $12B^2L^3/T$. Its operator norm is at
  most $FL$ times that quantity. Since $H\ge2B^4L^2$,
  this proves the first inequality in \eqref{eq:interpolation-gram-bounds}, in fact with bound $3/8$.

  Next, $t_1=-1-x$ where $0\le x=(H-1)/(8L^2H)<1/(8L^2)$.
  If $1+x=\cosh\alpha$, then $\alpha\le\sqrt{2x}<1/(2L)$.
  Parity and \eqref{eq:legendre-polynomial-bounds} imply $|P_j(t_1)|\le\mathrm{e} ^{1/2}$ for $j<L$.
  Thus

  \begin{equation}
    \|\phi(1)\|^2
    \le \mathrm{e}  F\sum_{j=0}^{L-1}(2j+1)
    =\mathrm{e}  FL^2\le\mathrm{e}  B^2L^2.
    \label{eq:interpolation-evaluation-bound}
  \end{equation}

  The reproducing weights are

  \begin{equation}
    u_s=\frac1T\phi(s)^*G^{-1}\phi(1).       \label{eq:minimum-norm-interpolation-weights}
  \end{equation}

  They satisfy $\sum_su_s\phi(s)=\phi(1)$ and are the unique
  minimum-Euclidean-norm solution of those constraints. Moreover

  \begin{equation}
    \sum_s|u_s|\le\sqrt{T\sum_s|u_s|^2}
    =\sqrt{\phi(1)^*G^{-1}\phi(1)}
    \le\sqrt{2\mathrm{e} }\,BL<3BL.                         \label{eq:interpolation-weight-variation}
  \end{equation}

  Although \eqref{eq:minimum-norm-interpolation-weights} uses complex features, $\mathcal W$ admits a basis over
  $\mathbb C$ consisting of rational-valued functions
  $s^j1_{s\equiv a\bmod m}$. Its evaluation matrix and right-hand
  side at $1$ are rational. The minimum-norm solution is therefore
  real and rational: if $R$ is the full-row-rank evaluation matrix
  in that basis, it is $R^t(RR^t)^{-1}f(1)$.

\end{proof}

\begin{proof}[Proof of Theorem~\ref{thm:modular-moment-filter}]

  Set $c_1=1$ and $c_s=-u_s/s$ on $I$. For each required
  residue class and degree, apply \eqref{eq:interpolation-reproducing-weights} to
  $f(s)=s^{k-1}1_{s\equiv a\bmod m}$. This gives \eqref{eq:modular-scalar-moment-cancellation}.
  Finally $\tau\le H^{-1}\sum_s|u_s|<3BL/H\le\epsilon$.
  The support estimate follows directly from \eqref{eq:scalar-filter-nodes}.

\end{proof}

\subsubsection{From scalar filters to matrix Fourier derivatives}

Let $\nu$ be a finite real signed measure on integer $d\times d$
matrices, with mass zero, variation at most one, and entry height
at most $K$. Let $D_s$ denote scalar dilation $E\mapsto sE$, and put

\begin{equation}
  \widetilde\nu=\frac1{1+\tau}\sum_s c_s(D_s)_*\nu,
  \quad \gamma=\frac{\sum_sc_s}{1+\tau}
  \ge\frac{1-\epsilon}{1+\epsilon},
  \quad K'=NK.                                         \label{eq:filtered-residual}
\end{equation}

\begin{theorem}

  \label{thm:matrix-fourier-jet-transfer}

  The measure $\widetilde\nu$ has mass zero, variation at most one,
  and entry height at most $K'$. For every $m\le B$, residue matrix
  $Z$, and multi-index $\alpha$ with $1\le|\alpha|\le L$,

  \begin{equation}
    \sum_{E'\equiv Z\bmod m}
    \widetilde\nu(E')(E')^\alpha=0.\label{eq:modular-matrix-moment-cancellation}
  \end{equation}

  Consequently every mixed Fourier derivative of positive order
  at most $L$ vanishes at every $\theta_0=Q/m$, $m\le B$.
  If the original transform vanishes at $s\theta_0$ for every
  positive integer $s$, the zeroth-order value vanishes there too.
  At such a zero and for any real matrix $\Delta$,

  \begin{equation}
    |\widehat{\widetilde\nu}(\theta_0+\Delta)|
    \le\frac{(2\pi K'\|\Delta\|_{\mathrm{entry,1}})^{L+1}}{(L+1)!}.
    \label{eq:filtered-residual-taylor-bound}
  \end{equation}

  In particular the bound is at most $2^{-(L+1)}$ when
  $\|\Delta\|_{\mathrm{entry,1}}\le(L+1)/(40K')$.

\end{theorem}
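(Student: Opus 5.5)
The plan is to compute moments of the filtered measure directly and reduce everything to the scalar cancellation \eqref{eq:modular-scalar-moment-cancellation} of Theorem~\ref{thm:modular-moment-filter}.

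\textbf{Elementary properties.} Pushing forward preserves mass, so $\widetilde\nu$ has mass $\frac{1}{1+\tau}\sum_s c_s\cdot 0=0$. By the triangle inequality its variation is at most $\frac{1}{1+\tau}\sum_s|c_s|\,\|\nu\|_1\le (1+\tau)/(1+\tau)=1$, using $c_1=1$ and \eqref{eq:scalar-filter-tail-variation}. Every support point has the form $sE$ with $s\le N$, so the height is at most $NK=K'$. The lower bound $\gamma\ge(1-\epsilon)/(1+\epsilon)$ follows from $\sum_s c_s\ge1-\tau$ and $\tau\le\epsilon$.

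\textbf{Moment cancellation.} Fix $m\le B$, a residue matrix $Z$, and $\alpha$ with $1\le|\alpha|\le L$. Then
\[
(1+\tau)\sum_{E'\equiv Z}\widetilde\nu(E')(E')^\alpha=\sum_E\nu(E)E^\alpha\sum_{s:\,sE\equiv Z\bmod m}c_s s^{|\alpha|}.
\]
For fixed $E$, the set of $s$ with $sE\equiv Z\pmod m$ is determined by $s\bmod m$. It is therefore a union of residue classes modulo $m$, possibly empty or all of them. By \eqref{eq:modular-scalar-moment-cancellation} with $k=|\alpha|$, each class sum $\sum_{s\equiv a}c_ss^k$ vanishes, so the inner sum is zero. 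This proves \eqref{eq:modular-matrix-moment-cancellation}.

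\textbf{Derivatives.} The mixed derivative of order $\alpha$ at $\theta_0=Q/m$ is
\[
(2\pi i)^{|\alpha|}\sum_{E'}\widetilde\nu(E')(E')^\alpha e^{2\pi i\langle Q,E'\rangle/m}.
\]
The exponential factor depends only on $E'\bmod m$. Grouping by residue class $Z$ and applying \eqref{eq:modular-matrix-moment-cancellation} kills every positive-order derivative. For order zero,
\[
\widehat{\widetilde\nu}(\theta_0)=\frac{1}{1+\tau}\sum_s c_s\widehat\nu(s\theta_0),
\]
which vanishes under the stated hypothesis.

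\textbf{Taylor bound.} At such a zero, all derivatives through order $L$ vanish. Taylor's theorem with integral remainder along the segment $\theta_0+t\Delta$ bounds the value by
\[
\sup_t\Bigl|\tfrac{d^{L+1}}{dt^{L+1}}\widehat{\widetilde\nu}(\theta_0+t\Delta)\Bigr|\big/(L+1)!.
\]
This $(L+1)$-st derivative equals $\sum_{E'}\widetilde\nu(E')\,(2\pi i\langle\Delta,E'\rangle)^{L+1}e^{\cdots}$. Since $|\langle\Delta,E'\rangle|\le K'\|\Delta\|_{\entry}$ and the variation is at most one, this gives \eqref{eq:filtered-residual-taylor-bound}.

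For the final claim, use $(L+1)!\ge((L+1)/e)^{L+1}$. The bound becomes at most $(2\pi eK'\|\Delta\|_{\entry}/(L+1))^{L+1}\le(2\pi e/40)^{L+1}\le 2^{-(L+1)}$, since $2\pi e<20$.

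The only delicate point is the moment cancellation. One must check that the fiber $\{s: sE\equiv Z\}$ really is a union of full residue classes modulo $m$, including when it is empty or everything, and that the node $s=1$ is included in those classes. The latter holds because \eqref{eq:modular-scalar-moment-cancellation} sums over the full support $\{1\}\cup I$.
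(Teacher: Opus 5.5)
Your proposal is correct and follows essentially the same route as the paper: you reduce each matrix moment, atom by atom, to residue-class sums $\sum_{s\equiv a\bmod m}c_s s^{|\alpha|}$, weight by the phase $e^{2\pi i\langle Q,Z\rangle/m}$ to kill the positive-order derivatives, use $\widehat{\widetilde\nu}(\theta_0)=(1+\tau)^{-1}\sum_s c_s\widehat\nu(s\theta_0)$ for order zero, and bound the remainder by the variation times $(2\pi K'\|\Delta\|_{\mathrm{entry,1}})^{L+1}/(L+1)!$. The differences are cosmetic: you apply Taylor's theorem with integral remainder to $t\mapsto\widehat{\widetilde\nu}(\theta_0+t\Delta)$ where the paper expands $e^{2\pi i\langle\Delta,E'\rangle}$ separately for each atom, and you explicitly check the mass, variation, and height claims that the paper's proof leaves implicit.
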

\begin{proof}

  For a fixed original atom $E$, the contribution to \eqref{eq:modular-matrix-moment-cancellation}, apart
  from $\nu(E)E^\alpha/(1+\tau)$, is

  \[
    \sum_{s:sE\equiv Z\bmod m}c_ss^{|\alpha|}.
  \]

  The condition on $s$ is a union of residue classes modulo $m$.
  Each class sum is zero by \eqref{eq:modular-scalar-moment-cancellation}, whether or not $E$ or $s$ is
  invertible modulo $m$. Summing over $E$ proves \eqref{eq:modular-matrix-moment-cancellation}.
  Weighting the residue identities by
  $\mathrm{e} ^{2\pi i\langle Q,Z\rangle/m}$ gives the derivatives.
  The constant Fourier value is
  $(1+\tau)^{-1}\sum_s c_s\widehat\nu(s\theta_0)$, proving
  its stated sufficient condition.

  For \eqref{eq:filtered-residual-taylor-bound}, expand $\mathrm{e} ^{2\pi i\langle\Delta,E'\rangle}$ through
  degree $L$ after inserting the phase at $\theta_0$.
  All these terms vanish. For real $t$, the Taylor remainder of
  $\mathrm{e} ^{it}$ has magnitude at most $|t|^{L+1}/(L+1)!$.
  Use $|\langle\Delta,E'\rangle|\le K'\|\Delta\|_{\mathrm{entry,1}}$
  and variation at most one. Finally $(L+1)!\ge((L+1)/\mathrm{e} )^{L+1}$
  and $2\pi\mathrm{e} <20$ give the neighborhood assertion.

\end{proof}

For example, if every signed $k$-frame projection of $\nu$ is
zero modulo every $m\le B$, then its Fourier transform is zero
at $Q/m$ whenever $Q$ has a factorization through $k$ columns
over $\mathbb{Z}/m$. All integer multiples of $Q$ have such a
factorization as well, so Theorem~\ref{thm:matrix-fourier-jet-transfer} applies.
Every pre-existing modular frame zero at other moduli is also
preserved under arbitrary integer scalar dilations; the new
derivative assertion only covers $m\le B$.

The constant-value hypothesis is essential. For the scalar signed
measure $\nu=(\delta_1-\delta_0)/2$, the value at $1/2$ of its
dilation mixture is $-(1+\tau)^{-1}\sum_{s\ {\mathrm{odd}}}c_s$.
That odd coefficient sum has magnitude at least $1-\tau>0$.
Thus canceling every positive-order jet does not create a
zeroth-order zero.

\subsubsection{Preserving singularity correlation under dilation}

Write $F(E)=+1$ for every rationally singular matrix and $-1$
for nonsingular matrices. Every node $s$ is positive and nonzero,
so it preserves every rational rank. Consequently

\begin{equation}
  \langle F,\widetilde\nu\rangle
  =\gamma\langle F,\nu\rangle.\label{eq:dilation-correlation-identity}
\end{equation}

For the residual in Lemma~\ref{lem:isometry-residual-correlation}, take
$L_0=80d$ and $\epsilon=1/32$. Its correlation after filtering is at least

\begin{equation}
  \langle g_d,\widetilde\nu\rangle
  \ge\frac{39}{40}\frac{31}{33}=\frac{403}{440}.
  \label{eq:filtered-residual-correlation}
\end{equation}

Every singular rank retains the positive label throughout.

\section{Bitwise-XOR singularity}\label{sec:xor-singularity}

This section proves Theorem~\ref{thm:xor-lower-bound} and a corollary for
Boolean entrywise products.

We first record the elementary randomized and Boolean quantum
consequences of published disjointness and symmetric-XOR theorems.
The dense-equation witness then supplies the full quantum bit-length
factor. Its constants and thresholds are independent of the relation
between dimension and bit length.
\subsection{Sum encodings and the reduction from AND to XOR}

For an $m\times m$ matrix $Z$ and vectors $a,b\in\Q^m$, define
\begin{equation}\label{eq:weighted-sum-matrix}
  T_{a,b}(Z)=
  \begin{pmatrix}
    0&a^{\mathsf T}&0\\0&I_m&Z\\b&0&I_m
  \end{pmatrix}.
\end{equation}
The bottom-right block has determinant one and inverse
$\bigl(
  \begin{smallmatrix}I&-Z\\0&I
\end{smallmatrix}\bigr)$.
Its Schur complement is $a^{\mathsf T}Zb$, giving
\begin{equation}\label{eq:weighted-sum-determinant-rank}
  \det T_{a,b}(Z)=a^{\mathsf T}Zb,\qquad
  \rank T_{a,b}(Z)=2m+\mathbf1[a^{\mathsf T}Zb\ne0].
\end{equation}
Let $\mathbf1_m$ denote the all-one column of length $m$, and write
$T(Z)=T_{\mathbf1_m,\mathbf1_m}(Z)$. Its determinant is
$\sum_{ij}Z_{ij}$, and its corank is at most one.
This representation of a sum as a determinant is the block-matrix
form of the graph construction in Ivanyos et al.\ \cite[Claim~2 in
the proof of Theorem~4]{ivanyos-et-al}; the vectors $a,b$ give its
weighted version.

For nonnegative integers, $u+v-(u\oplus v)=2(u\bitand v)$.
Consequently, for $d\times d$ matrices of $k$-bit integers,
\begin{equation}\label{eq:and-to-xor-rank}
  L(U,V)=
  \begin{pmatrix}U&U\oplus V&I\\0&V&I\\I&I&0
  \end{pmatrix}
  \quad\text{satisfies}\quad
  \rank L(U,V)=2d+\rank(U\bitand V),
\end{equation}
by~\eqref{eq:four-block}. Its determinant is
$(-2)^d\det(U\bitand V)$. This is a local XOR construction:
Alice and Bob respectively output
\begin{equation}\label{eq:and-to-xor-shares}
  \mathcal A(U)=
  \begin{pmatrix}U&U&I\\0&0&I\\I&I&0
  \end{pmatrix},
  \qquad
  \mathcal B(V)=
  \begin{pmatrix}0&V&0\\0&V&0\\0&0&0
  \end{pmatrix}.
\end{equation}
Their XOR is $L(U,V)$, every output entry has at most $k$ bits,
and the corank of $U\bitand V$ is preserved. In particular, this
also converts a Boolean entrywise product into a Boolean XOR matrix.

\subsection{Randomized lower bound from disjointness}\label{sec:randomized-xor-lower-bound}

We record the classical bound as an elementary consequence of the
published disjointness theorem~\cite{razborov} and the preceding
determinant encoding.

Take disjointness inputs $x,y\in\{0,1\}^{m^2k}$. Pack consecutive
blocks of $k$ bits into entries of $m\times m$ matrices $X,Y$.
Because all summands are nonnegative,
\[
  \sum_{i,j}(X_{ij}\bitand Y_{ij})=0
  \quad\Longleftrightarrow\quad x\cap y=\varnothing.
\]
The public entries of $T$ are Boolean, so
$T(X)\bitand T(Y)=T(X\bitand Y)$. Applying~\eqref{eq:and-to-xor-shares}
with $U=T(X)$ and $V=T(Y)$ gives an XOR matrix of order $6m+3$
and corank at most one, whose determinant is
\begin{equation}\label{eq:packed-disjointness-determinant}
  \det L(T(X),T(Y))=(-2)^{2m+1}
  \sum_{i,j}(X_{ij}\bitand Y_{ij}).
\end{equation}
Choose $m=\lfloor(n-3)/6\rfloor$ and pad with an identity block.
Razborov's classical disjointness theorem~\cite{razborov} gives
$\Omega(m^2k)=\Omega(n^2k)$ randomized communication.
Sending all of $A$ costs $n^2k$ bits, so the randomized complexity
is $\Theta(n^2k)$.

\subsection{Boolean quantum lower bound from symmetric XOR functions}
\label{sec:boolean-xor-lower-bound}

Let $N=m^2$ and $t=\lfloor N/2\rfloor$. Zhang--Shi's theorem on
symmetric XOR functions~\cite[Theorem~2]{zhang-shi-xor} gives
\begin{equation}\label{eq:exact-hamming-quantum-bound}
  Q^*_{1/3}\bigl[\,|x\oplus y|=t\,\bigr]=\Omega(N).
\end{equation}
For this exact-distance predicate, their parameter $r$ is $\Theta(N)$.
Their model permits unlimited prior entanglement. We give the local
encoding that transfers this published bound to Boolean singularity.

Index the bits as Boolean $m\times m$ matrices $X,Y$, and fix a
public Boolean matrix $H_t$ with exactly $t$ ones. For a Boolean matrix
$D$ of order $m$, put
\[
  Z(D)=\diag(D,H_t),\qquad
  a=(\mathbf1_m,-\mathbf1_m),\qquad b=\mathbf1_{2m},
  \qquad T_t(D)=T_{a,b}(Z(D)).
\]
By~\eqref{eq:weighted-sum-determinant-rank}, $T_t(D)$ has order $d=4m+1$,
determinant $\sum_{ij}D_{ij}-t$, and corank at most one.
Its negative entries are public. Write $T_t(D)=P(D)-Q$ using
its Boolean positive and negative parts. The Boolean matrix
\begin{equation}\label{eq:exact-hamming-boolean-matrix}
  K(D)=
  \begin{pmatrix}P(D)&Q\\I_d&I_d
  \end{pmatrix}
  \quad\text{satisfies}\quad
  \det K(D)=\sum_{ij}D_{ij}-t,\qquad
  \rank K(D)=d+\rank T_t(D)
\end{equation}
by~\eqref{eq:difference-block}.

Every input-dependent position of $K(D)$ copies a single bit of $D$;
all other positions are public constants. Therefore the local Boolean
matrices
\[
  \mathcal A(X)=K(X),\qquad
  \mathcal B(Y)=K(Y)\oplus K(0)
\]
satisfy $\mathcal A(X)\oplus\mathcal B(Y)=K(X\oplus Y)$:
the copies of private bits XOR, while Bob's public constants cancel.
The output has order $8m+2$, preserves corank, and is singular
precisely when $|x\oplus y|=t$. The target $t$ is represented by
the public Boolean block $H_t$, so no input entry exceeds one.

For sufficiently large $n$, take $m=\lfloor(n-2)/8\rfloor$ and
pad with an identity block. Equation~\eqref{eq:exact-hamming-quantum-bound}
then gives $Q^*_{1/3}(\SINGxor_{n,1})=\Omega(n^2)$, including the
adjacent-rank promise. This is a consequence of the cited symmetric-XOR
theorem and the local encoding. Boolean inputs are valid $k$-bit inputs
for every $k\ge1$, so the same lower bound holds for $\SINGxor_{n,k}$.

\subsection{A dense equation witness}

Fourier analysis of modular subset sums also appears in
Sherstov~\cite[Theorem~3.3]{sherstov-correction}, who bounds low-degree
Walsh coefficients for independently chosen weights. The witness below
requires a bound on every Walsh coefficient and coefficients of the
restricted form $a_i b_j$. We prove this bound using the product formula
for word bits and a smooth cutoff around exceptional frequencies.

\begin{theorem}[Dense equation witness]\label{thm:dense-equation-witness} For all sufficiently large integers $d,r$, put
  $N=2^r$ and $B=N^9-1$. There are public coefficients
  $1\le a_i,b_j\le N^4$, for $i,j\in[d]$, and a real signed measure
  $w$ on pairs

  \[
    X\in\{0,\ldots,N-1\}^{d\times d},\qquad
    Y\in\{0,\ldots,B\}^{2d\times2d}
  \]

  with the following properties. Let $F(X,Y)=+1$ if
  $a^{\mathsf T}Xb=\sum_{i,j}Y_{ij}$, and $F(X,Y)=-1$ otherwise.
  Encoding $X$ with $r$ bits per entry and $Y$ with $9r$ bits per entry
  identifies their domain with $G=\{0,1\}^{37d^2r}$. Then

  \begin{equation}
    \sum_z w(z)=0,\qquad
    \sum_z|w(z)|\le\frac{101}{100},\qquad
    \sum_zF(z)w(z)\ge\frac{99}{100},
    \label{eq:dense-witness-mass-correlation}
  \end{equation}

  and, for an absolute constant $c>0$,

  \begin{equation}
    \max_{\omega\in G}
    \left|\sum_{z\in G}w(z)(-1)^{\langle\omega,z\rangle}\right|
    \le2^{-cd^2r}.
    \label{eq:dense-witness-walsh-bound}
  \end{equation}

\end{theorem}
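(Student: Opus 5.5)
We will build $w$ as a signed comparison between two positive laws on $z=(X,Y)$. Let $X$ be uniform on $\{0,\ldots,N-1\}^{d\times d}$, and let $S=a^{\mathsf T}Xb$; note that $0\le S\le d^2N^9$. Split the target $S$ into $4d^2$ bounded words. Concretely, choose $Y$ given $X$ by a public rule whose entries lie in $\{0,\ldots,B\}$ and sum to $S$ for the law $\mu_0$, and to $S+1$ for the shifted law $\mu_1$. To make every Walsh coefficient small, we randomize this split: $Y_{ij}=V_{ij}+\text{(balancing)}$, with most $V_{ij}$ independent and uniform on $\{0,\ldots,B\}$ and one or a few correction words absorbing the residual. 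We then put $w=\tfrac12(\mu_0-\mu_1)$ plus a small correction. This has mass zero. Since $F=+1$ on $\mu_0$ and $F=-1$ on $\mu_1$, the correlation is close to one, losing only on the rare event that a correction word leaves $[0,B]$; the margin $B=N^9\gg d^2N^9/(4d^2)$ leaves slack for this.

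For the Walsh bound we write $\widehat w(\omega)$ via Fourier in $\mathbb R/\mathbb Z$. A frequency $\theta$ dual to the sum constraint gives the factor $(1-e^{2\pi i\theta})$ from the shift. Each entry $X_{ij}$ contributes $\mathbb E_{x}(-1)^{\langle\omega_{ij},x\rangle}e^{2\pi i\theta a_ib_jx}$, and the binary digits of $x$ are independent. This gives the product formula $\prod_{l<r}\tfrac12\bigl(1\pm e^{2\pi i\theta a_ib_j2^l}\bigr)$, whose modulus is $\prod_l|\cos(\pi\theta a_ib_j2^l)|$. Similarly, each $Y$ word gives a product over its $9r$ bits of factors $|\cos(\pi\theta 2^l)|$ or $|\sin(\pi\theta2^l)|$. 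We choose public $a_i,b_j\le N^4$ at random (then fix them by an averaging argument) so that for every $\theta$ outside $N^{-C}$-neighborhoods of dyadic rationals $p/2^{s}$ with $s\le O(r)$, a constant fraction of the $d^2r$ angles $\theta a_ib_j2^l$ stay at distance $\ge c$ from $\tfrac12\mathbb Z$. This makes the $X$-product at most $2^{-cd^2r}$ uniformly in $\omega$.

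The main obstacle is the exceptional dyadic neighborhoods, where the $X$ product need not be small. At an exact dyadic center $\theta=p/2^s$ with $2^s\mid$ the relevant scale, the shift factor alone does not vanish, so we will instead define the second law $\mu_1$ as a smooth average of shifts. Replace the shift by $1$ with a shift by a random amount $\Delta$ drawn from a compactly supported smooth law whose Fourier transform equals the transform of $\delta_0$ at every dyadic center of bounded denominator and is flat to high order there. This is a Fejér or B-spline type cutoff, in the spirit of the scalar filter of Theorem~\ref{thm:modular-moment-filter}, but with dyadic denominators and applied at the level of $Y$ words. The price is that some mass of the shifted law lands on $S+\Delta$ with $\Delta=0$, or is signed. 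We bound this extra variation by $1/100$, which gives the constants $101/100$ and $99/100$. Near each center, the Taylor remainder of $1-\widehat{\Delta}$ over a neighborhood of width $N^{-C}$ is then exponentially small in the flatness order, which we take to be $\Theta(d^2r)$.

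The remaining care is uniformity over masks $\omega$ touching many $Y$ words at once. The $Y$ factor is a product over words, each bounded by one in modulus, so it can only help, provided the bounded-word split keeps the $Y$ words independent given the sum. We achieve this by letting a single designated word carry the deficit and checking separately that its out-of-range probability is tiny. Combining the cases, each $\theta$ has either a small $X$ product or a small cutoff factor, which yields \eqref{eq:dense-witness-walsh-bound}.
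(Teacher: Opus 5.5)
\textbf{Gap 1: the randomized split of $Y$ fails.} Suppose $4d^2-1$ words are independent and uniform on $\{0,\ldots,B\}$. Their sum concentrates near $2d^2B$, with fluctuations $\Theta(dB)$, whereas $S=a^{\mathsf T}Xb\le d^2N^9\approx d^2B$. The deficit $S-\sum V$ that your one (or few) designated words must carry is therefore of order $-d^2B$. It is out of range with probability essentially one, not tiny. Shrinking the free words' range does not help. For typical coefficients, $S$ itself fluctuates by $\Theta(dN^9)$ over $X$, which is far more than the range $B$ of a single word. (Also, $B=N^9$ is only four times $d^2N^9/(4d^2)$, not ``$\gg$'' it.) Words that sum exactly to $S(X)$ cannot be independent of each other and of $X$. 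Moreover, a mask on a correction word produces a character of $S-\sum V$, not a product of bit factors. So your step ``the $Y$ factor is a product of terms bounded by one'' has no basis.

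The missing idea is that independent output words are not needed at all. The paper keeps the output as a single cyclic coordinate $y\in\mathbb Z/Q$ with $Q<2d^2N^9$, encodes it deterministically by $\operatorname{enc}(y)$, and, for an arbitrary mask $\gamma$, expands the sign function $\chi_\gamma(\operatorname{enc}(y))$ on $\mathbb Z/Q$ with $\sum_t|c_t|^2=1$. Cauchy--Schwarz then costs only $\sqrt Q=O(dN^{9/2})$, which the decay $N^{-\delta d^2/100}$ of the $X$-product absorbs. The grid $t/Q$ also makes the union bound over frequencies in the choice of $a_i,b_j$ finite. You only assert that step, yet it is where the bound $q\le N^{\lambda}$ on exceptional denominators comes from.

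\textbf{Gap 2: the cutoff step is not established as sketched.} A positive shift law cannot be flat beyond first order, since its second moment is positive. You therefore need a signed law with $L\asymp d^2r$ vanishing moments, variation $1+o(1)$, and little mass at $0$. Such a law must spread over a radius $R$ of at least order $L\,2^s$. The Taylor bound $(2\pi eR|\theta-p/q|/(L+1))^{L+1}$ is useful only for $|\theta-p/q|\ll L/R$. For the scalar-filter-type construction you point to, $R\asymp L^4 2^s$, so the controlled radius is about $L^{-3}2^{-s}$, which shrinks with $d$. By contrast, the region where the $X$-product is not already small has width about $N^{-9}$, independent of $d$. For example, at $\theta=1/2$ parity masks make the $X$-product equal to one in modulus. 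The theorem must hold for all large $d$ and $r$ independently, since it is applied with $k$ possibly much smaller than $\log n$. So this instantiation fails once $d$ exceeds a fixed power of $N$. Rescuing the route would require a two-sided signed law whose radius grows only linearly in $L$, and you supply no such argument.

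The paper avoids flatness altogether. It shifts by the single value $D=N/2$, which every exceptional denominator $q\le N^{\lambda}$ divides, so the source transform vanishes exactly at each center. It then subtracts $K*_y\nu$, where $K$ is built from modulated de la Vallée Poussin kernels whose multiplier equals $1$ on every neighborhood. The variation cost is at most $10\pi DM_q/Q$ per center, hence $O(rN^{-8+\lambda})$ in total. This is independent of $d$ and of the target precision $2^{-cd^2r}$.
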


The proof below permits $c=1/2{,}000{,}000$.

\subsubsection{A signed source with an integer output}

Set

\[
  A=N^4,\qquad D=N/2,\qquad
  Q=2^{\lceil\log_2(d^2)\rceil}N^9.
\]

Thus $d^2N^9\le Q<2d^2N^9$. For independent uniform entries of
$X\in\{0,\ldots,N-1\}^{d\times d}$, write
$s(X)=\sum_{i,j}a_ib_jX_{ij}$ and start with

\begin{equation}
  \nu=\frac12\bigl(\mathcal L(X,s(X))-\mathcal L(X,s(X)+D)\bigr).
  \label{eq:dense-equation-signed-source}
\end{equation}

Both outputs lie in $[0,Q)$, because
$s(X)+D\le d^2N^8(N-1)+D<d^2N^9$.
The two supports are disjoint, so $\nu$ has mass zero and variation
one. It has correlation one with the sign of the integer equation
$s(X)=y$. We temporarily regard $y$ as an element of the cyclic
group $\mathbb Z/Q\mathbb Z$, but the two equations defining the
source are ordinary integer equations.

We will subtract a real correction $\mu$, with
$\|\mu\|_1\le1/100$, before splitting $y$ into bounded words.

\subsubsection{Frequency estimates for binary words}

For a mask $\beta$ on an $s$-bit word, let

\[
  \phi_{\beta,s}(\theta)
  =2^{-s}\sum_{x=0}^{2^s-1}(-1)^{\langle\beta,x\rangle}
  e^{2\pi i\theta x},\qquad
  m_s(\theta)=\prod_{j=0}^{s-1}
  \max\{|\cos(\pi2^j\theta)|,|\sin(\pi2^j\theta)|\}.
\]

Factoring over the bits gives a bound uniform in the mask:

\begin{equation}
  |\phi_{\beta,s}(\theta)|\le m_s(\theta)\le1.
  \label{eq:word-walsh-envelope}
\end{equation}

The factors have Lipschitz constants at most $\pi2^j$, so

\begin{equation}
  |m_s(u)-m_s(v)|\le\pi2^s|u-v|.
  \label{eq:word-envelope-lipschitz}
\end{equation}

Fix

\begin{equation}
  \delta=\frac1{10000},\quad \sigma=\frac1{50},\quad
  \eta=\frac1{1000},\quad \lambda_1=\frac1{40},\quad
  \lambda=\frac1{10}.
  \label{eq:dense-witness-exponents}
\end{equation}

All subsequent estimates hold for sufficiently large $r$.

\begin{lemma}[Large values occupy few short intervals]\label{lem:word-envelope-large-values}

  The set $S=\{\theta:m_r(\theta)>N^{-\delta}\}$ on the circle
  is covered by $J\le N^\sigma$ intervals of length at most $1/N$.

\end{lemma}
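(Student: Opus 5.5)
The plan is to read $m_r$ through the binary expansion of $\theta$. Write $\theta=\sum_{j\ge1}\theta_j2^{-j}$ with $\theta_j\in\{0,1\}$. The factor $\max\{|\cos(\pi2^j\theta)|,|\sin(\pi2^j\theta)|\}$ depends only on $2^j\theta\bmod1$, that is, on the tail digits $\theta_{j+1}\theta_{j+2}\ldots$.

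The key observation is that this factor equals $1$ exactly when $2^j\theta\bmod 1\in\{0,1/2\}$. More generally, it is at most $\cos(\pi/8)<1$ whenever $2^j\theta\bmod1$ lies at distance at least $1/8$ from $\{0,1/2,1\}$. Modulo $1/2$, the relevant quantity is the position of $2^{j+1}\theta\bmod1$ relative to $\{0,1\}$. The factor is close to $1$ only if $2^{j+1}\theta\bmod1$ is within $1/4$ of an integer. This forces digits $\theta_{j+2}$ and $\theta_{j+3}$ to be equal: either $00$, or $11$, which is the wrap toward $1$.

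So I will define a digit index $j<r$ to be \emph{bad} if $\theta_{j+2}\ne\theta_{j+3}$. On a bad index the factor is at most $c_0=\cos(\pi/8)$. This holds because $2^{j+1}\theta\bmod 1\in[1/4,3/4]$, so $2^j\theta\bmod 1$ is at distance at least $1/8$ from $\{0,1/2\}$ on the circle modulo $1$.

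Then $m_r(\theta)>N^{-\delta}=2^{-\delta r}$ forces the number of bad indices to be at most $\delta r/\log_2(1/c_0)\le C\delta r$. The bad indices are exactly the positions of digit changes in the string $\theta_2\ldots\theta_{r+2}$. Hence the first roughly $r$ binary digits of $\theta$ form a string with at most $C\delta r$ transitions.

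The counting step follows. Fix the first $r$ digits $\theta_1\ldots\theta_r$; they determine a dyadic interval of length $2^{-r}=1/N$. The number of admissible prefixes is at most
\[
2\sum_{i\le C\delta r}\binom{r}{i}\le 2^{rH(C\delta)+2},
\]
where $H$ denotes binary entropy. Every $\theta\in S$ lies in one of these intervals. I also need to account for the last two digits $\theta_{r+1},\theta_{r+2}$ entering the final few factors. I will simply drop the bad-index constraints for $j\ge r-2$, losing only $O(1)$ factors. The prefix determined by $\theta_1\ldots\theta_r$ then suffices.

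With $\delta=10^{-4}$ and $\log_2(1/\cos(\pi/8))\approx0.1142$, I get $C\delta\approx 8.8\times10^{-4}$, and $H(C\delta)\approx 0.011<\sigma=1/50$. For large $r$ the constant $4$ is absorbed, giving $J\le 2^{\sigma r}=N^{\sigma}$.

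The step I expect to be the main obstacle is making the digit criterion rigorous at dyadic endpoints and wraparound. Points with two binary expansions, and the identification $0\sim1$ on the circle, could cause a point of $S$ to be missed. I will handle this by using closed dyadic intervals, which cover both expansions. Alternatively, I can argue that $S$ is open and discard a measure-zero ambiguity by continuity.
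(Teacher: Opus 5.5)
Your proposal is correct and follows the paper's proof: the same bad-index criterion $\theta_{j+2}\ne\theta_{j+3}$, which bounds the factor by $\cos(\pi/8)$; the same transition count fed into the binary-entropy bound; and the same use of closed dyadic intervals for endpoints. The differences are cosmetic. You use the exact constant $1/\log_2(1/\cos(\pi/8))$ instead of $16$, and prefixes of length $r$ rather than $r+2$. With your prefixes both $\theta_1$ and $\theta_2$ are free, so the count is really $4\sum_{i\le C\delta r}\binom{r-2}{i}$ rather than $2\sum_{i\le C\delta r}\binom{r}{i}$, but it still satisfies your final bound $2^{rH(C\delta)+2}$.
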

\begin{proof} Write $\theta=0.b_1b_2\cdots$ in binary. Whenever
  $b_{j+2}\ne b_{j+3}$, the $j$th factor of $m_r$ is at most
  $\cos(\pi/8)<2^{-1/16}$. Thus membership in $S$ permits fewer
  than $16\delta r$ transitions among $b_2,\ldots,b_{r+2}$.
  The number of possible prefixes is at most
  $4\sum_{i\le16\delta r}\binom ri$, each specifying an interval
  of length $2^{-r-2}$. Closed intervals cover the endpoints.

  For binary entropy $H$, the elementary binomial bound is
  $\sum_{i\le pr}\binom ri\le2^{H(p)r}$. At $p=16\delta=1/625$,
  $H(p)\le p\log_2(e/p)<11/625<\sigma$.
  This absorbs the factor $4$ for large $r$.
\end{proof}

\begin{lemma}[Counting a progression through short intervals]\label{lem:progression-interval-count}

  Suppose a set $S_0$ is covered by $J_0$ intervals of length at
  most $w_0$. Let $p/q$ be reduced, $1\le q\le A$, and
  $\varepsilon=\theta-p/q$ satisfy $|\varepsilon|\le1/(Aq)$.
  If $p_0$ is the fraction of $a\in\{1,\ldots,A\}$ with
  $a\theta\bmod1\in S_0$, then

  \begin{equation}
    p_0\le2J_0w_0+\frac{8J_0}{q}.
    \label{eq:progression-denominator-bound}
  \end{equation}

  If also $qw_0<1$ and $\varepsilon\ne0$, then

  \begin{equation}
    p_0\le\frac{6J_0}{A}
    +\frac{3J_0w_0}{Aq|\varepsilon|}.
    \label{eq:progression-approximation-bound}
  \end{equation}

\end{lemma}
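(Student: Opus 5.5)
The plan is to count, for each interval $I$ of the cover, the number of $a\in\{1,\ldots,A\}$ with $a\theta\bmod1\in I$, and then sum over the $J_0$ intervals. Write $a\theta=ap/q+a\varepsilon$. As $a$ runs through a block of $q$ consecutive integers, the residues $ap\bmod q$ run through all of $\Z/q$ because $\gcd(p,q)=1$. Over such a block the drift $a\varepsilon$ changes by at most $q|\varepsilon|\le1/A\le1/q$.

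For \eqref{eq:progression-denominator-bound}, split $\{1,\ldots,A\}$ into at most $A/q+1\le2A/q$ blocks of length $q$; since $q\le A$, this bound holds. Within a block the points $a\theta$ are the grid points $j/q$, each shifted by a common offset plus a perturbation of size at most $1/q$. So each point lies within $1/q$ of a distinct point of a translated grid of spacing $1/q$. An interval of length $w_0$, enlarged by $1/q$ on each side, contains at most $q(w_0+2/q)+1=qw_0+3$ points of that grid. Hence each block contributes at most $qw_0+3$ hits per interval. Summing gives
\[
p_0\le\frac{2}{q}\,J_0(qw_0+3)\le 2J_0w_0+\frac{6J_0}{q},
\]
which is within the stated constant $8$. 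I will be slightly careful with the endpoints and the circle wrap-around, which only affect the additive constant.

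For \eqref{eq:progression-approximation-bound}, I will instead fix a residue class $a\equiv a_0\pmod q$. Along this class, $a\theta\bmod1$ equals the constant $a_0p/q$ plus $a\varepsilon$, so the points advance by exactly $q\varepsilon$ per step. This is a monotone arithmetic progression of step $q|\varepsilon|$, with at most $A/q+1$ terms and total length at most $A|\varepsilon|+q|\varepsilon|\le 2/q$. That total length is less than $1$, so the progression does not wrap. Within one class, an interval of length $w_0$ therefore contains at most $w_0/(q|\varepsilon|)+1$ terms.

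The remaining issue is how many classes can actually hit a given interval $I$. The class centers $a_0p/q$ are spaced $1/q$ apart, and each class's progression is confined to an arc of length at most $2/q$ starting at its center. Since $qw_0<1$, an interval of length $w_0$ meets the arcs of at most $3$ classes, or a similarly small constant. This gives at most
\[
3\left(\frac{w_0}{q|\varepsilon|}+1\right)
\]
hits per interval. Dividing by $A$ and summing over the $J_0$ intervals yields
\[
p_0\le\frac{3J_0}{A}+\frac{3J_0w_0}{Aq|\varepsilon|},
\]
which again fits within the stated constant $6$.

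The main obstacle is the bookkeeping in this second part: checking that the drift $a\varepsilon$ stays within an arc of length $O(1/q)$ uniformly in $a\le A$, and verifying the "at most three classes per interval" claim. Both depend on $|\varepsilon|\le1/(Aq)$ and $qw_0<1$. The slack in the stated constants, $8$ and $6$ against the $6$ and $3$ obtained above, should absorb the endpoint effects.
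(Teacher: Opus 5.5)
Your proposal is correct and is essentially the paper's proof. For \eqref{eq:progression-denominator-bound} you use blocks of $q$ consecutive integers forming a perturbed $q$-point grid, and for \eqref{eq:progression-approximation-bound} you use residue classes modulo $q$, whose progressions of step $q\varepsilon$ lie in equally oriented segments starting at the grid points, at most three of which meet an interval of length $w_0<1/q$. The paper just budgets larger additive constants ($qw_0+4$ points per block and $2+w_0/(q|\varepsilon|)$ per segment, allowing for wrap-around), which is where the $8$ and $6$ come from. Your no-wrap remark should use the sharper span bound $(A-1)|\varepsilon|<1/q$, since your bound $2/q$ is not below $1$ when $q\le 2$.
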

\begin{proof} Break $1,\ldots,A$ into at most $2A/q$ blocks of $q$
  consecutive integers. In each block, $a\theta$ is a translate of
  the equally spaced $q$-point grid, with each point perturbed by
  at most $q|\varepsilon|\le1/q$. A covering interval contains at
  most $qw_0+4$ points per block, proving \eqref{eq:progression-denominator-bound}.

  For \eqref{eq:progression-approximation-bound}, fix a residue class of $a$ modulo $q$. Its points lie
  on a monotone progression of step $q\varepsilon$, in a segment
  of length at most $A|\varepsilon|\le1/q$. These containing
  segments start at the $q$ equally spaced grid points and all
  have the same direction. An interval of length $w_0<1/q$ meets
  at most three segments. Each intersection contains at most
  $2+w_0/(q|\varepsilon|)$ points, including points that wrap around
  the circle. Summing over the $J_0$ covering intervals and dividing by
  $A$ gives $p_0\le(3J_0/A)(2+w_0/(q|\varepsilon|))$, proving
  \eqref{eq:progression-approximation-bound}.
\end{proof}

\begin{lemma}[Dense products force a nearby rational]\label{lem:dense-products-rational-approximation}

  If independent uniform $a,b\in\{1,\ldots,A\}$ satisfy

  \[
    \Pr[m_r(ab\theta)>N^{-\delta}]>N^{-\eta},
  \]

  then some reduced fraction $p/q$ satisfies

  \begin{equation}
    1\le q\le N^\lambda,\qquad
    |\theta-p/q|\le \frac{N^{-1+\lambda}}{A^2q}.
    \label{eq:dense-products-rational-approximation}
  \end{equation}

\end{lemma}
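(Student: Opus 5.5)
The plan is to prove the contrapositive. Suppose no reduced fraction satisfies \eqref{eq:dense-products-rational-approximation}; we show the probability is at most $N^{-\eta}$. The argument averages over one factor at a time and uses Lemma~\ref{lem:progression-interval-count} twice. The first application is to the outer factor $b$. The second is to the inner factor $a$, applied to the frequency $b\theta$. By Lemma~\ref{lem:word-envelope-large-values}, the set $S$ is covered by $J\le N^\sigma$ intervals of length $1/N$.

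\textbf{Step 1: inner factor.} Fix $b$ and put $\theta_b=b\theta$. By Dirichlet's theorem there is a reduced $p/q$ with $1\le q\le A$ and $|\theta_b-p/q|\le1/(Aq)$. Apply Lemma~\ref{lem:progression-interval-count} with $S_0=S$, $J_0=J$, and $w_0=1/N$. If $q>N^{\lambda_1}$, then \eqref{eq:progression-denominator-bound} gives a conditional probability at most $2N^{\sigma-1}+8N^{\sigma-\lambda_1}$. This is below $N^{-\eta}/3$, since $\sigma+\eta<\lambda_1$.

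If $q\le N^{\lambda_1}$, then $qw_0<1$. If moreover $|\varepsilon|\ge N^{-1+\lambda_1}/(Aq)$, then \eqref{eq:progression-approximation-bound} gives at most $6N^{\sigma}/A+3N^{\sigma-\lambda_1}$, which is again small. Hence, for the probability to exceed $N^{-\eta}$, a fraction of at least $N^{-\eta}/3$ of the $b$'s must be \emph{bad}. Here $b$ is bad if $b\theta$ lies within $N^{-1+\lambda_1}/(Aq_b)$ of a rational $p_b/q_b$ with $q_b\le N^{\lambda_1}$. This includes the case $\varepsilon=0$.

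\textbf{Step 2: outer factor.} Multiply by the denominator $q_b$. For bad $b$, the point $q_b b\theta$ lies within $N^{-1+\lambda_1}/A$ of an integer. Pigeonhole over the at most $N^{\lambda_1}$ values of $q_b$. This yields a fixed $q_0\le N^{\lambda_1}$ such that the point $b\,(q_0\theta)\bmod1$ lies in $S_1=[-w_1,w_1]$ for a fraction at least $N^{-\eta-\lambda_1}/3$ of $b\in\{1,\ldots,A\}$, where $w_1=2N^{-1+\lambda_1}/A$.

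Now apply Lemma~\ref{lem:progression-interval-count} with $J_0=1$ to the frequency $q_0\theta$, using its Dirichlet approximation $p'/q'$ with $q'\le A$. Bound \eqref{eq:progression-denominator-bound} forces $q'\le 24N^{\eta+\lambda_1}$. The term $2w_1$ is negligible. If $\varepsilon'\ne0$, bound \eqref{eq:progression-approximation-bound} forces $|\varepsilon'|\le O(N^{\eta+\lambda_1}w_1/(Aq'))$, which is $O(N^{-1+\eta+2\lambda_1}/(A^2q'))$.

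\textbf{Step 3: conclusion.} Divide by $q_0$. The fraction $p'/(q_0q')$, after reduction, has denominator $q\le q_0q'$. We have $q\le O(N^{\eta+2\lambda_1})\le N^\lambda$, because $\eta+2\lambda_1<\lambda$. The error satisfies
\[
|\theta-p/q|\le|\varepsilon'|/q_0\le O(N^{-1+\eta+2\lambda_1})/(A^2q_0q')\le N^{-1+\lambda}/(A^2q),
\]
using $q\le q_0q'$ and $\eta+2\lambda_1<\lambda$. This contradicts the assumption and proves \eqref{eq:dense-products-rational-approximation}.

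\textbf{Main obstacle.} The delicate point is the bookkeeping in Step 1. The second bound of Lemma~\ref{lem:progression-interval-count} requires the small-denominator case to be handled with $qw_0<1$. Also, the constants $4$, $6$, $8$, and $24$, together with the exponent inequalities $\sigma+\eta<\lambda_1$ and $\eta+2\lambda_1<\lambda$, must leave slack. The chosen values $\sigma=1/50$, $\eta=1/1000$, $\lambda_1=1/40$, and $\lambda=1/10$ satisfy these inequalities, so all remaining losses are absorbed for large $r$.
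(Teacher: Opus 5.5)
Your proof is correct. Its overall structure matches the paper's: apply Lemma~\ref{lem:progression-interval-count} first in $a$ for each fixed $b$, using the cover of $S$ from Lemma~\ref{lem:word-envelope-large-values}, to place $b\theta$ near a fraction of denominator at most $N^{\lambda_1}$ for a non-negligible fraction of $b$, then apply the lemma again in $b$.

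The two arguments differ only in that second application.
\begin{itemize}
\item \textbf{The paper's route.} It covers the union of all neighborhoods of radius $N^{-1+\lambda_1}/(Aq)$ around fractions with $q\le N^{\lambda_1}$ by $J_0\le 2N^{2\lambda_1}$ intervals of length $2N^{-1+\lambda_1}/A$. It then applies the lemma directly to the frequency $\theta$, so no rescaling is needed.
\item \textbf{Your route.} You pigeonhole the denominator to a single $q_b=q_0$, paying a factor $N^{\lambda_1}$ in density rather than in $J_0$. This collapses the target to a single interval around $0$ for the frequency $q_0\theta$, so you apply the lemma with $J_0=1$.
\end{itemize}
Your version avoids counting small-denominator fractions and gives slightly better exponents: an error of order $N^{-1+\eta+2\lambda_1}/(A^2q)$ versus the paper's $N^{-1+2\eta+3\lambda_1}/(A^2q)$. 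The cost is that you must divide back by $q_0$.

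In that division, $q_0\theta=p'/q'+\varepsilon'+n$ for some integer $n$. The approximant is therefore $(p'+nq')/(q_0q')$, not $p'/(q_0q')$. Its reduced denominator still divides $q_0q'$, so your bounds $q\le q_0q'\le N^{\lambda}$ and $|\theta-p/q|=|\varepsilon'|/q_0$ hold unchanged.
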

\begin{proof} First consider a frequency $\alpha$ for which
  $\Pr_a[a\alpha\in S]>N^{-2\eta}$. Dirichlet's pigeonhole
  argument supplies a reduced $p/q$, $q\le A$, within $1/(Aq)$.
  Apply Lemma~\ref{lem:progression-interval-count} with $J_0\le N^\sigma$ and $w_0=1/N$.
  The first term of \eqref{eq:progression-denominator-bound} is at most $N^{-2\eta}/2$, so

  \[
    q\le16N^{\sigma+2\eta}\le N^{\lambda_1}.
  \]

  Now $qw_0<1$. The first term of \eqref{eq:progression-approximation-bound} is also at most
  $N^{-2\eta}/2$, giving

  \begin{equation}
    |\alpha-p/q|\le
    \frac{6N^{\sigma+2\eta}}{ANq}
    \le\frac{N^{-1+\lambda_1}}{Aq}.
    \label{eq:single-progression-rational-approximation}
  \end{equation}

  The case $\alpha=p/q$ already satisfies this inequality.

  Write $p_b=\Pr_a[m_r(ab\theta)>N^{-\delta}]$.
  The hypothesis implies that the fraction of $b$ with
  $p_b>N^{-2\eta}$ is greater than
  $N^{-\eta}-N^{-2\eta}\ge N^{-2\eta}$.
  For each such $b$, \eqref{eq:single-progression-rational-approximation} places $b\theta$ in the union of the
  rational neighborhoods with denominators at most $N^{\lambda_1}$.
  This union has a cover with

  \[
    J_0\le2N^{2\lambda_1},\qquad
    w_0=\frac{2N^{-1+\lambda_1}}A.
  \]

  Apply Lemma~\ref{lem:progression-interval-count} a second time, now to the progression $b\theta$.
  Again the first terms of both estimates are at most $N^{-2\eta}/2$.
  The first estimate gives
  $q\le32N^{2\lambda_1+2\eta}\le N^\lambda$ and hence $qw_0<1$.
  If $\theta=p/q$, the conclusion is immediate. Otherwise the second
  estimate gives

  \[
    |\theta-p/q|\le
    \frac{24N^{-1+3\lambda_1+2\eta}}{A^2q}
    \le\frac{N^{-1+\lambda}}{A^2q}.
  \]

  All constant factors are absorbed by the strict exponent margins
  in \eqref{eq:dense-witness-exponents}.
\end{proof}

\subsubsection{Choosing the row and column coefficients}

Set $L=\lfloor(1-4\lambda)r\rfloor$ and $q_0=2^{20}$.

\begin{lemma}[Few residues at odd denominators]\label{lem:odd-denominator-residues}

  For odd $3\le q\le N^\lambda$, define

  \[
    B_q=\{u\bmod q:m_L(u/q)\ge\tfrac12N^{-\delta}\}.
  \]

  Then $|B_q|\le8q^{1/5}$. If $q\le q_0$, then $B_q=\{0\}$.

\end{lemma}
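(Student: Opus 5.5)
We split the argument into a counting step, which gives $|B_q|\le8q^{1/5}$, and a periodicity step, which gives $B_q=\{0\}$ for $q\le q_0$. Both rest on the binary-digit description already used in Lemma~\ref{lem:word-envelope-large-values}. Write $u/q=0.b_1b_2\cdots$ in binary. Each factor of $m_L$ equals $\cos(\pi\operatorname{dist}(2^ju/q,\tfrac12\Z))$, and it is at most $\cos(\pi/8)<2^{-1/16}$ whenever $b_{j+2}\ne b_{j+3}$. So $u\in B_q$ forces the number $T(u)$ of transitions among $b_2,\ldots,b_{L+2}$ to satisfy $T(u)<16\delta r+16\le r/600$ for large $r$. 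Since $q$ is odd, the expansion is purely periodic with period $p=\operatorname{ord}_q(2)<q$. If $u\not\equiv0$, every period contains both digits and hence at least two cyclic transitions.

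\emph{Counting step.} Put $k=\lceil\log_2q\rceil+2$, so that $k\le\lambda r+3\le L/4$. A length-$k$ window $b_{j+1}\cdots b_{j+k}$ determines the residue $2^ju\bmod q$, because distinct multiples of $1/q$ are more than $2^{-k}$ apart on the circle. Let $S$ be the set of binary strings of length $k$ with at most $k/300$ transitions. For $u\in B_q$, consider the $\asymp L$ sliding windows inside $b_2,\ldots,b_{L+2}$. Each transition lies in at most $k$ of them, so the average transition count per window is at most $T(u)k/(L-k)\le k/600$. By Markov's inequality, at least half the windows lie in $S$. Now double count the pairs $(u,j)$ with $u\in B_q$ and window $j$ of $u$ in $S$. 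For fixed $j$, the map $u\mapsto2^ju\bmod q$ is a bijection, and the window determines $2^ju$, so at most $|S|$ values of $u$ contribute. Hence
\[
\tfrac12|B_q|(L-k)\le (L-k)|S|, \qquad |B_q|\le2|S|\le4\cdot2^{H(1/300)k}.
\]
The entropy bound from Lemma~\ref{lem:word-envelope-large-values} gives $H(1/300)<1/20$. Therefore $|B_q|\le4\cdot 2^{(\log_2 q+3)/20}\le8q^{1/5}$.

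\emph{Periodicity step.} Suppose $q\le q_0=2^{20}$ and $u\not\equiv0$. The window of length $L+1$ contains at least $\lfloor (L+1)/p\rfloor\ge L/q_0-1$ full periods, each contributing at least one transition inside the window, and in fact at least two when counted cyclically. Thus $T(u)\ge L/2^{20}-2$. This grows like $0.6r/2^{20}$, and I must check that it exceeds $r/600$. It does not: $0.6/2^{20}<1/600$. So I cannot rely on the crude constant $16\delta r$. Instead I will use the exact estimate: a factor with $b_{j+2}\ne b_{j+3}$ contributes at most $2^{-1/16}$, and a period $p$ containing $\tau\ge2$ transitions gives $m_L(u/q)\le 2^{-\tau\lfloor L/p\rfloor/16}\cdot O(1)\le 2^{-L/(8q_0)+O(1)}$. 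This must be compared with $\tfrac12N^{-\delta}=2^{-r/10000-1}$. Since $L/(8q_0)\approx0.6r/2^{23}$, this is again smaller than $r/10000$, so the comparison fails with these constants.

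The main obstacle is therefore the $q\le q_0$ claim: the sole periodicity argument is too weak by a constant factor. I would first try a sharper per-period bound. For odd $q\le q_0$ with $u\ne0$, within each period some digit position has $2^ju/q$ at distance at least $1/(2q)\cdot\Theta(1)$ from $\tfrac12\Z$ at \emph{every} $j$ in a constant fraction of positions. Equivalently, I would bound the period product $\prod_{j<p}\cos(\pi\operatorname{dist}(2^ju/q,\tfrac12\Z))$ by an explicit $c_q<1$ and compare $c_q^{L/p}$ with $N^{-\delta}$. Since $p\le q\le2^{20}$, this requires $\log(1/c_q)/p\gtrsim\delta\cdot 10/6$. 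If a uniform bound fails for some $q$, I would check that small-$q$ range using the fact that $\operatorname{ord}_q(2)$ is much smaller than $q$ for those $q$.
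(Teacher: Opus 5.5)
There is a genuine gap in the second assertion. You leave $B_q=\{0\}$ for odd $q\le q_0$ unproved, and the route you try cannot close it. Counting transitions per period uses $p=\operatorname{ord}_q(2)$, which can be as large as about $q\approx 2^{20}$. It therefore gives only about $L/2^{20}$ transitions, as you found.

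The missing idea is to bound the length of constant runs rather than the period. Suppose $u\not\equiv 0\pmod q$. Because $q$ is odd, $2^ju\not\equiv 0$ for every $j$, so each fractional part $2^ju/q \bmod 1 = 0.b_{j+1}b_{j+2}\cdots$ lies in $[1/q,1-1/q]$. Put $\ell=\lceil\log_2 q\rceil+1$, so that $2^{-\ell}\le 1/(2q)$. Then $b_{j+1},\ldots,b_{j+\ell}$ cannot all be $0$, since the fraction would be below $1/q$. Nor can they all be $1$, since it would exceed $1-1/q$.

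For $q\le 2^{20}$ this means every constant run among $b_2,\ldots,b_{L+2}$ has length at most $20$. Hence $T(u)\ge (L+1)/20-1\approx 0.03r$, far above the bound $T(u)<16(\delta r+1)\approx r/625$ forced by membership in $B_q$. So no nonzero residue qualifies, while $0\in B_q$ because $m_L(0)=1$. This is the uniform contraction you were looking for: every block of $\ell$ consecutive digits carries a factor at most $\cos(\pi/8)$, independently of $\operatorname{ord}_q(2)$. This is how the paper argues.

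Your counting step is correct in structure and matches the paper's proof. Windows of length about $\log_2 q$ determine $2^ju \bmod q$. Markov's inequality makes half the windows low-transition, and the bijection $u\mapsto 2^ju$ plus an entropy bound finishes.

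There is a factor-two slip in that step. There are only about $L-k\approx r/2$ windows, not about $r$. So the average transition count per window is about $T(u)k/(L-k)\approx k/300$, not $k/600$. Markov then only guarantees half the windows with at most about $k/150$ transitions, so your set $S$ (threshold $k/300$) need not contain them. Define $S$ with threshold $k/150$ instead. The binary entropy at that density is still well below $1/5$ (about $0.06$), so $|B_q|\le 2|S|\le 8q^{1/5}$ still follows.
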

\begin{proof} A qualifying residue has at most
  $t=16(\delta r+1)\le17\delta r$ transitions among its
  $L+1$ bits $b_2,\ldots,b_{L+2}$, by the proof of Lemma~\ref{lem:word-envelope-large-values}.
  Put $\ell=\lceil\log_2q\rceil+1$.
  There are $W=L+2-\ell\ge(1-5\lambda)r-1$ windows of length
  $\ell$ in these bits. Each transition occurs in at most $\ell-1$
  windows. At least half the windows therefore have at most
  $\lfloor2t(\ell-1)/W\rfloor$ transitions. Our constants give
  $2t/W<1/50$.

  For a fixed window position, multiplication by a power of two
  permutes residues modulo odd $q$. An $\ell$-bit prefix contains
  at most one point $u/q$, since $2^{-\ell}<1/q$.
  Counting pairs of a qualifying residue and one of its good windows
  gives

  \begin{equation}
    |B_q|\le4\sum_{i\le(\ell-1)/50}\binom{\ell-1}{i}
    \le4\,2^{H(1/50)(\ell-1)}
    \le8q^{1/5},
    \label{eq:odd-denominator-residue-count}
  \end{equation}

  where $H(1/50)\le(1/50)\log_2(50e)<1/5$.

  Every doubled nonzero residue modulo an odd $q$ stays nonzero,
  so its fractional part stays in $[1/q,1-1/q]$.
  Its binary expansion consequently has no constant run of length
  $\ell$. For $q\le q_0$, we have $\ell\le21$, forcing at least
  $(L+1)/21-1>17\delta r$ transitions. Thus only zero can qualify
  in this case.
\end{proof}

For uniform $a\in\{1,\ldots,A\}$ and $\gcd(p,q)=1$, it follows that

\begin{equation}
  \Pr[ap\bmod q\in B_q]\le
  \begin{cases}
    1/q,&3\le q\le q_0,\ q\text{ odd},\\
    q^{-1/2},&q_0<q\le N^\lambda,\ q\text{ odd}.
  \end{cases}
  \label{eq:odd-denominator-residue-probability}
\end{equation}

Indeed, the first probability is $\lfloor A/q\rfloor/A$.
The second is at most
$8q^{1/5}(1/q+1/A)\le16q^{-4/5}\le q^{-1/2}$.

Choose $a_1,\ldots,a_d,b_1,\ldots,b_d$ independently and uniformly
from $\{1,\ldots,A\}$. There is a choice satisfying both properties:

\begin{enumerate}
  \item For each grid frequency $\theta=t/Q$ whose probability in
    Lemma~\ref{lem:dense-products-rational-approximation} is at most $N^{-\eta}$, fewer than $99d^2/100$ pairs
    $(i,j)$ have $m_r(a_ib_j\theta)>N^{-\delta}$.
  \item For every reduced $p/q$ with odd $3\le q\le N^\lambda$, fewer
    than $9d/10$ entries of each coefficient vector belong to
    $\{a:ap\bmod q\in B_q\}$.

\end{enumerate}

For property 1, decompose the complete bipartite grid into $d$
perfect matchings, for example $j=i+s\bmod d$.
Within one matching the pairs $(a_i,b_j)$ are independent.
If at least $99d^2/100$ pairs are bad, one matching has at least
$99d/100$ bad pairs. The union bound over matchings and frequencies
is therefore at most

\begin{equation}
  Qd\,2^dN^{-99\eta d/100}
  \le2d^3\,2^dN^{\,9-99\eta d/100}.
  \label{eq:coefficient-grid-failure-probability}
\end{equation}

This is less than $1/4$ uniformly for all sufficiently large $d,r$.
For example, once $d\ge10^6$, the exponent of $N$ is at most
$-0.00098d$, which dominates $d+3\log_2d+1$ for large $r$.

By \eqref{eq:odd-denominator-residue-probability}, the failure probability of property 2 for one vector
is at most

\begin{equation}
  q_0^2\,2^d3^{-9d/10}
  +2^d\sum_{q>q_0}q^{\,1-9d/20}
  \le2^{40-2d/5}+2^{40-8d}.
  \label{eq:coefficient-residue-failure-probability}
\end{equation}

This is less than $1/8$ for all sufficiently large $d$.
A union bound for the two vectors and \eqref{eq:coefficient-grid-failure-probability} proves existence.
Fix such vectors. They depend only on $d,r$, and their selection
requires no communication.

\subsubsection{Fourier decay outside dyadic neighborhoods}

Call a frequency exceptional if it lies within

\begin{equation}
  w_q=\frac{N^{-1+\lambda}}{A^2q}
  \label{eq:dyadic-neighborhood-width}
\end{equation}

of a reduced dyadic fraction $p/q$ with $q\le N^\lambda$.
Centers are taken once modulo one, including $0/1$.
For every nonexceptional grid frequency,

\begin{equation}
  \prod_{i,j}m_r(a_ib_j\theta)\le N^{-\delta d^2/100}.
  \label{eq:nonexceptional-product-decay}
\end{equation}

If the probability in Lemma~\ref{lem:dense-products-rational-approximation} is at most $N^{-\eta}$, property 1
gives \eqref{eq:nonexceptional-product-decay} immediately. Otherwise use \eqref{eq:dense-products-rational-approximation} and write
$q=2^sv$ with $v$ odd. Nonexceptionality forces $v\ge3$.

Put $s_0=\lfloor(1-2\lambda)r\rfloor$. For $c\le A^2$, \eqref{eq:word-envelope-lipschitz} gives

\begin{equation}
  |m_{s_0}(c\theta)-m_{s_0}(cp/q)|
  \le\pi2^{s_0}A^2|\theta-p/q|
  \le\pi N^{-\lambda}\le\tfrac12N^{-\delta}.
  \label{eq:rational-perturbation-bound}
\end{equation}

Since $s\le\lambda r$ and $s_0-s\ge L$, dropping factors shows

\begin{equation}
  m_r(c\theta)>N^{-\delta}
  \quad\Longrightarrow\quad
  m_L(cp/v)>\tfrac12N^{-\delta}.
  \label{eq:odd-denominator-reduction}
\end{equation}

Property 2 for the vector $b$, applied to denominator $v$,
implies that at least $d/10$ indices $j$ have $v\nmid b_j$,
because $0\in B_v$. For any such $j$, reduce
$b_jp/v=p'/v'$; then $v'\ge3$ is odd.
Property 2 for $a$, applied to this reduced fraction, gives at
least $d/10$ indices $i$ with
$m_L(a_ib_jp/v)<N^{-\delta}/2$.
By \eqref{eq:odd-denominator-reduction}, at least $d^2/100$ pairs therefore have
$m_r(a_ib_j\theta)\le N^{-\delta}$, proving \eqref{eq:nonexceptional-product-decay}.

\subsubsection{Removing dyadic neighborhoods at small variation cost}

The shift $D$ is divisible by every exceptional denominator $q$.
Thus the Fourier transform of \eqref{eq:dense-equation-signed-source} vanishes at each exceptional
center. A smooth cutoff removes a neighborhood of each center
while changing the measure by only a small amount.

We use the de la Vallée Poussin combination of Fejér kernels;
see D\~ung, Temlyakov, and Ullrich~\cite[Section~2.1 of the author manuscript]{dung-temlyakov-ullrich}.
We sample the kernel on $\mathbb Z/Q$ and prove the discrete shift
estimate needed here. For $1\le M$ with $4M<Q$, define

\[
  F_M(y)=\frac1{QM}
  \left|\sum_{j=0}^{M-1}e^{2\pi i jy/Q}\right|^2,
  \qquad V_M=2F_{2M}-F_M.
\]

The kernel $F_M$ is nonnegative and sums to one. The cyclic Fourier
multiplier of $V_M$ is the trapezoid

\begin{equation}
  v_M(t)=
  \begin{cases}
    1,&|t|\le M,\\
    2-|t|/M,&M\le|t|\le2M,\\
    0,&|t|\ge2M,
  \end{cases}
  \label{eq:trapezoidal-fourier-multiplier}
\end{equation}

where $|t|$ denotes the distance to zero modulo $Q$.
In particular, $0\le v_M\le1$.

\begin{lemma}[Small changes under a shift]\label{lem:cutoff-kernel-shift-bound}

  With $\tau_D f(y)=f(y-D)$,

  \begin{equation}
    \|V_M-\tau_DV_M\|_1\le20\pi DM/Q.
    \label{eq:cutoff-kernel-shift-bound}
  \end{equation}

\end{lemma}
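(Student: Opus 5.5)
The bound will come from writing $V_M$ as a short combination of Fejér kernels and bounding the shift of a single Fejér kernel through its Dirichlet factorization. Since $V_M=2F_{2M}-F_M$, the triangle inequality gives
\[
  \|V_M-\tau_DV_M\|_1\le2\|F_{2M}-\tau_DF_{2M}\|_1+\|F_M-\tau_DF_M\|_1 .
\]
It therefore suffices to prove, for every $M'$ with $2M'<Q$,
\[
  \|F_{M'}-\tau_DF_{M'}\|_1\le c\,\pi DM'/Q
\]
with $c$ an absolute constant. Taking $c=4$ gives $2\cdot4\pi D(2M)/Q+4\pi DM/Q=20\pi DM/Q$, which is exactly the stated constant.

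For a single kernel, write $F_{M'}=|g|^2$ with
\[
  g(y)=(QM')^{-1/2}\sum_{j<M'}e^{2\pi ijy/Q}.
\]
Parseval on $\mathbb Z/Q$ gives $\|g\|_2^2=\sum_yF_{M'}(y)=1$. The identity $|a|^2-|b|^2=\operatorname{Re}((a-b)\overline{(a+b)})$ and Cauchy--Schwarz then give
\[
  \|\,|g|^2-|\tau_Dg|^2\|_1\le\|g-\tau_Dg\|_2\,\|g+\tau_Dg\|_2\le2\|g-\tau_Dg\|_2 .
\]
Next compute $\|g-\tau_Dg\|_2$ by Parseval. The Fourier coefficients of $g$ are supported on $0\le j<M'$, each with squared modulus $1/M'$ in the normalization that makes $\|g\|_2=1$. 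The shift multiplies the $j$th coefficient by $e^{-2\pi ijD/Q}$, so
\[
  \|g-\tau_Dg\|_2^2=\frac1{M'}\sum_{j<M'}\bigl|1-e^{-2\pi ijD/Q}\bigr|^2\le\frac1{M'}\sum_{j<M'}\Bigl(\frac{2\pi jD}{Q}\Bigr)^2\le\Bigl(\frac{2\pi M'D}{Q}\Bigr)^2 .
\]
Hence $\|F_{M'}-\tau_DF_{M'}\|_1\le4\pi DM'/Q$, which is the constant $c=4$ needed above.

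The only point needing care is the discrete Parseval normalization on $\mathbb Z/Q$: the coefficient mass of $g$ must sum to one consistently with $\sum_yF_{M'}=1$. This works because the frequencies $j<M'$ are distinct modulo $Q$, which holds since $2M\le 4M<Q$ covers both kernels. No constraint on $D$ is needed, since the bound is trivially true once $DM/Q$ is large.
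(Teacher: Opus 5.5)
Your proof is correct and takes the same route as the paper's: write the Fejér kernel as a squared Dirichlet sum, compute the shift difference of the Dirichlet sum by Parseval using $|1-e^{-2\pi ijD/Q}|\le 2\pi jD/Q$, and pass to the kernel with $\bigl\|\,|S|^2-|T|^2\bigr\|_1\le\|S-T\|_2(\|S\|_2+\|T\|_2)$. The result is $\|F_{M'}-\tau_DF_{M'}\|_1\le4\pi DM'/Q$, and applying it to $V_M=2F_{2M}-F_M$ gives $20\pi DM/Q$. The only difference is cosmetic: you normalize the Dirichlet sum by $(QM')^{-1/2}$ first, while the paper divides by $QM$ at the end.
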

\begin{proof} Let $S_M(y)=\sum_{j<M}e^{2\pi i jy/Q}$.
  Orthogonality gives $\|S_M\|_2^2=QM$, and

  \[
    \|S_M-\tau_DS_M\|_2^2
    =Q\sum_{j<M}|1-e^{-2\pi i jD/Q}|^2
    \le4\pi^2D^2M^3/Q.
  \]

  The inequality
  $\||S|^2-|T|^2\|_1\le\|S-T\|_2(\|S\|_2+\|T\|_2)$
  therefore gives
  $\|F_M-\tau_DF_M\|_1\le4\pi DM/Q$.
  Applying this to $2F_{2M}-F_M$ proves \eqref{eq:cutoff-kernel-shift-bound}.
\end{proof}

For every reduced dyadic center $p/q$ in \eqref{eq:dyadic-neighborhood-width}, take
$M_q=\lceil Qw_q\rceil$, and modulate $V_{M_q}$ to that center.
Let $K$ be the sum of these modulated kernels and $f(t)$ its
Fourier multiplier. Each center is a grid point because $q\mid Q$.
For large $r$, the trapezoid supports are disjoint: distinct centers
are separated by at least $N^{-\lambda}$, whereas their half-widths
are at most $2w_1+2/Q$.
Consequently,

\begin{equation}
  0\le f(t)\le1,\qquad
  f(t)=1\quad\text{at every exceptional grid frequency}.
  \label{eq:dyadic-cutoff-multiplier}
\end{equation}

The centers occur in opposite pairs with equal widths, so $K$ is
real and even, and $f(t)=f(-t)$. Define $\mu=K*_y\nu$,
convolving only in the output coordinate,
and put $\nu'=\nu-\mu$.

At a center $p/q$, modulation does not change the shift estimate:
$e^{2\pi i pD/q}=1$. For each fixed $X$, convolve the signed two-atom
measure in~\eqref{eq:dense-equation-signed-source} with that kernel. The result has variation at
most half the bound in~\eqref{eq:cutoff-kernel-shift-bound}.
Averaging over $X$ and summing over centers yields

\begin{equation}
  \begin{aligned}
    \|\mu\|_1
    &\le10\pi D\sum_{p/q}\frac{M_q}{Q}\\
    &\le10\pi D\left(
      (1+\lambda r)\frac{N^{-1+\lambda}}{A^2}
    +\frac{N^\lambda}{Q}\right)\\
    &\le5\pi(\lambda r+2)N^{-8+\lambda}
    \le\frac1{100}.
  \end{aligned}
  \label{eq:dyadic-correction-variation}
\end{equation}

For the second line, each dyadic denominator has at most $q$
reduced numerators, and there are at most $1+\lambda r$
denominators. The total number of centers is at most $N^\lambda$.
The last bound is uniform in $d$, which is essential when the
matrix dimension is much larger than the word range.

Both $\nu$ and $\mu$ have mass zero, even for each fixed $X$.
Thus $\nu'$ has mass zero, variation at most $101/100$, and
correlation at least $99/100$ with the sign of $s(X)=y$.
The correction may wrap around modulo $Q$; its entire variation
bounds the resulting loss of correlation with this integer equation.

\subsubsection{Encoding the output and bounding Walsh coefficients}

\begin{proof}[Completion of Theorem~\ref{thm:dense-equation-witness}]

  List the $4d^2$ positions of a $2d\times2d$ matrix in a fixed
  public order. For $0\le y<Q$, define

  \begin{equation}
    \operatorname{enc}(y)_\ell
    =\min\{B,\max\{0,y-(\ell-1)B\}\},
    \qquad 1\le\ell\le4d^2.
    \label{eq:bounded-word-output-encoding}
  \end{equation}

  This fills entries with $B$, followed by at most one intermediate
  entry and then zeros. Since $4d^2B\ge Q-1$, it satisfies

  \begin{equation}
    0\le\operatorname{enc}(y)_\ell\le B,\qquad
    \sum_\ell\operatorname{enc}(y)_\ell=y.
    \label{eq:output-encoding-sum}
  \end{equation}

  Push $\nu'$ forward under $(X,y)\mapsto(X,\operatorname{enc}(y))$
  to obtain $w$. This preserves mass, cannot increase variation,
  and preserves the equation's label. Hence \eqref{eq:dense-witness-mass-correlation} follows.
  This encoding defines the witness distribution; the communication
  reduction simply copies the resulting word variables.

  It remains to check every Walsh coefficient of $w$, including masks
  on all of the bounded output words. Fix masks $\beta_{ij}$ on $X$
  and an arbitrary mask $\gamma$ on the encoded output. The function
  $\chi_\gamma(\operatorname{enc}(y))$ is a sign function on
  $\mathbb Z/Q\mathbb Z$. Its cyclic Fourier expansion has the form

  \[
    \chi_\gamma(\operatorname{enc}(y))
    =\sum_{t=0}^{Q-1}c_t e^{2\pi i ty/Q},
    \qquad \sum_t|c_t|^2=1.
  \]

  The corresponding Walsh coefficient of $w$ is exactly

  \begin{equation}
    \sum_{t=0}^{Q-1}c_t(1-f(t))
    \frac{1-e^{2\pi iDt/Q}}2
    \prod_{i,j}\phi_{\beta_{ij},r}(a_ib_jt/Q).
    \label{eq:encoded-witness-walsh-coefficient}
  \end{equation}

  Every exceptional frequency is removed by \eqref{eq:dyadic-cutoff-multiplier}. At every other
  frequency, \eqref{eq:word-walsh-envelope} and \eqref{eq:nonexceptional-product-decay} bound the product by
  $N^{-\delta d^2/100}$; the other two factors have absolute value
  at most one. Cauchy--Schwarz therefore bounds \eqref{eq:encoded-witness-walsh-coefficient} by

  \begin{equation}
    \sqrt Q\,N^{-\delta d^2/100}
    \le\sqrt2\,d\,N^{9/2-\delta d^2/100}
    \le N^{-\delta d^2/200}.
    \label{eq:encoded-witness-walsh-bound}
  \end{equation}

  The last inequality holds uniformly for all sufficiently large
  $d,r$. With $c=\delta/200$, this is \eqref{eq:dense-witness-walsh-bound}, completing the proof
  of Theorem~\ref{thm:dense-equation-witness}.
\end{proof}

\subsection{Reduction from the dense equation to XOR singularity}

Use the public coefficients of Theorem~\ref{thm:dense-equation-witness} and
$Z=\diag(X,Y)$, $\alpha=(a,-\mathbf1_{2d})$,
$\beta=(b,\mathbf1_{2d})$ in~\eqref{eq:weighted-sum-matrix}.
The resulting matrix $T=T_{\alpha,\beta}(Z)$ has order $6d+1$ and
\[
  \det T=a^TXb-\sum_{i,j}Y_{ij},\qquad
  \corank T=\mathbf1[a^TXb=\sum_{i,j}Y_{ij}].
\]
Its only negative entries are public $-1$s. Write $T=U-V$ using
its positive and negative parts and form
\begin{equation}\label{eq:dense-equation-singularity-matrix}
  E=
  \begin{pmatrix}U&V\\I_{6d+1}&I_{6d+1}
  \end{pmatrix}.
\end{equation}
By~\eqref{eq:difference-block}, $E$ has the same determinant and
corank as $T$. It has order $12d+2$, and every entry fits in $9r$
bits, since the public coefficients are at most $2^{4r}<2^{9r}$.
Each private entry is one word from $X$ or $Y$. Given XOR shares of
these words, each party simply copies its share into the appropriate
positions; Alice supplies public constants and Bob supplies zeros.
Their output XOR is $E$. Every input satisfies the adjacent-rank promise.

\begin{proof}[Completion of Theorem~\ref{thm:xor-lower-bound}]
  For $G=\{0,1\}^{37d^2r}$ let $F$ and $w$ be as in
  Theorem~\ref{thm:dense-equation-witness}, and put
  $W_{u,v}=|G|^{-1}w(u\oplus v)$. Its mass, variation, and correlation
  with $F(u\oplus v)$ equal those of $w$. Walsh characters diagonalize
  XOR convolution, so
  \[
    \|W\|=|G|^{-1}\max_{\omega\in G}|\widehat w(\omega)|
    \le |G|^{-1}2^{-cd^2r}.
  \]
  Here the transform of $w$ sums weights against characters without
  normalization. Uniform reference probabilities give $T_W=|G|W$.
  For the acceptance matrix of a $q$-qubit protocol, Lemma~\ref{lem:quantum-witness-pairing}
  yields
  \[
    \frac18\le\frac{99}{200}-\frac{101}{300}
    \le\langle p,W\rangle\le2^q|G|\|W\|\le2^{q-cd^2r}.
  \]
  Thus $q\ge cd^2r-3$. Compose with~\eqref{eq:dense-equation-singularity-matrix} and take
  $d=\lfloor(n-2)/12\rfloor$, $r=\lfloor k/9\rfloor$.
  For $n,k$ above independent absolute thresholds these are
  $\Theta(n),\Theta(k)$; identity padding gives the claimed lower bound.
  For the remaining bounded set of bit lengths, the Boolean lower
  bound from Section~\ref{sec:boolean-xor-lower-bound} is
  $\Omega(n^2)=\Omega(n^2k)$ with a smaller absolute constant.
  Every reduction preserves corank, so the proof also applies to the
  promised problem.
\end{proof}

\subsection{Boolean entrywise-product singularity}\label{sec:boolean-product-singularity}

The Boolean baseline from Section~\ref{sec:boolean-xor-lower-bound} also determines
entrywise-product singularity for Boolean inputs, where multiplication
coincides with bitwise AND. Thus the following corollary does not rely on
Theorem~\ref{thm:xor-lower-bound}; the Boolean XOR bound suffices. The relevant reductions are deterministic, local, and
preserve corank with constant dimension growth. They show that the
corollary remains true with a natural border restriction.

\begin{corollary}[Boolean entrywise products]\label{cor:boolean-product-singularity}
  For $A,B\in\{0,1\}^{n\times n}$, deciding whether $A\bitand B$ is
  singular over $\mathbb Q$ has randomized and entangled quantum
  complexity $\Theta(n^2)$, where $\bitand$ is the entrywise product
  (or entrywise AND) symbol. The lower bounds hold even when the product
  has rank $n-1$ or $n$ and both inputs have all-one first rows and columns.
\end{corollary}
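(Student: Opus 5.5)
The upper bound is trivial: Alice sends all of $A$, which costs $n^2$ bits. For the lower bound I reduce Boolean XOR singularity, for which Section~\ref{sec:boolean-xor-lower-bound} gives $\Omega(n^2)$ entangled quantum communication with the adjacent-rank promise, to Boolean entrywise-product singularity. Since $Q^*_{1/3}\le R^{\rm pub}_{1/3}$, this bound covers both models.

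The reduction expresses XOR of bits as a local linear combination of AND's. For bits $x,y$ we have $x\oplus y=x+y-2xy$. Each term is an entrywise product of locally computable Boolean quantities: $x=x\bitand 1$, $y=1\bitand y$, and $xy=x\bitand y$. Given Boolean $X,Y$ of order $m$, Alice forms $\mathcal A$ and Bob forms $\mathcal B$ so that $\mathcal A\bitand\mathcal B$ contains the three $m\times m$ blocks $X$, $Y$, $X\bitand Y$. Alice places $X$, an all-one block and $X$ in three positions; Bob places an all-one block, $Y$ and $Y$ in the corresponding positions. These blocks must then be combined into a matrix whose rank is governed by $X+Y-2(X\bitand Y)$. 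The negative coefficient and the factor $2$ must be produced from Boolean entries via public Schur-complement gadgets. Here I would adapt \eqref{eq:four-block}: it combines four blocks into $U-V-W+Z$ at a rank cost of $2m$. To get the weight $-2$ on the product block, I would duplicate the product block twice, using the two-digit weight trick from the proof of Lemma~\ref{lem:radix-encoding}, or a five-block analogue of \eqref{eq:four-block}. The analogue's public identity blocks would come from $I\bitand I$, with Alice and Bob both writing $I$. Every public constant block $C$ is realized by both parties writing $C$, since $C\bitand C=C$. The result is a Boolean product matrix of order $O(m)$ with rank equal to a constant times $m$ plus $\rank(X\oplus Y)$, so corank is preserved. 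The hard XOR instances from Section~\ref{sec:boolean-xor-lower-bound} already have corank at most one, so the promise transfers.

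For the border restriction I would border the construction with an extra first row and column of ones in both inputs, so that the product also has an all-one border. Then I must repair the rank by a Schur complement. One option is to add a public pivot: put the border next to a public $1\times1$ block with cancelling entries. Alternatively, I would check that $\begin{pmatrix}1&\mathbf1^T\\\mathbf1&M\end{pmatrix}$ has rank $1+\rank(M-\mathbf1\mathbf1^T)$, and arrange beforehand that the encoded matrix is shifted by $\mathbf1\mathbf1^T$. Pre-shifting requires non-Boolean entries $M+\mathbf1\mathbf1^T$, however. So instead I would make the inner matrix contain a public identity corner absorbing the border: pad $M$ to $\diag(1,M)$ inside, and take the border to be the first row and column of that public block. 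Then the border only adds a fixed number of pivots via \eqref{eq:general-schur}.

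The main obstacle is realizing the coefficient $-2$ on the AND block with Boolean entries only while keeping the rank identity exact. I would first try applying \eqref{eq:four-block} with $U=X\bitand 1$, $V$ and $W$ both equal to $X\bitand Y$, and $Z=1\bitand Y$. This gives $U-V-W+Z=X+Y-2XY=X\oplus Y$ directly, so the duplication is automatic and no extra gadget is needed. After verifying this, the order is $3\cdot 2m+O(1)$ plus padding. Choosing $m=\Theta(n)$ and padding with a public identity, written as $I\bitand I$, completes the proof.
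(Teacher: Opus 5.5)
Your XOR-to-product reduction is correct, and it differs slightly from the paper's. Take $U=X$, $V=W=X\bitand Y$, $Z=Y$ in \eqref{eq:four-block}, with $J$ the all-one block. Then
\[
\begin{pmatrix}X&X\bitand Y&I\\X\bitand Y&Y&I\\I&I&0\end{pmatrix}
=\begin{pmatrix}X&X&I\\X&J&I\\I&I&0\end{pmatrix}\bitand
\begin{pmatrix}J&Y&I\\Y&Y&I\\I&I&0\end{pmatrix}
\]
has order $3m$ (not $6m$) and rank $2m+\rank(X\oplus Y)$, because $X+Y-2(X\bitand Y)=X\oplus Y$. The paper instead writes $X\oplus Y=X\bitand(J-Y)+(J-X)\bitand Y$ as a sum of two nonnegative Boolean products and uses \eqref{eq:four-block} with $V=W=0$. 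Your version is the paper's AND-to-XOR gadget \eqref{eq:and-to-xor-rank} run backwards, using the minus signs already built into the four-block identity. The two are equally short, and your detours through the radix trick or a five-block gadget are unnecessary. The upper bound, padding, and transfer of the adjacent-rank promise are fine.

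The gap is the all-one border. The construction you finally commit to does not give a fixed number of extra pivots. One reading overwrites the first row and column of $\diag(1,M)$ with ones; eliminating the corner then leaves the Schur complement $M-\mathbf1\mathbf1^{\mathsf T}$, the obstruction you had just noted. The other reading borders $\diag(1,M)$ by ones; eliminating the corner then leaves $\begin{pmatrix}0&\mathbf1^{\mathsf T}\\\mathbf1&M\end{pmatrix}$. For invertible $M$ this has rank $\rank M$ or $\rank M+1$ according as $\mathbf1^{\mathsf T}M^{-1}\mathbf1$ vanishes or not, so it is not determined by $\rank M$. The border row and column run through the private block, so they must themselves be cancelled. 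The missing idea is one extra public row and one extra public column, each with a single $1$ aligned with the border, as in the paper's gadget \eqref{eq:boolean-border}:
\[
N(D)=\begin{pmatrix}1&\mathbf1^{\mathsf T}&1\\\mathbf1&D&0\\1&0^{\mathsf T}&0\end{pmatrix}.
\]
The last row clears the first column, and then the last column clears the first row, so $\rank N(D)=2+\rank D$. All entries outside $D$ are public $0/1$ constants, so $N(A)\bitand N(B)=N(A\bitand B)$, and both inputs have all-one first rows and columns. Your first, unspecified option (a public $1\times1$ block with cancelling entries) could be read as this construction. As written, however, the proposal does not establish the border part of the statement.
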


For Boolean matrices $X,Y$ of the same shape, let $J$ denote the all-one
matrix of that shape. The identity
$X\oplus Y=(X\bitand(J-Y))+((J-X)\bitand Y)$ separates XOR into
two Boolean products. A block matrix computes their sum without
increasing their alphabet. A second block construction attaches the
all-one border while adding two pivots and preserving corank. Thus
this result needs no separate hard distribution for product inputs.

\begin{proof}[Proof of Corollary~\ref{cor:boolean-product-singularity}]
  Given Boolean $m\times m$ XOR inputs $X,Y$, put
  $P=X\bitand(J-Y)$ and $Q=(J-X)\bitand Y$. Then $P+Q=X\oplus Y$.
  By~\eqref{eq:four-block},
  \[
    M=
    \begin{pmatrix}P&0&I\\0&Q&I\\I&I&0
    \end{pmatrix}
    \quad\text{has rank}\quad 2m+\rank(X\oplus Y).
  \]
  It satisfies $M=A_X\bitand B_Y$ for these locally constructed Boolean matrices:
  \begin{equation}\label{eq:boolean-product-shares}
    A_X=
    \begin{pmatrix}X&0&I\\0&J-X&I\\I&I&0
    \end{pmatrix},\qquad
    B_Y=
    \begin{pmatrix}J-Y&J&J\\J&Y&J\\J&J&J
    \end{pmatrix}.
  \end{equation}
  For a Boolean matrix $D$ of order $q$, define
  \begin{equation}\label{eq:boolean-border}
    N(D)=
    \begin{pmatrix}
      1&\mathbf1_q^{\mathsf T}&1\\\mathbf1_q&D&0\\1&0^{\mathsf T}&0
    \end{pmatrix}.
  \end{equation}
  Eliminating the two pivots in its last row and column gives
  $\rank N(D)=2+\rank D$ and $\det N(D)=-\det D$.
  Moreover, $N(A)\bitand N(B)=N(A\bitand B)$, and both inputs have
  all-one first rows and columns. Take $m=\lfloor(n-2)/3\rfloor$,
  pad both factors in~\eqref{eq:boolean-product-shares} with the same
  identity block to order $n-2$, and apply $N$. Corank is preserved
  throughout. The adjacent-rank lower bound from
  Section~\ref{sec:boolean-xor-lower-bound}, as a consequence of
  Zhang--Shi~\cite[Theorem~2]{zhang-shi-xor}, gives
  $\Omega(m^2)=\Omega(n^2)$.
  Sending Alice's $n^2$ bits proves the upper bound.
\end{proof}

Together,~\eqref{eq:and-to-xor-rank} and~\eqref{eq:boolean-product-shares}
show that Boolean XOR and Boolean entrywise-product singularity
are equivalent under local corank-preserving reductions with
constant dimension growth.

\section*{AI disclosure}
The author used OpenAI Codex and Claude Code to assist with developing, checking, and
simplifying proof arguments; researching related work and verifying
citations; and drafting and revising the manuscript.
The author takes full responsibility for the correctness
of all content, including the proofs and references.

\clearpage
\bibliographystyle{alphaurl}
\bibliography{integer-matrix-singularity}
\end{document}